\documentclass[12pt,draftclsnofoot,onecolumn]{IEEEtran}

\usepackage{cite} 
\usepackage{xspace}
\usepackage{filecontents}

\usepackage{float}
\usepackage{pgf, tikz}

%
\usepackage{graphicx}

\usepackage[cmex10]{amsmath}

\usepackage{bm}

\usepackage{amssymb} 

\usepackage{array}

\usepackage{subfigure}
\usepackage{caption2} 

\usepackage{amsfonts}
\usepackage{amsthm}   

\usepackage{enumerate} 

\usepackage{extarrows}


\newtheorem{definition}{Definition} 
\newtheorem{theorem}{Theorem} 
\newtheorem{lemma}{Lemma}

\newtheorem{corollary}{Corollary} 
\newtheorem{remark}{Remark}
\usetikzlibrary{patterns,snakes}

\usepackage{enumerate}

\usepackage{extarrows}




\newcommand{\Esen}{\mathcal{E}_{\mathrm{SB}}}
\newcommand{\Psen}{\mathcal{P}_{\mathrm{SB}}}
\newcommand{\Etx}{\mathcal{E}_{\mathrm{TB}}}
\newcommand{\Ptx}{\mathcal{P}_{\mathrm{TB}}}

\newcommand{\Psuc}{P_{\mathrm{suc}}}

\newcommand{\Pout}{P_{\mathrm{out}}}
\newcommand{\Poutk}[1]{\left(P_{\mathrm{out}}\right)^{{#1}}}

\newcommand{\Pois}[2]{\mathrm{Pois}\left(#1,#2\right)}

\newcommand{\teventsen}[1]{t_{\mathrm{SB},#1}}

\newcommand{\teventsuc}[1]{t_{\mathrm{STB},#1}}

\newcommand{\myexpect}[1]{\mathbb{E}\left\lbrace #1 \right\rbrace}

\newcommand{\myprobability}[1]{\mathrm{Pr}\left\lbrace #1 \right\rbrace}

\newcommand{\Tstd}{T_{\mathrm{UA}}}
\newcommand{\aveTstd}{\bar{T}_{\mathrm{UA}}}

\newcommand{\Tuc}{T_{\mathrm{UC}}}
\newcommand{\aveTuc}{\bar{T}_{\mathrm{UC}}}

\newcommand{\UC}{\mathrm{UC}}


\newcommand{\eventsuc}{\Lambda_{\mathrm{suc}}}

\newcommand{\eventfail}{\Lambda_{\mathrm{fail}}}

\newcommand{\Prf}{\mathcal{P}_{\mathrm{tx}}}



\newcommand{\AuthorOne}{Wanchun Liu}
\newcommand{\AuthorTwo}{Xiangyun Zhou}
\newcommand{\AuthorThree}{Salman Durrani}
\newcommand{\AuthorFour}{Hani Mehrpouyan}
\newcommand{\AuthorFive}{Steven D. Blostein}
\newcommand{\ThankOne}{Wanchun Liu, Xiangyun Zhou, Salman Durrani are with the Research School of Engineering, the
	Australian National University, Canberra, ACT 2601, Australia
	(emails: \{wanchun.liu, xiangyun.zhou, salman.durrani\}@anu.edu.au).
	Hani Mehrpouyan is with the Department of Electrical and Computer Engineering, Boise State University, Idaho, USA (email: hani.mehr@ieee.org).
	Steven D. Blostein is with the Department of Electronic and Computer Engineering, Queen's University, Canada (email: steven.blostein@queensu.ca).
	Part of the results of this paper was presented at Globecom'15.
}

\begin{document}
%
\title{Energy Harvesting Wireless Sensor Networks: Delay Analysis Considering Energy Costs of Sensing and Transmission}
\author{\authorblockN{\AuthorOne,~\AuthorTwo,~\AuthorThree,~\AuthorFour, \\and~\AuthorFive\thanks{\ThankOne}}}
\maketitle

\maketitle
\vspace{-1cm}
\begin{abstract}
\vspace{-0.2cm}	

Energy harvesting (EH) provides a means of greatly enhancing the lifetime of wireless sensor nodes. However, the randomness inherent in the EH process may cause significant delay for performing sensing operation and transmitting the sensed information to the sink. Unlike most existing studies on the delay performance of EH sensor networks, where only the energy consumption of transmission is considered, we consider the energy costs of both sensing and transmission. Specifically, we consider an EH sensor that monitors some {status} property and adopts a harvest-then-use protocol to perform sensing and transmission. To comprehensively study the delay performance, we consider two {complementary} metrics and analytically derive their statistics: 
{(i) update age - measuring the time taken from when information is obtained by the sensor to when the sensed information is successfully transmitted to the sink, i.e., how timely the updated information at the sink is, and 
(ii) update cycle - measuring the time duration between two consecutive successful transmissions, i.e., how frequently the information at the sink is updated.}
Our results show that the consideration of sensing energy cost leads to an important tradeoff between the two metrics: more frequent updates result in less timely information available at the sink.

\end{abstract}
\vspace{-0.5cm}
\begin{IEEEkeywords}
	\vspace{-0.3cm}
Energy harvesting, wirelessly powered communications, delay analysis, energy costs of sensing and transmission.
\vspace{-0.2cm}
\end{IEEEkeywords}

\newpage
\section{Introduction}
\textbf{Background:} Energy harvesting (EH) from energy sources in the ambient environment is an attractive solution to power wireless sensor networks (WSNs). The feasibility of powering WSNs by EH from solar, wind, vibration and radio-frequency (RF) signals has been demonstrated in the literature~\cite{EH_Survey,EH_survey_singapore,WPT_survey,kaibin_zhou,krikidis_survey}. If an EH source is periodically or continuously available, a sensor node can in theory be powered perpetually. However, the design of EH WSNs raises several interesting and challenging issues.

\textbf{Design Challenges:} An important design consideration for EH WSNs is the modeling of energy costs. There are three main energy costs in wireless sensors~\cite{Industrial}: (i) energy cost of RF transmission and reception, including idle listening, (ii) energy cost of information sensing and processing, and (iii) energy cost of other basic processing while being active. Generally, the energy cost of other basic processing is much smaller compared to the energy cost of transmission~\cite{sensor_app}. Hence, the majority of the current work on EH WSNs has considered only the energy cost of transmission, while ignoring the energy cost of sensing~\cite{Niyato,Jing_lei}. For some sensors, such as high-rate and high-resolution acoustic and seismic sensors, the energy cost of sensing can actually be higher than the energy cost of transmission, e.g., see~\cite{SensingPower} and references there in. Hence, it is important to accurately model the energy cost of sensing in WSNs~\cite{Mao}.

For WSNs powered by EH from the ambient environment, the energy arrival process is inherently time-varying in nature. These fluctuations in the energy arrival process can be slow or fast and are characterised by its coherence time~\cite{Yates}. For instance, for the case of EH from a solar panel on a clear day with abundant sunshine, the coherence time is on the order of minutes or hours. For the case of wireless energy transfer via RF signals, the coherence time can be on the order of milliseconds, which is comparable to the duration of a communication time slot. The energy arrival process in the latter case can be modeled as a random process where the amount of harvested energy in each time slot follows some probability distribution. For example, papers studying EH from RF signals often assume an exponential distribution~\cite{Tianqing,Ali,Ding}. Another example, using the gamma distribution, can be found in~\cite{Luo}. However, many energy arrival processes in practice cannot be accurately modeled by using exponential or gamma distributions. The consideration of a more general probability distribution for modeling the amount of energy arrival is still largely an open problem.

{In many sensor network applications, the delay performance is a key design challenge.
The effects of randomness in both arrivals of the multiple data packets and harvested energy on the overall transmission completion time were considered in~\cite{Yangjing_delay}. 
In~\cite{Tianqing}, a single data packet and randomness in the energy arrival process and wireless channel, were considered in the analysis of transmission delay, i.e., the time duration between the generation of a packet and its successful reception at the sink.
However, both~\cite{Yangjing_delay} and~\cite{Tianqing} only considered the energy cost of transmission.
To the best of our knowledge, a comprehensive analysis of the delay performance of EH WSNs taking into account a realistic model of sensor energy costs, has not been investigated in the literature.}\\

\textbf{Paper Contributions:} 
{We consider a status \emph{monitoring} scenario, e.g., monitoring some property of a target environment,  with one sensor-sink pair.
The sensor is solely powered by EH from an ambient energy source.
The sensor periodically monitors and senses the current environment, i.e., it generates current status information about one or more variables of interest, and then transmits a status-information-containing packet to the sink\footnote{{Due to the fluctuation in the energy arrival process, strictly periodic sensing and transmission is not possible.
	In this paper, `periodic' is used to indicate that the sensor alternates between sensing and transmission(s) in order to keep status updating at the sink.}}.
Once the packets are successfully transmitted to the sink, which may occur after several failed retransmissions due to fading in the transmission channel, the status under monitoring is \emph{updated} at the sink.

We adopt two different metrics to assess the delay performance:
(i) update age\footnote{{The term update age is inspired by~\cite{R1} and indicates the age or timeliness of the transmitted information, since an outdated message may lose its value in a communication system when the receiver has interest in fresh information \cite{Newage}. Note that} this notion of the delay is in fact the same as the transmission delay in \cite{Tianqing}.} which measures the time duration between the time of generation of the current status information at the sink and the time at which it is updated at the sink, and 
(ii) update cycle which measures the time duration between one status update at the sink to the next.
The update age (or freshness) and update cycle (or frequency) are complementary measures.
For instance, a smaller update age means the updated status information at the sink is much more timely,
but does not indicate when the next update status information will be received.
A smaller update cycle means more frequent status updates at the sink,
but does not indicate when the current updated status information was originally generated or how old it is.
Thus, the quality of a status monitoring system, i.e., the status update freshness and frequency, is comprehensively captured by the update age and update cycle, respectively.}

We account for the fact that sensing and transmission operations both consume energy. Inspired from the harvest-then-use and save-then-transmit communication protocols for EH nodes in wireless networks~\cite{Luo,Ali,Tianqing}, which are simple to implement in practice, we consider a harvest-then-use protocol for the EH sensor. In our proposed protocol, the sensor performs sensing and transmission as soon as it has harvested sufficient energy. In order to {limit} the delay due to retransmissions, we impose a time window for retransmissions. The delay performance of the considered harvest-then-use protocol is analyzed. The main contributions of this paper are as follows:
\begin{itemize}
    \item
    We provide a comprehensive study on the delay performance of EH sensor networks. Apart from the commonly considered delay due to the information transmission from the sensor to the sink, defined as the update age, we also characterize the frequency of updating the information held by the sink, defined as the update cycle.

    \item
    Considering a Rayleigh fading wireless channel, we analytically derive the statistics of both the update cycle and the update age. We consider both a deterministic energy arrival model and a random energy arrival model with a general distribution, so that our results can be applied to model a wide range of EH processes.

	\item
    We take the energy costs of both sensing and transmission into account when studying the delay performance. Such a consideration brings up an interesting question of whether to increase or reduce the number of allowed retransmission attempts for each sensed information, because both sensing and transmission consume energy. This in turn results in a tradeoff between the update cycle and the update age.  The tradeoff emphasizes the importance of modeling the energy cost of sensing.
\end{itemize}


\textbf{Notations:} $\myexpect{\cdot}$ and $\myprobability{\cdot}$ are expectation and probability operators, respectively. Convolution operators for continuous and discrete functions are denoted as $\star$ and $\ast$, respectively. $\lceil \cdot \rceil$ and $\lfloor \cdot \rfloor$ are ceiling and floor operators, respectively. $\sum_{i=m}^{n}$ is the summation operator, and if $m>n$, the result is zero. $\Pois{i}{\lambda}$ is the probability mass function (pmf) of a Poisson distribution with parameter $\lambda$. 
%
%
\section{System Model}
We consider the transmission scenario where a sensor periodically transmits its sensed information to a sink, as illustrated in Fig.~\ref{fig:0}. The sensor is an EH node which harvests energy from the ambient environment such as solar, wind, vibration or RF signals. The sensor has two main functions, i.e., sensing and transmission, each having individual energy cost. We assume half-duplex operation, i.e., sensing and transmission cannot occur at the same time. In order to perform either sensing or transmission, the sensor first needs to spend a certain amount of time on EH. The harvested energy is stored in a battery. We assume that the battery cannot charge and discharge at the same time~\cite{Luo}. In addition, the battery has sufficient charge capacity such that the amount of energy stored in the battery never reaches its maximum capacity. This assumption is reasonable since battery capacity typically ranges from joules to thousands of joules~\cite{EH_Survey}, while the energy level in the battery in our system is only in the $\mu$J range as shown in Section V.

Following the state-of-the-art EH sensor design practice~\cite{Tan}, we adopt a time-slotted or block-wise operation. We assume that one sensing operation or one transmission is performed in one time block of duration $T$ seconds.\footnote{In general a sensor may spend different amounts of time on one sensing operation~\cite{SensingPower}. Thus, the assumed protocol and analysis can be generalized to different sensing time durations other than $T$, which is outside the scope of this work.}
At the beginning of each block, we assume that the sensor checks the battery energy state and makes a decision to perform either sensing, transmission, or energy harvesting.
Thus, we define the following types of time blocks with the associated amount of energy cost/harvesting:
\begin{enumerate} [$\bullet$]
	\item
	Sensing Block (SB): the sensor samples the status information and then processes and packs sensed information into a data packet. The energy cost in a SB is denoted by $\Esen$.
	
	\item
	Transmission Block (TB): the sensor transmits the newest generated data packet (from the last sensing operation) to the sink with energy cost $\Etx$, i.e., the transmit power is $\Ptx = \Etx/T$.	
	{Then the sink sends a one-bit feedback signal to the sensor to indicate successful packet reception.  We assume that the time consumed for receiving the feedback signal at the sensor is negligible as compared to its packet transmission time.}
	If the transmission is successful, we have a successful transmission block (STB); otherwise, we have a failed transmission block (FTB). We assume that successes/failures of each TB are mutually independent~\cite{Tianqing,Ali}. \textit{The probability of a TB being a FTB, i.e., transmission outage, is denoted by $\Pout$}.
	
	\item
	Energy-harvesting block (EHB): the sensor harvests energy from the ambient environment and stores the energy in its battery.
\end{enumerate}	
\begin{figure}[t]
		
		\renewcommand{\captionlabeldelim}{ }
		\renewcommand{\captionfont}{\small} \renewcommand{\captionlabelfont}{\small}
		\centering
		\includegraphics[scale=0.8]{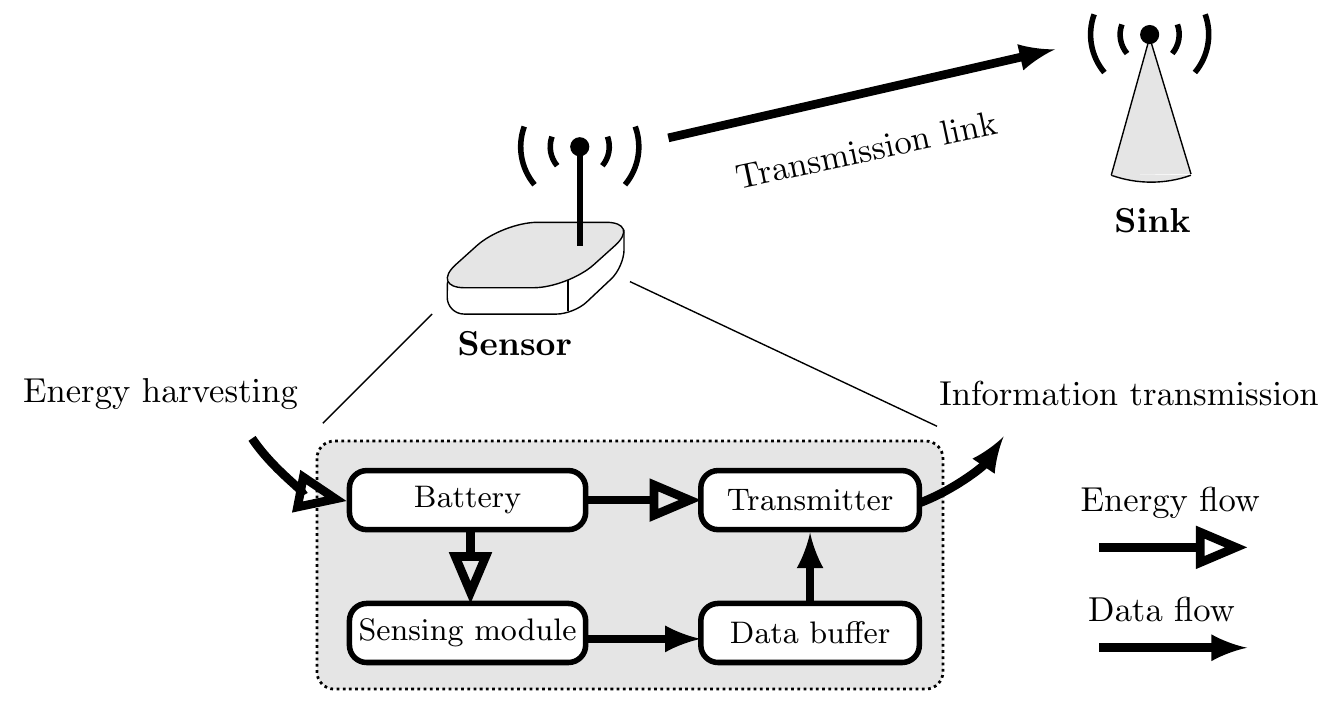}
		\vspace{-0.4cm}		
		\caption{Illustration of system model and sensor components.}
		\label{fig:0}	
		\vspace{-0.9cm}	
	\end{figure}

\subsection{Proposed Sensing and Transmission Protocol}	
Since the time-varying EH process results in randomness in the delay for performing sensing and transmission, we propose a harvest-then-use protocol with a time window for retransmissions in order to improve the delay-related performance.

The protocol is motivated as follows. \textit{Firstly}, considering the energy cost of sensing, it is necessary to harvest sufficient energy, $\Esen$, before sensing can occur. However, it is unwise to perform sensing as soon as the harvested energy reaches $\Esen$ because there will be insufficient energy left for transmission after the sensing operation. The time spent on EH due to insufficient energy for transmitting the sensed information will result in unnecessary delay. To avoid such delay, we define the condition for the sensing operation to be when the harvested energy in the battery exceeds $\Esen+\Etx$. In this way, a transmission of sensed information occurs immediately after the sensing operation (i.e., a SB is always followed by a TB). \textit{Secondly}, in the event that the transmission is not successful due to the fading channel between the sensor and sink, we need to allow for retransmissions, which are a common feature in conventional (non-EH) WSNs~\cite{retransmission_survey}. In this paper, we impose a time window for retransmissions to control the delay caused by unsuccessful transmissions because it is unwise to spend an indefinite amount of time trying to transmit outdated information. We denote $W$ as the maximum number of time blocks after a SB, within which transmissions of the currently sensed information can take place. Since the first transmission attempt always happens immediately after the SB, the time window for retransmissions is $W-1$ time blocks.
	
Under the proposed protocol, the sensor operates as follows:
\begin{enumerate}
\item 
First, the sensor uses several EHBs to harvest enough energy, $\Esen + \Etx$, and then a SB and a TB occur.

\item
If the transmission in the TB is successful, i.e., we have a STB, the sensor harvests energy (taking several EHBs) for the next sensing period {until} the battery energy exceeds $\Esen + \Etx$.

\item
If the transmission in the TB fails, i.e., we have a FTB, the sensor goes back to {harvest} energy (taking several EHBs) and performs a retransmission when the battery energy exceeds $\Etx$.

\item
Retransmission may occur several times until the sensed information is successfully transmitted to the sink or the time window for retransmissions $W-1$ is reached. Then, the data packet at the sensor is dropped and the sensor goes back to harvest the energy for a new sensing operation.
\end{enumerate}

	\begin{figure*}[!t]		
		\renewcommand{\captionlabeldelim}{ }
		\renewcommand{\captionfont}{\small} \renewcommand{\captionlabelfont}{\small}
		\centering
		\usetikzlibrary{arrows}
		\usetikzlibrary{patterns}
		\usetikzlibrary{shapes,snakes}
		\begin{tikzpicture}[scale = 0.82]

		\draw [-latex,  dash pattern=on 2pt off 3pt on 4pt off 4pt, ultra thick,gray](-5.5,0.2) -- (-5.5,1);
		\draw [-latex,  dash pattern=on 2pt off 3pt on 4pt off 4pt, ultra thick,gray](-4,0.2) -- (-4,1);
		\draw [-latex,  dash pattern=on 2pt off 3pt on 4pt off 4pt, ultra thick,gray](0.45,0.2) -- (0.45,1);
		\draw [-latex,  ultra thick,blue](-3,0.2) -- (-3,1);
		\draw [-latex,  dash pattern=on 2pt off 3pt on 4pt off 4pt, ultra thick,gray](-0.55,0.2) -- (-0.55,1);
		\draw [-latex,  -latex,  ultra thick,blue](4.45,0.2) -- (4.45,1);
		\draw [-latex,  dash pattern=on 2pt off 3pt on 4pt off 4pt, ultra thick,gray](7.45,0.2) -- (7.45,1);

		\draw [-latex,dashed,thick] (-6.7,0) -- (10.4,0);
		
		\node [rectangle,scale=1.3,fill=green!40] (v1) at (-6,0) {};
		\node [rectangle,scale=1.3,fill=none,draw,green,ultra thick] (v1) at (-5.5,0) {};
		\node [rectangle,scale=1.3,fill=none,draw,red,ultra thick] at (-5.0,0) {};
		\node [rectangle,scale=1.3,fill=none,draw,red,ultra thick] at (-4.5,0) {};
		\node [rectangle,scale=1.3,fill=none,draw,green,ultra thick] at (-4,0) {};
		\node [rectangle,scale=1.3,fill=none,draw,red,ultra thick] at (-3.5,0) {};
		\node [rectangle,scale=1.3,fill=none,draw,red,ultra thick] at (-2.5,0) {};
		\node [rectangle,scale=1.3,fill=none,draw,red,ultra thick] at (-2,0) {};
		\node [rectangle,scale=1.3,fill=green!40] (v3) at (-1.03,0) {};

		\node [rectangle,scale=1.3,fill=none,draw,green,ultra thick] at (0.45,0) {};
		\node [rectangle,scale=1.3,fill=none,draw,red,ultra thick] at (-1.5,0) {};
		\node [rectangle,scale=1.3,draw,green,ultra thick,fill=green!40] (v2) at (-3,0) {};
		\node [rectangle,scale=1.3,fill=none,draw,red,ultra thick] at (-0.05,0) {};
		\node [rectangle,scale=1.3,fill=none,draw,red,ultra thick] at (1.45,0) {};
		\node [rectangle,scale=1.3,fill=none,draw,red,ultra thick] at (2.45,0) {};
		\node [rectangle,scale=1.3,fill=none,draw,red,ultra thick] at (0.95,0) {};
		
		\node [rectangle,scale=1.3,fill=none,draw,green,ultra thick] (v3) at (-0.55,0) {};
		\node [rectangle,scale=1.3,fill=green!40] (v5) at (3.95,0) {};
		\node [rectangle,scale=1.3,draw,green,ultra thick,fill=green!40] (v5) at (4.45,0) {};
		\node [rectangle,scale=1.3,fill=none,draw,red,ultra thick] (v4) at (2.95,0) {};
		\node [rectangle,scale=1.3,fill=none,draw,red,ultra thick] (v6) at (4.95,0) {};
		\node [rectangle,scale=1.3,fill=none,draw,red,ultra thick] at (5.45,0) {};
		\node [rectangle,scale=1.3,fill=none,draw,red,ultra thick] at (5.95,0) {};
		\node [rectangle,scale=1.3,fill=none,draw,red,ultra thick] at (3.45,0) {};
		\node [rectangle,scale=1.3,fill=none,draw,red,ultra thick] at (6.45,0) {};
		
		\node [rectangle,scale=1.3,fill=green!40] at (6.95,0) {};
		\node [rectangle,scale=1.3,fill=none,draw,green,ultra thick] (v7) at (7.45,0) {};
		\node [rectangle,scale=1.3,fill=none,draw,red,ultra thick] at (7.95,0) {};

		\node [rectangle,scale=1.3,fill=none,draw,red,ultra thick] at (11.35,1.1) {};
		\node [rectangle,scale=1.3,fill=green!40] at (11.35,2.1) {};
		\node [rectangle,scale=1.3,draw,green,ultra thick,fill=green!40] at (11.35,-0.5) {};
		\node [rectangle,scale=1.3,fill=none,draw,green,ultra thick] at (11.35,-2) {};
		
		\draw [-latex,  dash pattern=on 2pt off 3pt on 4pt off 4pt, ultra thick,gray](11.35,-1.75) -- (11.35,-1.05);
		\draw [-latex,  -latex,  ultra thick,blue](11.35,-0.3) -- (11.35,0.45);

		\node [right] at (11.85,1.1) {EHB};
		\node [right] at (11.8,2.1) {SB};
		\node [right] at (11.85,-0.55) {STB};
		\node [right] at (11.85,-2) {FTB};

		\node at (-5.95,-1) {$t_{\textrm{SB},j}$};
		
		\node at (-3.15,-1) {$t_{\textrm{STB},j}$};

		\node at (4.1,-1.05) {$t_{\textrm{SB},j+1}$};
		\node at (5.7,-1.05) {$t_{\textrm{STB},j+1}$};
		
		\node at (7.15,-1) {$t_{\textrm{SB},j+2}$};
		\node at (8.85,-1) {$t_{\textrm{STB},j+2}$};

		\node at (10.3,-0.6) {t};

		\draw [decoration={brace,  raise=5,amplitude=10},decorate] (-6.9,1.1) -- (-3,1.1) node [midway,yshift=25] {};
		\draw [decoration={brace,  raise=5,amplitude=10},decorate] (-3,1.1) -- (4.45,1.1) node [midway,yshift=25] {};
		\draw [decoration={brace,  raise=5,amplitude=10},decorate] (4.45,1.1) -- (8.45,1.1) node [midway,yshift=25] {};
		\draw [decoration={brace,  raise=5,amplitude=10},decorate] (8.45,1.1) -- (10.25,1.1) node [midway,yshift=25] {};
		
		\draw [ white,opacity=1,dash pattern= on 5pt off 4pt, ultra thick, line width=5pt] (-6.7,1.45) -- (-4.55,1.45) node [midway,yshift=25] {};
		
		\draw [white,opacity=1,dash pattern= on 3pt off 4pt, ultra thick, line width=5pt] (9.4,1.45) -- (10.3,1.45);
		
		\node at (-6.5,0.5) {...};
		\node at (9.2,0.5) {...};
		
		\draw [-latex,  dash pattern=on 2pt off 3pt on 4pt off 4pt, ultra thick,gray](1.95,0.2) -- (1.95,1);
		\node [rectangle,scale=1.3,fill=none,draw,green,ultra thick] at (1.95,0) {};

		\draw [-latex,  ultra thick,blue](8.5,0.2) -- (8.5,1);
		\node [rectangle,scale=1.3,draw,green,ultra thick,,fill=green!40] at (8.5,0) {};

		\draw (-6,-0.25) -- (-6,-0.6);
		
		\draw (-3,-0.25) -- (-3,-0.6);

		\draw (3.95,-0.25) -- (3.95,-0.65);
		\draw (4.45,-0.25) -- (5.35,-0.5)  -- (5.35,-0.65);
		\draw (6.95,-0.25) -- (6.95,-0.6);
		\draw (8.5,-0.25) -- (8.5,-0.6);

		\node [circle,scale=0.2,fill=black,thick] at (-6,-0.6) {};
		
		\node [circle,scale=0.2,fill=black,thick] at (-3,-0.6) {};

		\node [circle,scale=0.2,fill=black,thick] at (3.95,-0.65) {};
		\node [circle,scale=0.2,fill=black,thick] at (5.35,-0.65) {}; 
		\node [circle,scale=0.2,fill=black,thick] at (6.95,-0.6) {};
		\node [circle,scale=0.2,fill=black,thick] at (8.5,-0.6) {};

		\draw [thick](1.7,2.45) node (v8) {} -- (-4.4,1.8);
		\draw [thick](v8) -- (6.35,1.74);

		\node [font=\large] at (-5.55,2) {...};
		\node [font=\large] at (9.7,2) {...};
		\draw [thick](v8) -- (0.75,1.8);

		\node at (1.55,2.8) {Update cycles (see Section III.B)}; %

		\draw [thick] (2.05,2.45) -- (9.3,1.7);
		
		\draw [decoration={brace,  raise=1,amplitude=10,mirror},decorate] (-6,-1.3) -- (-3,-1.3) node [midway,yshift=25] {};
		\draw [decoration={brace,  raise=1,amplitude=10,mirror},decorate] (3.95,-1.3) -- (5.45,-1.3) node [midway,yshift=25] {};
		\draw [decoration={brace,  raise=1,amplitude=10,mirror},decorate] (6.95,-1.3) -- (8.45,-1.3) node [midway,yshift=25] {};
		

		\draw [thick] (-4.45,-1.85) -- (1.4,-2.5) node (v9) {};
		\draw [thick] (v9) -- (4.65,-1.8);
		\draw [thick] (v9) -- (7.7,-1.8);

		\node at (2.2,-2.9) {Update ages (see Section III.A)}; %
		\draw [dotted,thick](10.9,2.5) -- (10.9,-2.4);
		\end{tikzpicture}
		\vspace{-0.9cm}
		\caption{Illustration of update cycle and update age.}
		\label{fig:1_1}
		\vspace*{-0.5cm}
	\end{figure*}	\par
Fig. \ref{fig:1_1} illustrates this protocol with $W=7$. In the example shown, the first block in Fig. \ref{fig:1_1} is a SB, followed by two FTBs (and two EHBs in between). Since the third TB is a STB, the sensed information in the first SB is successfully transmitted to the sink. Then, the sensor uses three EHBs to harvest energy to conduct sensing in the next SB. After the second SB, there are three TBs during $7$ time blocks, and all of them are FTBs. Thus, the retransmission process is terminated after $W=7$ is reached. As a result, the sensed information in the second SB is not transmitted to the sink.
The time indices shown in Fig. \ref{fig:1_1} will be defined in the following section.

\subsection{Proposed Models for Energy Arrival}\label{Sec:energyarrival}
In this paper, we consider that the harvested energy in each EHB could either \emph{remain constant} or \emph{change} from block to block. The former is referred to as deterministic energy arrival, while the latter is referred to as random energy arrival.

Deterministic energy arrival is an appropriate model when the coherence time of the EH process is much larger than the duration of the entire communication session, such as EH by solar panel on clear days~\cite{Yates,eu2011design,chuan_huang_relay}. In this paper, we denote this as \emph{deterministic energy arrival process}. For tractability, we also assume that $\Esen$ and $\Etx$ represent integer multiples of the harvested energy by one EHB, $\rho$.

For random energy arrivals, we consider independent and identically distributed (i.i.d.) random energy arrival model\footnote{The i.i.d. energy arrival model is commonly considered in the literature~\cite{Luo,Morsi,Yishun}. There are other energy arrival models captures the temporal correlation of the energy arrival process, such as discrete-Markovian modeling~\cite{Niyato,Jing_lei,Ho_markov}, which are beyond the scope of this work.} with a general probability distribution function for the amount of energy harvested in each EHB. This energy arrival model is referred to as \emph{general random energy arrival process}. The previously considered exponential and gamma distributions in~\cite{Tianqing,Ali,Luo} become as special cases of the general probability distribution in this work. Since the exponential distribution is commonly studied for wireless power transfer using RF signals, we will also provide results for this important special case and referred to it as \emph{exponential energy arrival process}.

\section{Delay-Related Metrics}
As described in the previous section, both sensing and (re)transmission	requires a variable amount of EH time, which may result in significant delays in obtaining the sensed information at the sink. In this section, we consider two metrics to measure the delay performance of the considered sensing and transmission protocol.

For the convenience of describing the two metrics, as shown in Fig. \ref{fig:1_1}, we use $\teventsuc{j}$ to denote the block index for the $j$th STB during the entire sensing and transmission operation. Note that a successful transmission also induces an information update at the sink. Also, it is important to associate each transmission with its information content. To this end, we use $\teventsen{j}$ to denote the block index for the SB in which the sensed information is transmitted in the $j$th STB. In other words, status information sensed at $\teventsen{j}$ is successfully transmitted to the sink at $\teventsuc{j}$. Next, we define two delay-related metrics, expressed in terms of the number of time blocks:

\subsection{Update Age and Update Cycle}
{
\begin{definition}[Update age]
	\textnormal{For the $j$th STB, the update age is given by the number of time blocks from $\teventsen{j}$ to $\teventsuc{j}$ (shown in Fig.~\ref{fig:1_1}). The $j$th update age is}
	\begin{equation} \label{eq:def_Tstd}
	{\Tstd}_{,j}
	= \teventsuc{j}-\teventsen{j},\ j = 1,2,3,....
	\end{equation}
\end{definition}

\begin{remark}
	\textnormal{
The update age measures the time elapsed from the generation of a status-information-containing packet at the sensor to the reception of the packet, i.e., status update, at the sink. This metric is referred to as the status update age in~\cite{R1}.
A larger update age implies that a more outdated status is received by the sink.
The update age, which captures the freshness of the updated status information, however, does not reflect the update frequency at the sink.
Rather, the update frequency is captured by the update cycle which is presented below:}
\end{remark}

\begin{definition}[Update cycle]
\textnormal{For the $j$th STB, the update cycle is given by the number of time blocks from $\teventsuc{j-1}$ to $\teventsuc{j}$ (shown in Fig.~\ref{fig:1_1}). The $j$th update cycle is}		
\begin{equation} \label{eq:def_Tuc}
\hspace*{0.35cm}{\Tuc}_{,j}
= \teventsuc{j+1}-\teventsuc{j},\ j = 1,2,3,....
\end{equation}
\end{definition}

\begin{remark}	
	\textnormal{
				The update cycle measures the time elapsed from one status update at the sink to the next.
				The update cycle, however, does not reflect the update freshness at the sink.
				Unlike the update age, the update cycle takes into account the delay due to dropped data packets.
				Therefore, update cycle complements update age, and they jointly capture the update frequency and freshness, 
				to provide comprehensive metrics on the delay performance of a status monitoring system.
	}
\end{remark}}

\subsection{Modeling Delay-Related Metrics as i.i.d. Random Variables}

To model each of the update age/update cycle as i.i.d. random variables, we focus on the steady-state behavior as characterized in Lemma 1.
%
\begin{lemma} \label{L1} 
	\textnormal{For a deterministic energy arrival process, the energy level after each TB is zero.
		For a general random energy arrival process with pdf {containing at least one positive right-continuous point}, $f(\epsilon)$, the steady-state distribution of the energy level after each TB has pdf}
		\begin{equation}
		g\left(\epsilon\right) = \frac{1}{\rho} \left(1- F(\epsilon)\right),
		\end{equation}
	\textnormal{where $\rho$ is the average harvested energy, and $F(\epsilon)$ is the cumulative distribution function (cdf) corresponding to $f(\epsilon)$.}
\end{lemma}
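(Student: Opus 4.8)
The plan is to treat the two energy arrival models separately: the deterministic case by direct energy bookkeeping, and the general random case by identifying the energy left after a TB as the overshoot of a renewal process and invoking the stationarity of the equilibrium renewal distribution. For the deterministic model I would use that $\Esen$ and $\Etx$ are integer multiples of the per-block harvest $\rho$ and that each EHB raises the battery by exactly $\rho$, so no overshoot is ever produced. Starting from an empty battery just after a TB, the sensor harvests exactly $\Etx/\rho$ blocks before a retransmission, or exactly $(\Esen+\Etx)/\rho$ blocks before an SB followed by a TB; in both cases the battery reaches the threshold precisely, and the ensuing consumption ($\Etx$, or $\Esen+\Etx$) empties it again. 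A short induction on the TB index, with the empty battery as base case, then shows the energy after every TB is zero.

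For the general random model I would track the residual energy $E_n\ge 0$ immediately after the $n$th TB and first argue that $\{E_n\}$ is a Markov chain whose one-step map is an overshoot operation. Given $E_n=x$, the sensor adds i.i.d.\ arrivals $\Yin$ (density $f$, mean $\myexpect{\Yin}=\rho$) until the battery first exceeds the relevant threshold $\tau\in\{\Etx,\Esen+\Etx\}$, and $E_{n+1}$ is the resulting excess above $\tau$ (with the degenerate case $x>\tau$ giving $E_{n+1}=x-\tau$ directly). Equivalently, placing renewal points at $x,\,x+\Yin^{(1)},\,x+\Yin^{(1)}+\Yin^{(2)},\dots$ with inter-point density $f$, the new residual equals the forward recurrence time at level $\tau$.

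The key step is then to show that $g(\epsilon)=\tfrac{1}{\rho}\bigl(1-F(\epsilon)\bigr)$ is invariant under this overshoot map for every fixed threshold $\tau$. I would invoke the defining property of the stationary (equilibrium) renewal process: if the initial residual has the equilibrium density $\tfrac{1}{\rho}(1-F(\cdot))$, then the forward recurrence time at any level $\tau\ge 0$ has the same density. The sanity check $\int_0^\infty \tfrac{1}{\rho}(1-F(\epsilon))\,d\epsilon=\myexpect{\Yin}/\rho=1$ confirms $g$ is a valid density. Since both admissible thresholds are fixed constants, $g$ is a common stationary law of the chain. Finally, the assumed positive right-continuous point of $f$ makes the arrival law non-lattice, so the chain is aperiodic and ergodic, and the renewal theorem yields uniqueness of the stationary law together with $E_n\Rightarrow g$, identifying $g$ as the claimed steady-state distribution.

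I expect the main obstacle to be the invariance step, namely establishing that the overshoot (forward recurrence time) distribution is exactly the equilibrium density irrespective of the finite threshold and the starting residual, together with the accompanying uniqueness and convergence argument that relies on the non-lattice condition. As an alternative to quoting the equilibrium renewal property, one can write the transition kernel explicitly and verify the fixed-point equation $g=\mathcal{T}g$ directly from the renewal equation; this becomes routine once the kernel is in hand, but it is the part where the stated regularity assumption on $f$ is genuinely needed.
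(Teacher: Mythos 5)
Your argument reaches the correct conclusion by a genuinely different route from the paper's. The paper does not work with the post-TB residual chain directly: in Appendix B it picks a small quantum $Q$ dividing both $\Esen$ and $\Etx$ and recasts the whole protocol as a single block-wise harvest-then-use process in which every consumption event removes exactly $Q$, so that the residual after the $m$th consumption is the overshoot of the arrival process at the \emph{deterministic} level $mQ$. Lemma A1 then shows this single-threshold chain is a positive recurrent Harris chain (a constructive accessibility argument, which is where the ``positive right-continuous point of $f$'' hypothesis is actually used) and verifies the balance equation $g(x)=g(x+Q)+\int_x^{Q+x}\bigl(\sum_{i}g_i(Q+x-y)\bigr)f(y)\,\mathrm{d}y$ through the telescoping identity $\sum_{i\ge 0}g_i\equiv 1/\rho$. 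You instead recognize $\tfrac{1}{\rho}(1-F)$ as the equilibrium excess-life density and quote the stationarity of the equilibrium renewal process, with the non-lattice condition supplying uniqueness and convergence. Your route is shorter and leans on standard renewal theory; the paper's is self-contained and verifies the fixed-point equation from scratch. Your deterministic case matches what the paper leaves as ``straightforward.''

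The one step where your write-up is weaker than the paper's is the claim that ``since both admissible thresholds are fixed constants, $g$ is a common stationary law of the chain.'' The threshold applied after TB $n$ is not determined at TB $n$: after a failed TB the target is $\Etx$ if the battery reaches $\Etx$ before the retransmission window closes, and $\Esen+\Etx$ if it does not, and which of these occurs is decided by the very arrivals that generate $E_{n+1}$. Hence $E_{n+1}$ is not a mixture, over thresholds independent of the chain, of overshoot maps applied to $E_n\sim g$, and invariance under each fixed threshold does not by itself give invariance of the actual transition. The paper's quantization is precisely the device that removes this difficulty (all consumptions become overshoots at the deterministic lattice $\{mQ\}$ under a single threshold), together with the ``virtual TB'' bookkeeping in Appendix C that fixes each consumption level one step in advance; if you adopt either device, your renewal-theoretic argument closes cleanly.
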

\begin{IEEEproof}
			For a deterministic energy arrival process, Lemma 1 is straightforward. For a general random energy arrival process, the proof is given in Appendix B.
\end{IEEEproof}

According to the sensing and transmission protocol defined in the previous section, each SB is directly followed by a TB. From Lemma 1, the steady-state distribution of available energy after any TB is the same. Hence, the steady-state distribution of the available energy after  $\teventsuc{j}$ is the same for all $j$.
Because the successes/failures of each TB are mutually independent, and ${\Tuc}_{,j}$ is determined by both the available energy after $\teventsuc{j}$ and the successes/failures of the following TBs, ${\Tuc}_{,j}$ are i.i.d. for all $j$.
Similarly, it is also easy to show that ${\Tstd}_{,j}$ are i.i.d. for all $j$.
For convenience, we remove subscript $j$ for ${\Tuc}$ and $\Tstd$ in \eqref{eq:def_Tuc} and \eqref{eq:def_Tstd}, respectively.

\section{Update Age}
In this section, considering the dynamics of an energy arrival process and the probability of successful/failed transmission in our proposed harvest-then-use protocol, the update age for deterministic, general random and exponential energy arrival processes are analyzed.

\subsection{Deterministic Energy Arrival Process}
\begin{theorem}
	For a deterministic energy arrival process, the update age pmf is given by
	\begin{equation} \label{det_STD_pmf}
	\begin{aligned}
	\myprobability{\Tstd = k} =
	&\frac{(1-\Pout) \Poutk{n-1}}{\Psuc}, & k=1+ (n-1)\left(\frac{\Etx}{\rho} +1\right),
	\end{aligned}
	\end{equation}
	where
	\begin{equation} \label{just_hat_n}
	n = 1,2,... \hat{n},\ \hat{n} = 1+ \left\lfloor \frac{W-1}{1+ \frac{\Etx}{\rho}}  \right\rfloor,
	\Psuc=1-\Poutk{\hat{n}},
	\end{equation}
	and $\Pout$ is the probability of a TB being a FTB, defined in Section II.
\end{theorem}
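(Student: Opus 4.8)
The plan is to turn the protocol into an explicit block schedule under deterministic arrival, and then reduce the update age to the first-success time of a sequence of independent Bernoulli trials, conditioned on eventual success. First I would fix the timing. By Lemma~\ref{L1}, the energy level immediately after every TB is zero; combined with the assumption that $\Etx$ is an integer multiple of the per-block harvest $\rho$, a failed TB forces the sensor to spend exactly $\Etx/\rho$ EHBs to re-accumulate $\Etx$ before it can retransmit. Hence consecutive transmission attempts are separated by $\Etx/\rho + 1$ blocks (the $\Etx/\rho$ EHBs plus the retransmission block itself), while the very first attempt follows the SB immediately. Counting from $\teventsen{j}$, the $n$th transmission attempt therefore occurs at
\begin{equation}
k = 1 + (n-1)\left(\frac{\Etx}{\rho}+1\right),
\end{equation}
which is precisely the support of the claimed pmf and the stated map between the attempt index $n$ and the age value $k$.

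Next I would pin down $\hat{n}$, the largest admissible attempt index, from the retransmission window. Since all attempts must lie within $W$ blocks of the SB, I require $1 + (n-1)(\Etx/\rho + 1)\le W$; solving for the largest integer $n$ gives $\hat{n} = 1 + \lfloor (W-1)/(1+\Etx/\rho)\rfloor$, matching \eqref{just_hat_n}. I would then invoke the assumed mutual independence of TB successes/failures to treat the (at most $\hat{n}$) attempts as i.i.d.\ Bernoulli trials, each failing with probability $\Pout$. The unconditional probability that the first success lands on attempt $n$ is $(1-\Pout)\Poutk{n-1}$, and the probability that at least one of the $\hat{n}$ attempts succeeds is $\Psuc = 1 - \Poutk{\hat{n}}$, the complement of all $\hat{n}$ attempts failing.

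The final step is the conditioning. The update age is defined only for a packet that is actually delivered, i.e.\ for the $j$th STB, so the relevant distribution is the law of the first-success index conditioned on a success occurring within the window. Dividing the first-success-at-$n$ probability by $\Psuc$ yields
\begin{equation}
\myprobability{\Tstd = k} = \frac{(1-\Pout)\Poutk{n-1}}{\Psuc},
\end{equation}
which is exactly \eqref{det_STD_pmf}. I expect the main obstacle to be the block-counting bookkeeping: getting the inter-attempt spacing $\Etx/\rho+1$ and the floor expression for $\hat{n}$ without an off-by-one error, and — more conceptually — recognizing that the metric is conditioned on eventual delivery, since it is this conditioning (rather than any energy dynamics) that produces the normalizing denominator $\Psuc$. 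Both the deterministic schedule and Lemma~\ref{L1} do the real work; the probabilistic part is then a routine geometric-type argument.
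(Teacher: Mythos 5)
Your proposal is correct and follows essentially the same route as the paper: the paper's Appendix D decomposes $\myprobability{\Tstd=k}$ over the attempt index $n$ via $\myprobability{L=k,\eventsuc}/\myprobability{\eventsuc}$, uses the independence of TB outcomes to get the geometric factor $(1-\Pout)\Poutk{n-1}$, and then substitutes the degenerate pmf $\myprobability{E((n-1)\Etx)=(n-1)\Etx/\rho}=1$ for deterministic arrivals, which is exactly your explicit block schedule and conditioning-on-delivery argument. The only cosmetic difference is that the paper obtains the deterministic case as a specialization of a general formula valid for arbitrary arrival distributions, whereas you work it directly.
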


\begin{proof}
	See Appendix D.
\end{proof}
\noindent From Theorem 1, the average update age, $\aveTstd$ for a deterministic energy arrival process is straightforwardly obtained as in Corollary 1.
\begin{corollary}
	For a deterministic energy arrival process, average update age is given by
	\begin{equation} \label{det_STD_ave}
	\begin{aligned}
	\aveTstd=
	\sum\limits_{n=1}^{\hat{n}}
	\left(1+(n-1)\left(\frac{\Etx}{\rho} +1\right)\right)
	\frac{(1-\Pout) \Poutk{n-1}}{\Psuc},
	\end{aligned}
	\end{equation}
	where $\Psuc$ is given in \eqref{just_hat_n}.
\end{corollary}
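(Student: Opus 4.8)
The plan is to obtain the average update age simply as the first moment of the discrete random variable $\Tstd$, whose complete distribution is already supplied by Theorem~1. Because Theorem~1 fully specifies both the support and the mass function of $\Tstd$ on a finite set of points, no further modelling of the energy-arrival or transmission dynamics is needed; the corollary reduces to evaluating a weighted sum. This is why the statement can be called ``straightforwardly obtained'' from Theorem~1.

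First I would recall from Theorem~1 that $\Tstd$ is supported on the finite collection of values $k = 1 + (n-1)\left(\frac{\Etx}{\rho} + 1\right)$ for $n = 1, 2, \ldots, \hat{n}$, where the index $n$ corresponds to the event that the $n$th transmission block is the first successful one, conditioned on a success occurring within the $W$-block retransmission window, and that the associated mass is $\myprobability{\Tstd = k} = (1-\Pout)\Poutk{n-1}/\Psuc$. I would then invoke the definition of expectation for a discrete random variable,
\begin{equation}
\aveTstd = \myexpect{\Tstd} = \sum_{n=1}^{\hat{n}} \left(1 + (n-1)\left(\frac{\Etx}{\rho} + 1\right)\right) \myprobability{\Tstd = 1 + (n-1)\left(\frac{\Etx}{\rho} + 1\right)},
\end{equation}
and substitute the mass function, which reproduces \eqref{det_STD_ave} verbatim.

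The only point worth verifying, and the closest thing to an obstacle here though it is minor, is that the mass function inherited from Theorem~1 is a genuine conditional pmf, i.e., that it sums to one over the support so that the expectation is well defined. This follows from the finite geometric identity $\sum_{n=1}^{\hat{n}} \Poutk{n-1} = (1 - \Poutk{\hat{n}})/(1 - \Pout)$ combined with the definition $\Psuc = 1 - \Poutk{\hat{n}}$ from \eqref{just_hat_n}, which immediately gives total mass one. One could optionally collapse the expectation into closed form by splitting it into $\sum_{n} \Poutk{n-1}$ and $\sum_{n} (n-1)\Poutk{n-1}$ and summing the resulting geometric and arithmetico-geometric series, but since $\hat{n}$ is finite and the statement leaves the answer as a finite sum, no such simplification is required.
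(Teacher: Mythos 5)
Your proposal is correct and matches the paper's own reasoning: the paper derives Corollary~1 directly from Theorem~1 as the expectation of the given pmf over its finite support, exactly as you do, and offers no separate appendix proof. Your added check that the conditional pmf sums to one via the finite geometric identity and $\Psuc = 1-\Poutk{\hat{n}}$ is a harmless (and correct) bonus.
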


\subsection{General Random Energy Arrival Process}
\begin{theorem}
	For a general random energy arrival process, the update age pmf \!\footnote{{Although the general expression in Theorem 2 contains multiple integrals in Eq. (9), for special cases, such as deterministic and exponential energy arrival processeses, the results given in Theorems 1 and 3 are closed-form expressions.}} is given by
	\begin{equation} \label{gen_STD_pmf}
	\myprobability{\Tstd \!\!= \!\!k}\!\! = \!\!
	\left\lbrace
	\begin{aligned}
	&\!\!\frac{1-\Pout}{\Psuc}, \!\!&k=1, \\
	&\!\!\frac{(1-\Pout)}{\Psuc} \!\!\sum\limits_{n=2}^{k} \Poutk{n-1}\!\! \left(G_{k-n-1}((n-1)\Etx)\!-\!G_{k-n}((n-1)\Etx)\right),\!\!\!\! &2\!\leq \!k\! \leq\! W,\\
	\end{aligned}
	\right.
	\end{equation}
	where
	\begin{equation} \label{Psuc_gen}
	\begin{aligned}
	\Psuc =
	1-\Pout + (1-\Pout) \sum\limits_{l=2}^{W} \sum\limits_{n=2}^{l} \Poutk{n-1}
	\left(G_{l-n-1}((n-1)\Etx)-G_{l-n}((n-1)\Etx)\right),
	\end{aligned}
	\end{equation}
	and
	\begin{equation} \label{defi_G}
	G_{i}(x) =
	\left\lbrace
	\begin{aligned}
	&1, &i=-1,\\
	&\int\limits_{0}^{x} (g\underbrace{\star f \star f \star...\star f}_{i \text{ convolutions}})(u) \mathrm{d}u, &i\geq 0.
	\end{aligned}
	\right.
	\end{equation}
	$g(x)$ and $f(x)$ are defined in Lemma 1.
\end{theorem}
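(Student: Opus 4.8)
The plan is to compute the conditional pmf of $\Tstd$ by conditioning on an eventual success within the retransmission window and decomposing the age according to which transmission attempt succeeds. Since the update age is defined only for the $j$th STB, every event must be normalized by $\Psuc$, the probability that the packet is delivered within the $W$-block window; this produces the common factor $1/\Psuc$. I would write $n$ for the index of the successful TB (so attempts $1,\dots,n-1$ are FTBs and the $n$th is the STB) and $E$ for the total number of EHBs occurring between $\teventsen{j}$ and $\teventsuc{j}$, and first establish the block-counting identity $\Tstd = n + E$: every block strictly after the SB and up to the successful TB is either one of the $n$ TBs or one of the $E$ EHBs. The base case $\Tstd=1$ corresponds to $n=1,\ E=0$ (the first TB, which immediately follows the SB, succeeds), contributing $(1-\Pout)/\Psuc$.

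The core of the argument is to determine, for $2\le k\le W$ and each $2\le n\le k$, the probability that exactly $E=k-n$ EHBs occur. Here I would invoke Lemma 1: in steady state the energy left after the first TB is a variable $\epsilon_0$ with pdf $g$, while the $i$th subsequent EHB adds an i.i.d.\ increment $X_i$ with pdf $f$, independent of $\epsilon_0$. The key observation is a telescoping/cumulative-energy argument. Because a retransmission fires greedily whenever the battery exceeds $\Etx$ and each TB consumes exactly $\Etx$, the $n$th TB becomes possible precisely when the accumulated energy $\epsilon_0+\sum_i X_i$ first reaches $(n-1)\Etx$, the total cost of attempts $2,\dots,n$; the intermediate overshoots need not be tracked individually, as they are absorbed into the single threshold $(n-1)\Etx$. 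Hence $E=k-n$ EHBs occur iff the cumulative sum first crosses $(n-1)\Etx$ at its $(k-n)$th increment, an event of probability $G_{k-n-1}((n-1)\Etx)-G_{k-n}((n-1)\Etx)$, since $G_i(x)$ is by definition the cdf of $\epsilon_0$ plus $i$ harvesting increments, with the convention $G_{-1}\equiv 1$ covering the boundary case $E=0$ (i.e.\ $\epsilon_0$ already exceeds $(n-1)\Etx$).

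With this, I would assemble the pmf. For $2\le k\le W$, summing over the mutually exclusive events indexed by $n$, and using that transmission successes/failures are independent of the energy dynamics, the $n$th term carries the factor $\Poutk{n-1}(1-\Pout)$ (first $n-1$ attempts fail, $n$th succeeds) times the EHB probability above; dividing by $\Psuc$ yields \eqref{gen_STD_pmf}. Finally $\Psuc$ is obtained as the total unnormalized success probability by summing these contributions over all admissible completion times $k=l$ from $1$ to $W$ — the isolated $(1-\Pout)$ term being the $l=1$ contribution — which gives \eqref{Psuc_gen} and guarantees that the conditional pmf sums to one.

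The hard part will be the telescoping argument of the second paragraph. The nontrivial point is that one must \emph{not} reset the battery to its steady-state law $g$ between retransmissions of a single packet; instead the same $\epsilon_0$ and running harvest drive the entire retransmission sequence, and the repeated subtraction of $\Etx$ must be handled jointly so that the per-attempt overshoots collapse into the single crossing level $(n-1)\Etx$. I would also need to verify pathwise that the identity $\Tstd=n+E$ and the crossing characterization survive even when a single large harvest enables several consecutive TBs (so that some inter-attempt gaps contain no EHB), and to justify that conditioning on eventual success leaves the energy-dependent $G$-factors unchanged, precisely because the fading outcomes are independent of the harvested energy.
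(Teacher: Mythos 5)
Your proposal is correct and follows essentially the same route as the paper's proof in Appendix D: decompose on the index $n$ of the successful transmission attempt, observe that the number of intervening EHBs is the first-passage time of $\epsilon_0+\sum_i X_i$ (with $\epsilon_0\sim g$ from Lemma 1) over the single collapsed threshold $(n-1)\Etx$, which has probability $G_{k-n-1}((n-1)\Etx)-G_{k-n}((n-1)\Etx)$, multiply by $(1-\Pout)\Poutk{n-1}$ using independence of fading from the energy process, and normalize by $\Psuc=\sum_{l=1}^{W}\myprobability{L=l,\eventsuc}$. The paper packages the first-passage count as the random variable $E((n-1)\Etx)$ in Appendix C, but the underlying telescoping argument you describe is exactly what that definition encodes.
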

\begin{proof}
	See Appendix D.
\end{proof}

From Theorem 2, the average update age, $\aveTstd$ for a general random energy arrival process is obtained straightforwardly as in Corollary~2.
\begin{corollary}
	For a general random energy arrival process, average update age is given by
	\begin{equation}\label{gen_STD_ave}
	\aveTstd
	= \frac{1-\Pout}{\Psuc} \left(
	1+
	\sum\limits_{l=2}^{W} l
	\sum\limits_{n=2}^{l} \Poutk{n-1} \left(G_{l-n-1}((n-1)\Etx)-G_{l-n}((n-1)\Etx)\right)
	\right),
	\end{equation}
	where $\Psuc$ is given in \eqref{Psuc_gen}.
\end{corollary}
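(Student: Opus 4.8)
The plan is to obtain the average update age directly from its definition as the first moment of the discrete random variable $\Tstd$, namely $\aveTstd = \myexpect{\Tstd} = \sum_{k} k\, \myprobability{\Tstd = k}$, and then to substitute the pmf established in Theorem 2. The key preliminary observation is that the pmf in \eqref{gen_STD_pmf} is supported only on the integers $k = 1, 2, \ldots, W$, since the retransmission window caps the update age at $W$ time blocks; consequently the expectation reduces to a finite sum over this range and no tail contribution is lost.

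First I would split the expectation sum according to the two branches of the piecewise pmf: the single term $k=1$ and the block of terms $2 \leq k \leq W$. The $k=1$ term contributes $1 \cdot \frac{1-\Pout}{\Psuc}$. For each $k$ with $2 \leq k \leq W$, I would substitute the second branch of \eqref{gen_STD_pmf}, so that the contribution is $k \cdot \frac{1-\Pout}{\Psuc} \sum_{n=2}^{k} \Poutk{n-1}\left(G_{k-n-1}((n-1)\Etx) - G_{k-n}((n-1)\Etx)\right)$.

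Next I would factor the common constant $\frac{1-\Pout}{\Psuc}$ out of the entire expression. After factoring, the $k=1$ term collapses to a bare $1$ inside the parentheses, while each remaining term of the double sum retains its weight $k$. Relabeling the outer summation index from $k$ to $l$ then reproduces exactly the stated expression \eqref{gen_STD_ave}, with the normalising constant $\Psuc$ and the convolution functions $G_i(\cdot)$ inherited verbatim from Theorem 2.

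There is no genuine analytical obstacle here; the corollary is a routine consequence of linearity of expectation applied to a known pmf. The only points that warrant minor care are correctly fixing the support of $\Tstd$ as $\{1,\ldots,W\}$ so that the finite sum is complete, and tracking the piecewise structure so that the $k=1$ term is absorbed into the leading $1$ inside the parenthesis rather than inadvertently dropped. All the substantive content—the convolution functions $G_i(\cdot)$ and the constant $\Psuc$—is supplied by Theorem 2 and requires no further derivation.
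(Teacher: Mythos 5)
Your proposal is correct and is exactly the paper's (implicit) argument: the paper states that Corollary~2 is ``obtained straightforwardly'' from Theorem~2, meaning precisely the direct computation $\aveTstd=\sum_{k=1}^{W}k\,\myprobability{\Tstd=k}$ with the pmf of \eqref{gen_STD_pmf} substituted, the factor $\frac{1-\Pout}{\Psuc}$ pulled out, and the outer index relabeled from $k$ to $l$. Your added care about the support $\{1,\ldots,W\}$ and the piecewise structure is appropriate but introduces nothing beyond what the paper intends.
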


\subsection{Exponential Energy Arrival Process}
\begin{theorem}
	For an exponential energy arrival process, the update age pmf is given by
	\begin{equation} \label{Tstd_exp}
	\myprobability{\Tstd = k}
	=\left\lbrace
	\begin{aligned}
	& \frac{1-\Pout}{\Psuc}, k =1,\\
	&\frac{(1-\Pout)}{\Psuc} \sum\limits_{n=2}^{k} \Poutk{n-1}
	\Pois{k-n}{(n-1)\Etx/\rho},
	2 \leq k \leq W,
	\end{aligned}
	\right.
	\end{equation}
	where
	\begin{equation} \label{Psuc_exp}
	\Psuc
	= 1-\Pout + (1-\Pout) \sum\limits_{l=2}^{W} \sum\limits_{n=2}^{l} \Poutk{n-1}
	\Pois{l-n}{(n-1)\Etx/\rho},
	\end{equation}
\end{theorem}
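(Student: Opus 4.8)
The plan is to derive Theorem~3 as a direct specialization of the general-random-arrival formula in Theorem~2, the point being that for the exponential model the convolution integrals $G_i(\cdot)$ collapse to closed-form Poisson expressions. I would therefore not re-derive the pmf from scratch, but instead evaluate the building block $G_i$ under the exponential distribution and substitute into the expressions of Theorem~2.

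First I would instantiate the model: the energy harvested per EHB has pdf $f(\epsilon) = \frac{1}{\rho} e^{-\epsilon/\rho}$ and cdf $F(\epsilon) = 1 - e^{-\epsilon/\rho}$. Applying Lemma~1 gives the steady-state post-TB energy density $g(\epsilon) = \frac{1}{\rho}(1 - F(\epsilon)) = \frac{1}{\rho} e^{-\epsilon/\rho} = f(\epsilon)$. This coincidence $g = f$, a consequence of the memorylessness of the exponential, is the crucial simplification: the integrand $(g \star f \star \cdots \star f)$ in the definition of $G_i$ becomes a convolution of $i+1$ identical exponential densities.

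Next I would identify that convolution as the Erlang (gamma) pdf of shape $i+1$ and rate $1/\rho$, i.e.\ $\frac{u^{i} e^{-u/\rho}}{\rho^{i+1}\, i!}$, and integrate it over $[0,x]$ to obtain the Erlang cdf. Using the standard Poisson-tail identity for the Erlang cdf, this yields $G_i(x) = 1 - \sum_{m=0}^{i} \Pois{m}{x/\rho}$ for $i \geq 0$, while $G_{-1}(x) = 1$ by the definition in Theorem~2. The key algebraic step is then the telescoping of adjacent terms: setting $x = (n-1)\Etx$, the difference $G_{k-n-1}((n-1)\Etx) - G_{k-n}((n-1)\Etx)$ reduces to the single term $\Pois{k-n}{(n-1)\Etx/\rho}$. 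Substituting this identity into the pmf of Theorem~2 and into its normalizing constant $\Psuc$ immediately reproduces \eqref{Tstd_exp} and \eqref{Psuc_exp}.

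The main obstacle is bookkeeping rather than conceptual: I must confirm the telescoping identity holds uniformly across the summation range $2 \le n \le k$, which requires treating the boundary index $k = n$ (where $k-n-1 = -1$) separately and checking that $G_{-1}(x) - G_0(x) = \Pois{0}{(n-1)\Etx/\rho}$ agrees with the generic formula. Once this edge case is reconciled, the derivation is a mechanical substitution introducing no new probabilistic argument.
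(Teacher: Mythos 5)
Your proposal is correct and follows essentially the same route as the paper: the paper's Appendix D also obtains Theorem~3 by substituting the exponential-case identity $\myprobability{E(\mathcal{E})=i}=G_{i-1}(\mathcal{E})-G_{i}(\mathcal{E})=\Pois{i}{\mathcal{E}/\rho}$ into the general expressions underlying Theorem~2 and its normalization $\Psuc$. The only cosmetic difference is that the paper justifies this identity by noting the post-TB energy accumulation is a Poisson process, whereas you derive it explicitly via $g=f$, the Erlang cdf, and telescoping — including the correct treatment of the $G_{-1}$ boundary case.
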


\begin{proof}
	See Appendix D.
\end{proof}
\noindent From Theorem 3, the average update age, $\aveTstd$ for an exponential energy arrival process is straightforwardly obtained as in Corollary 3.

\begin{corollary}
	For an exponential energy arrival process, average update age is given by	
	\begin{equation} \label{exp_STD_ave}
	\begin{aligned}
	\aveTstd=
	\frac{(1-\Pout)}{\Psuc}
	\left(1+
	\sum\limits_{l=2}^{W}
	l
	\sum\limits_{n=2}^{l} \Poutk{n-1}
	\Pois{l-n}{(n-1)\Etx/\rho}
	\right)
	\end{aligned}
	\end{equation}
	where $\Psuc$ is given in \eqref{Psuc_exp}.
	
\end{corollary}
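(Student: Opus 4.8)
The plan is to obtain Corollary~3 directly from the update-age pmf of Theorem~3 by forming the expectation $\aveTstd = \myexpect{\Tstd}$. The first thing I would note is that, under the retransmission window of $W$ blocks, the update age is supported on the finite set $\{1,2,\dots,W\}$, so the mean is the \emph{finite} sum $\aveTstd = \sum_{k=1}^{W} k\,\myprobability{\Tstd = k}$ and no convergence question arises.

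Next I would substitute the two-part pmf \eqref{Tstd_exp} into this sum, isolating the $k=1$ contribution $1\cdot\frac{1-\Pout}{\Psuc}$ from the $2\le k\le W$ contributions. Pulling the common factor $\frac{1-\Pout}{\Psuc}$ outside the summations leaves
\begin{equation}
\aveTstd = \frac{1-\Pout}{\Psuc}\left(1 + \sum_{k=2}^{W} k \sum_{n=2}^{k} \Poutk{n-1}\,\Pois{k-n}{(n-1)\Etx/\rho}\right),
\end{equation}
and relabeling the outer index $k \to l$ reproduces \eqref{exp_STD_ave} verbatim, with $\Psuc$ given by \eqref{Psuc_exp}.

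I do not expect a genuine obstacle here, since the derivation is a mechanical specialization of the defining expectation with the $k$-independent constant $\Psuc$ carried through unchanged. The only points I would check for completeness are that the pmf \eqref{Tstd_exp} is a bona fide conditional distribution on $\{1,\dots,W\}$—which holds because summing it and invoking the definition of $\Psuc$ in \eqref{Psuc_exp} gives total mass one—and that the inner summation limits ($n$ from $2$ to the current outer index) are preserved under the relabeling, so that the double sum in \eqref{exp_STD_ave} coincides term-by-term with the one obtained above.
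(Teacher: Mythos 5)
Your derivation is correct and matches the paper's approach exactly: the paper states that Corollary~3 is "straightforwardly obtained" from Theorem~3, i.e., precisely by forming the finite expectation $\sum_{k=1}^{W} k\,\myprobability{\Tstd = k}$ with the pmf \eqref{Tstd_exp} and factoring out $\frac{1-\Pout}{\Psuc}$ as you do. Nothing is missing.
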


From Theorems 1-3 and Corollaries 1-3, we see that different energy arrival processes induce different pmfs and average values of update age.
For benchmarking with the existing studies on delay without imposing a constraint on the time window for retransmissions~\cite{Tianqing}, we let $W\rightarrow \infty$, so that all sensed information is eventually transmitted to the sink, the average update age is the same under different energy arrival processes as in Corollary~4.

%
\begin{corollary}
	For a deterministic or general random energy arrival process, $\aveTstd$ increases with $W$, and as $W$ gets large, the asymptotic upper bound of $\aveTstd$ is independent with energy arrival distribution and is given by
	\begin{equation}
	\lim\limits_{W\rightarrow \infty} \aveTstd = 1+ \frac{\Pout}{1-\Pout} \left(\frac{\Etx}{\rho} +1\right).
	\end{equation}
\end{corollary}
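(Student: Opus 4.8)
The plan is to recognize that the finite-window average update age is a \emph{truncated conditional expectation} of the update age of an idealized, unconstrained process, and then to handle monotonicity and the limit separately. Let $\Tstd^{\infty}$ denote the update age of the process obtained by setting $W=\infty$, i.e. the harvest-then-use process in which retransmissions continue until success. Since $\Pout<1$, every packet is delivered after finitely many attempts, so $\Tstd^{\infty}$ is almost surely finite. Coupling the window-$W$ process to this one by driving both with the same sequence of transmission outcomes and the same energy arrivals, they agree block-for-block until either delivery occurs or block $W$ is reached. Hence the window-$W$ process delivers its packet (a STB within the window) if and only if $\Tstd^{\infty}\le W$, and when it does the realized age equals $\Tstd^{\infty}$. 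Writing $q_k=\myprobability{\Tstd^{\infty}=k}$, which does \emph{not} depend on $W$, this identification gives $\Psuc=\sum_{k=1}^{W}q_k=\myprobability{\Tstd^{\infty}\le W}$ and
\[
\aveTstd=\myexpect{\Tstd^{\infty}\mid \Tstd^{\infty}\le W}=\frac{\sum_{k=1}^{W}k\,q_k}{\sum_{k=1}^{W}q_k}.
\]
(Specializing $q_k$ to the deterministic and to the general random law recovers \eqref{det_STD_ave} and \eqref{gen_STD_ave}, respectively.)

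Monotonicity in $W$ then follows from a one-line property of truncated weighted means. For nonnegative weights $q_k$, the truncated mean $S(m)=\big(\sum_{k\le m}k\,q_k\big)/\big(\sum_{k\le m}q_k\big)$ satisfies $S(m+1)-S(m)=\frac{q_{m+1}\sum_{k\le m}(m+1-k)\,q_k}{\big(\sum_{k\le m+1}q_k\big)\big(\sum_{k\le m}q_k\big)}\ge 0$, because every factor is nonnegative ($m+1-k\ge 0$ for $k\le m$). Since $\aveTstd=S(W)$, this establishes that $\aveTstd$ is nondecreasing in $W$ for both the deterministic and the general random process, which is the first assertion of the corollary; zero weights cause no difficulty, so the deterministic step-function behaviour is covered as well.

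For the limit I would let $W\to\infty$ in the conditional-expectation representation: since $\{\Tstd^{\infty}\le W\}\uparrow$ the whole space and $\Tstd^{\infty}\ge 0$, monotone convergence gives $\lim_{W\to\infty}\aveTstd=\myexpect{\Tstd^{\infty}}$, and it remains to evaluate this expectation in a distribution-free way. Decompose the unconstrained age as $\Tstd^{\infty}=N+\sum_{i=2}^{N}H_i$, where $N$ is the number of transmission attempts up to and including the first success and $H_i\ge 0$ is the number of EHBs before the $i$th retransmission. Because the per-TB success events are independent of the energy-arrival process and each fails with probability $\Pout$, $N$ is geometric with $\myexpect{N}=1/(1-\Pout)$ and $\myexpect{N-1}=\Pout/(1-\Pout)$. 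The crucial quantity is $\myexpect{H}$: by Lemma~\ref{L1} the battery energy just after any TB has the same steady-state law $g$, so letting $\epsilon_0,\epsilon_1\sim g$ be the energies just after the failed TB and just after the ensuing TB, energy balance gives $\sum_{t=1}^{H}\Yin^{(t)}=\Etx+\epsilon_1-\epsilon_0$; taking expectations and using Wald's identity $\myexpect{\sum_{t=1}^{H}\Yin^{(t)}}=\rho\,\myexpect{H}$ yields $\rho\,\myexpect{H}=\Etx$, i.e. $\myexpect{H}=\Etx/\rho$ irrespective of the arrival distribution (and trivially $H=\Etx/\rho$ in the deterministic case). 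A second application of Wald gives $\myexpect{\sum_{i=2}^{N}H_i}=\myexpect{N-1}\,\myexpect{H}$, whence
\[
\myexpect{\Tstd^{\infty}}=\frac{1}{1-\Pout}+\frac{\Pout}{1-\Pout}\cdot\frac{\Etx}{\rho}=1+\frac{\Pout}{1-\Pout}\Big(\frac{\Etx}{\rho}+1\Big),
\]
the claimed distribution-independent value; combined with monotonicity this makes it the supremum over $W$, i.e. the asymptotic upper bound.

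I expect the main obstacle to be the rigorous justification of $\myexpect{H}=\Etx/\rho$ in the general random case. This step leans on Lemma~\ref{L1} to equate the laws of $\epsilon_0$ and $\epsilon_1$ so that their means cancel, and on Wald's identity, which requires $H$ to be a genuine stopping time for the i.i.d. harvest sequence with $\myexpect{H}<\infty$ and the harvests to have finite mean $\rho$; one must also confirm the independence between the geometric count $N$ and the harvest process needed for the outer Wald step, which is exactly the mutual independence of TB successes and energy arrivals assumed in Section II. The deterministic case sidesteps all of this, since each gap is exactly $\Etx/\rho$ blocks and the geometric-series evaluation is elementary. A more computational alternative would be to take $W\to\infty$ directly in \eqref{gen_STD_ave}, but collapsing the nested convolution terms $G_i$ to the distribution-free constant is considerably messier than the energy-balance argument above.
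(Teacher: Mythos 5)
Your proposal is correct, and it reaches the limit by essentially the same decomposition as the paper (Appendix G) while being noticeably more careful on the two points the paper glosses over. For the limit itself, the paper conditions on the number of transmission attempts $N$ (geometric) and uses $\myexpect{\Tstd \mid N=n} = n + \myexpect{E((n-1)\Etx)} = n+(n-1)\Etx/\rho$, where the identity $\myexpect{E(\mathcal{E})}=\mathcal{E}/\rho$ is asserted without comment (it follows from the fact, buried in Appendix A, that $\sum_{i\ge 0} g_i(x)=1/\rho$). You obtain the same per-gap cost $\myexpect{H}=\Etx/\rho$ by a self-contained energy-balance argument: Wald's identity plus the stationarity of the post-TB energy law from Lemma~1, so that the boundary terms $\myexpect{\epsilon_1}-\myexpect{\epsilon_0}$ cancel. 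That is a genuinely different and arguably more transparent justification, though, as you note, it quietly requires the steady-state mean $\myexpect{\epsilon_0}=\myexpect{\xi^2}/(2\rho)$ to be finite (a second-moment condition on the arrivals) and $\myexpect{H}<\infty$; the paper's route has the same implicit regularity needs. Where you add real value is on the first assertion of the corollary: the paper dismisses monotonicity with ``it is easy to see from Corollaries 1 and 2,'' which is not evident from those formulas, whereas your observation that $q_k=\Psuc\,\myprobability{\Tstd=k}$ is $W$-independent turns $\aveTstd$ into a truncated weighted mean $S(W)$, for which nondecrease in $W$ is a one-line algebraic fact; the same representation lets you justify the interchange $\lim_{W\to\infty}\aveTstd=\myexpect{\Tstd^{\infty}}$ by monotone convergence, another step the paper skips. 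In short: same skeleton for the limit value, a cleaner mechanism ($q_k$ independent of $W$, coupling to the unconstrained process) that actually proves the monotonicity and convergence claims rather than asserting them.
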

\begin{proof}
	See Appendix G.
\end{proof}

\begin{remark} \label{R1}
	\textnormal{
		From the above analytical results, we have that:
		\begin{enumerate}[i)]
			\item
			From Theorems 1 to 3, $\Tstd$ is independent of the energy cost of sensing, $\Esen$, because the delay is only affected by the energy harvesting and retransmissions that happen after the sensing operation.
			This might give the impression that energy cost of sensing does not affect delay. However, update age is only one of the two delay metrics, and the energy cost of sensing has important impacts on update cycle, which will be investigated in the next section.		
			\item
			Allowing a larger window for retransmissions increases the average update age. This might suggest that retransmissions should be avoided, i.e., $W=1$.
			However, the update age does not take into account cases where sensed information is not successfully transmitted to the sink. In this regard, the update cycle implicitly captures such cases.
		\end{enumerate}
	}
\end{remark}

\section{Update Cycle}
In this section, considering the dynamics of an energy arrival process and the probability of successful/failed transmission in our proposed harvest-then-use protocol, the update cycle for deterministic, general random and exponential energy arrival processes are analyzed.

\subsection{Deterministic Energy Arrival Process}
\begin{theorem}
	For a deterministic energy arrival process, the update cycle pmf is given by
	\begin{equation} \label{det_UC_pmf}
	\begin{aligned}
	\myprobability{\Tuc\! =\! k} &\!=\!
	(1-\Pout) \Poutk{n-1+m\hat{n}}, \
	\!\!k \!=\! \left(\!\!\frac{\Esen + n \Etx}{\rho}\!\!\right) \!+ \!n\!+\!1\!+\! m \!\left(\!\frac{\Esen + \hat{n}\Etx}{\rho} \!+\! (\hat{n}\!+\!1)\!\right)\!\!,
	\end{aligned}
	\end{equation}
	where $n = 1,2,... \hat{n}, m = 0,1,2,...$, and $\hat{n}$ is given in \eqref{just_hat_n}.
\end{theorem}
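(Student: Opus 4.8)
The plan is to exploit Lemma~1, which guarantees that for a deterministic energy arrival process the battery energy returns to zero immediately after every TB, whether successful or failed. Consequently each update cycle, which by definition runs from the STB at $\teventsuc{j}$ to the next STB at $\teventsuc{j+1}$, begins from a zero-energy state and decomposes into a sequence of statistically independent \emph{sensing cycles}: $m\ge 0$ consecutive \emph{failed} sensing cycles (in which the packet is dropped because all allowed attempts fail and the window $W$ is reached) followed by exactly one \emph{successful} sensing cycle (in which the $n$-th transmission attempt finally succeeds, with $1\le n\le\hat{n}$). Because $\Esen$ and $\Etx$ are integer multiples of $\rho$ and the battery resets to zero after each TB, the number of EHBs in each harvesting phase is pinned down exactly: $(\Esen+\Etx)/\rho$ EHBs are needed to enable the SB and the first TB, and $\Etx/\rho$ EHBs are needed before each retransmission.

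First I would fix the value of $\hat{n}$. Within the window of $W$ blocks following a SB, the first TB occupies a single block while each further attempt occupies $\Etx/\rho+1$ blocks (harvesting plus transmission), so the number of attempts $n$ must satisfy $1+(n-1)(\Etx/\rho+1)\le W$; solving for $n$ recovers $\hat{n}$ as in \eqref{just_hat_n} and identifies it as the maximum number of transmission attempts per sensing cycle.

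Next I would count block durations by summing the EHBs, the single SB, the first TB, and the subsequent retransmissions. This shows that a successful sensing cycle whose $n$-th attempt succeeds lasts $(\Esen+n\Etx)/\rho+n+1$ blocks, while a failed sensing cycle (all $\hat{n}$ attempts failing) lasts $(\Esen+\hat{n}\Etx)/\rho+\hat{n}+1$ blocks. Concatenating $m$ failed cycles with one terminating successful cycle then yields exactly the value of $k$ stated in \eqref{det_UC_pmf}. For the probability, I would invoke the assumption that TB outcomes are mutually independent with failure probability $\Pout$: each failed sensing cycle contributes a factor $\Poutk{\hat{n}}$ and the terminating successful cycle contributes $\Poutk{n-1}(1-\Pout)$, so the product over the $m$ failed cycles and the final successful cycle gives $(1-\Pout)\Poutk{n-1+m\hat{n}}$, as claimed.

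The main obstacle is the bookkeeping that ties the ``exceeds $\Esen+\Etx$'' (respectively ``$\Etx$'') threshold to an exact integer number of EHBs; this is precisely where Lemma~1 is essential, since forcing the post-TB energy to be exactly zero guarantees that the harvested energy equals the required threshold rather than overshooting it. A secondary point I would address is that the map $(n,m)\mapsto k$ indexes the support without ambiguity, so that distinct pairs give distinct durations (and, should two pairs coincide, the pmf at that $k$ is understood as the sum of the corresponding terms), ensuring the parameterized expression in \eqref{det_UC_pmf} faithfully represents the pmf.
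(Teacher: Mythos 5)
Your proposal is correct and follows essentially the same route as the paper's Appendix~E: decompose the update cycle into a geometric number $m$ of failed sensing cycles followed by one successful cycle, use Lemma~1 (zero post-TB energy) to make all harvesting durations deterministic and the cycles independent, and count blocks. The paper reaches the same result by specializing its general convolution formula to point-mass distributions for $E(\mathcal{E})$, whereas you carry out the block counting directly, but the underlying decomposition and probability computation are identical.
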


\begin{proof}
	See Appendix E.
\end{proof}

\begin{corollary}
	For a deterministic energy arrival process, average of update cycle is given by	
	\begin{equation} \label{det_UC_ave}
	\aveTuc
	= \frac{\Poutk{\hat{n}} }{1-\Poutk{\hat{n}}}
	\left(1+\hat{n} + \frac{\Esen \!+\! \hat{n} \Etx}{\rho}\right) + \frac{\Esen}{\rho} +1 \!+\! (1+\frac{\Etx}{\rho}) \frac{1-\Pout}{1-\Poutk{\hat{n}}}
	\sum\limits_{n=1}^{\hat{n}} \Poutk{n-1} n.
	\end{equation}
\end{corollary}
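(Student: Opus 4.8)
The plan is to compute $\aveTuc = \myexpect{\Tuc}$ directly from the pmf established in Theorem 4, treating the expectation as a double sum over the two indices $n \in \{1,\dots,\hat n\}$ and $m \in \{0,1,2,\dots\}$ that parametrize the support of $\Tuc$. Writing $k_{n,m}$ for the value of $k$ associated with each pair $(n,m)$, I would start from
\begin{equation}
\aveTuc = \sum_{m=0}^{\infty} \sum_{n=1}^{\hat n} k_{n,m}\,(1-\Pout)\,\Poutk{n-1+m\hat n},
\end{equation}
and immediately factor the probability weight as $\Poutk{n-1}\Poutk{m\hat n}$, which decouples the $n$- and $m$-dependence of the weight and sets up the two geometric-type summations to follow.

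The key algebraic step is to exploit the affine structure of $k_{n,m}$ in $m$. Since $k_{n,m} = c(n) + m\,a$, where $c(n) = (\Esen + n\Etx)/\rho + n + 1$ collects the $m$-independent part and $a = (\Esen + \hat n\Etx)/\rho + (\hat n + 1)$ is the increment contributed by each fully dropped update, the double sum splits into a piece proportional to $\sum_{m}\Poutk{m\hat n}$ and a piece proportional to $\sum_{m} m\,\Poutk{m\hat n}$. These are the standard geometric and arithmetic-geometric series, evaluating to $1/(1-\Poutk{\hat n})$ and $\Poutk{\hat n}/(1-\Poutk{\hat n})^2$, respectively. For the inner sum over $n$ I would use the decomposition $c(n) = (\Esen/\rho + 1) + n(\Etx/\rho + 1)$, so that the $n$-sums reduce to $\sum_{n=1}^{\hat n}\Poutk{n-1}$ and $\sum_{n=1}^{\hat n} n\,\Poutk{n-1}$; the former is again geometric, while the latter is left in the closed (unevaluated) form that already appears in the statement.

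The final step is collecting terms, and the crucial simplification is the identity $(1-\Pout)\sum_{n=1}^{\hat n}\Poutk{n-1} = 1-\Poutk{\hat n}$, which cancels the leftover $(1-\Poutk{\hat n})$ factors so that the two series denominators combine cleanly. After this cancellation, the $m\,a$ contribution collapses to $\frac{\Poutk{\hat n}}{1-\Poutk{\hat n}}\bigl(1+\hat n + (\Esen + \hat n\Etx)/\rho\bigr)$ upon recognizing $a$ itself, the $(\Esen/\rho+1)$-part of $c(n)$ collapses to $\Esen/\rho + 1$, and the $n(\Etx/\rho+1)$-part yields the remaining summation term, reproducing \eqref{det_UC_ave} exactly. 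I do not anticipate a genuine obstacle: the argument is entirely bookkeeping, and the only point requiring care is tracking which of the three displayed terms each piece of the expansion feeds into, together with the cancellation above that fuses the geometric-series denominators.
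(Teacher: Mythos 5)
Your derivation is correct: the double sum over $(n,m)$ with weight $(1-\Pout)\Poutk{n-1}\Poutk{m\hat{n}}$ is properly normalized, the affine split $k_{n,m}=c(n)+ma$ is exactly right, and the geometric and arithmetic--geometric series evaluations together with the identity $(1-\Pout)\sum_{n=1}^{\hat{n}}\Poutk{n-1}=1-\Poutk{\hat{n}}$ reproduce \eqref{det_UC_ave} term by term. However, your route is genuinely different from the paper's. The paper (Appendix F) does not sum $k$ against the pmf of Theorem 4; instead it conditions on the number $M$ of failed update attempts within an update cycle, writes $\aveTuc=\myexpect{\myexpect{\Tuc \vert M}}$ with the structural decomposition $\Tuc = E_0+E_1+\cdots+E_M+V_1+\cdots+V_M+M(1+W)+1+\Tstd$, and then plugs in $\myexpect{E_0}=(\Esen+\Etx)/\rho$, $\myexpect{E_i}=\Esen/\rho$, the mean $\bar{V}$ of the post-window harvesting time, the geometric law of $M$, and $\aveTstd$ from Corollary~1. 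What your approach buys is self-containedness and simplicity in the deterministic case, where Theorem 4 hands you an explicit enumeration of the support; what the paper's approach buys is uniformity, since the same conditional-expectation skeleton yields Corollaries 5, 6 and 7 simultaneously, including the general random case where the pmf involves convolutions and a direct $\sum_k k\,\myprobability{\Tuc=k}$ computation would be much less tractable. One small point worth stating explicitly in your write-up: even if distinct pairs $(n,m)$ happened to map to the same value of $k$, the expectation computed as $\sum_{n,m}k_{n,m}\,p_{n,m}$ is unaffected, so no care about collisions in the support is actually needed.
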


\begin{proof}
	See Appendix F.
\end{proof}

\subsection{General Random Energy Arrival Process}
\begin{theorem}
	For a general random energy arrival process, the update cycle pmf is given by
	\begin{equation} \label{gen_UC_pmf}
	\begin{aligned}
	\myprobability{\Tuc \!=\! k} \!=\!\!
	\sum\limits_{m=0}^{\hat{m}}\!
	\!\left(\!
	\zeta(\mathcal{\Esen\!+\!\Etx})
	\underbrace{\ast\zeta(\mathcal{{\Esen}})\ast \cdots \ast\zeta(\mathcal{{\Esen}})}_{m \text{ convolutions}}
	\underbrace{\ast\vartheta\ast \cdots \ast\vartheta}_{m \text{ convolutions}} \ast \iota
	\!\right)
	\!\!(k\!-\!m(1\!+\!W)\!-\!1), k=2,3,....
	\end{aligned}
	\end{equation}
\noindent
where $\hat{m}=\left\lfloor \frac{k-2}{W+1} \right\rfloor$, and functions $\zeta(\mathcal{E},i)$, $\iota(i)$ and $\vartheta(i)$ are given by
\begin{subequations}
	\begin{align}
	&\zeta(\mathcal{E},i) = G_{i-1}(\mathcal{E}) -G_{i}(\mathcal{E}),\\
	&\iota(i) = \Psuc \myprobability{\Tstd =i},\\
	&\begin{aligned}
	&\vartheta(i) = \Pout \left(G_{W+i-2}(\Etx)-G_{W+i-1}(\Etx)\right)  \\
	&+
	\!\sum\limits_{l=2}^{W} \! \sum\limits_{n=2}^{l} \!\Poutk{n}
	\left(G_{l-n-1}((n-1)\Etx)\!-\!G_{l-n}((n-1)\Etx)\right)
	\left(G_{W+i-l-1}(\Etx)\!-\!G_{W+i-l}(\Etx)\right).
	\end{aligned}
	\end{align}
\end{subequations}
$\myprobability{\Tstd=i}$, $\Psuc$ and $G_i(\mathcal{E})$ are given in
\eqref{gen_STD_pmf}, \eqref{Psuc_gen} and \eqref{defi_G}, respectively.
\end{theorem}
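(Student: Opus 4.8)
The plan is to exploit the regenerative structure of the protocol at successful transmissions and to decompose a single update cycle into a random number $m\ge 0$ of \emph{failed sensing rounds} followed by exactly one \emph{successful sensing round}. By Lemma~1 the residual energy immediately after every TB is in steady state with pdf $g$, and by the associated equilibrium-renewal property the overshoot left after the harvesting process first exceeds any threshold is again distributed as $g$; together with the assumed independence of success/failure of distinct TBs, this makes the successive rounds statistically independent and identically behaved, each failing with probability $1-\Psuc$ and succeeding with probability $\Psuc$. The cycle length is then the sum of the durations of these rounds, so its pmf is the discrete convolution of the per-round duration pmfs, summed over $m$. This is exactly the outer sum $\sum_{m=0}^{\hat m}$ and the convolution of $2m+2$ factors appearing in \eqref{gen_UC_pmf}.

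Next I would identify the building-block pmfs. The number of harvesting blocks needed to accumulate an amount $\mathcal{E}$, starting from a post-TB residual of law $g$, has pmf $\zeta(\mathcal{E},i)=G_{i-1}(\mathcal{E})-G_{i}(\mathcal{E})$, since $G_i(\mathcal{E})$ from \eqref{defi_G} is precisely the probability that after $i$ harvesting blocks the energy is still below $\mathcal{E}$. The pathwise identity that crossing $\Etx$ consecutively $n-1$ times, each time consuming $\Etx$ at an FTB, uses the same number of harvesting blocks as crossing $(n-1)\Etx$ once lets me compress the internal harvesting of a round with $n$ transmissions into the single pmf $\zeta((n-1)\Etx,\cdot)$; this is the combinatorial core of both Theorem~2 and the functions here, and it is \emph{exact} because it is a deterministic identity between block counts. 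The successful round contributes its sensing-to-STB span, which is the update age of \eqref{gen_STD_pmf} reweighted by the probability $\Psuc$ that the round succeeds, giving $\iota(i)=\Psuc\,\myprobability{\Tstd=i}$. A failed round occupies the fixed $1+W$ blocks of one SB plus a full retransmission window (hence the shift $m(1+W)+1$, the trailing $+1$ being the SB of the successful round), while its variable part is the harvesting that continues past the window; I would assemble this into $\vartheta$ by summing over the number $n$ of FTBs, the block $l$ of the last FTB (contributing $\zeta((n-1)\Etx,l-n)$), and the further harvesting toward $\Etx$ that spills over the window (contributing $G_{W+i-l-1}(\Etx)-G_{W+i-l}(\Etx)$), with the single-FTB case $n=1$ separated out as the leading term.

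The delicate point, and the step I expect to be the main obstacle, is the energy bookkeeping across the window boundary. When a round fails, the node is already mid-harvest toward the $\Etx$ that a further retransmission would have needed, and this harvested energy is not discarded: it carries over toward the $\Esen+\Etx$ required to arm the next sensing operation. I would therefore split the setup harvest of every round after the first into its $\Etx$ portion, which overlaps the preceding window tail and is folded into $\vartheta$, and its remaining $\Esen$ portion, which lies entirely past the window and appears as the separate factor $\zeta(\Esen,\cdot)$. Since the first round has no preceding window, its setup harvest is the undivided $\zeta(\Esen+\Etx,\cdot)$; this is what produces the asymmetry between the single $\zeta(\Esen+\Etx)$ and the $m$ copies of $\zeta(\Esen)$ and $\vartheta$ in \eqref{gen_UC_pmf}. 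Justifying that these pieces combine by ordinary convolution is precisely where the renewal property of Lemma~1 is needed: it must guarantee that the overshoot after the $\Etx$ crossing is again distributed as $g$ and decoupled from the subsequent $\Esen$ harvest, so that the two harvesting times enter as independent summands. Finally I would pin down $\hat m=\lfloor (k-2)/(W+1)\rfloor$ by requiring the shifted argument $k-m(1+W)-1$ to remain in the support of the convolution, and read off \eqref{gen_UC_pmf}.
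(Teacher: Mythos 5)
Your proposal is correct and follows essentially the same route as the paper's Appendix E: the same decomposition of a cycle into $m$ failed sensing rounds plus one successful round, the same use of Lemma~1's equilibrium overshoot distribution to justify independence and ordinary convolution, the same ``virtual'' consumption of $\Etx$ that splits the post-failure setup harvest into the $\vartheta$ tail and a separate $\zeta(\Esen,\cdot)$ factor (explaining the lone $\zeta(\Esen+\Etx)$ for the first round), and the same conditioning on the number $n$ of FTBs and the position $l$ of the last FTB to assemble $\vartheta$. No gaps worth noting.
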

\begin{proof}
	See Appendix E.
\end{proof}

\begin{corollary}
	For a general random energy arrival process, average update cycle is given by
	\begin{equation} \label{gen_UC_ave}
	\aveTuc
	={\frac{1-\Psuc}{\Psuc}} \left({\frac{\Esen}{\rho} + \bar{V} + W + 1}\right) + \frac{\Esen + \Etx}{\rho} + \aveTstd +1,
	\end{equation}
	where $\Psuc$ and $\aveTstd$ are respectively given in \eqref{Psuc_gen} and \eqref{gen_STD_ave}, and
	\begin{equation}
	\begin{aligned}
	\bar{V}
	&=\frac{\Pout}{1-\Psuc} \left(\frac{\Etx}{\rho} - \sum\limits_{i=0}^{W-2}i \left(G_{i-1}(\Etx)-G_{i}(\Etx)\right) - \!(W\!-\!1) \!\left({\!1\!-}\!\!\sum\limits_{i=0}^{W-2}\!\left(G_{i-1}(\Etx)-G_{i}(\Etx)\right)\! \right)\!\right)\\
	&+\frac{1}{1-\Psuc}
	\sum\limits_{l=2}^{W} \sum\limits_{n=2}^{l}
	\left(\Pout\right)^n \left(G_{l-n-1}((n-1)\Etx)-G_{l-n}((n-1)\Etx)\right)\times\\
	&\left(\frac{\Etx}{\rho} - \sum\limits_{i=0}^{W-l-1}i\left(G_{i-1}(\Etx)-G_{i}(\Etx)\right) - (W-l) \left({1-}\sum\limits_{i=0}^{W-l-1}\left(G_{i-1}(\Etx)-G_{i}(\Etx)\right) \right)\right).
	\end{aligned}
	\end{equation}
\end{corollary}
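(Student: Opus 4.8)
The plan is to read the update--cycle pmf \eqref{gen_UC_pmf} as a geometric mixture over the number of dropped packets $m\ge 0$ and to extract $\aveTuc=\myexpect{\Tuc}$ as its first moment, term by term. Write $\Psi_m=\zeta(\Esen+\Etx)\ast\zeta(\Esen)^{\ast m}\ast\vartheta^{\ast m}\ast\iota$ for the unshifted convolution (the superscript $\ast m$ denoting $m$-fold convolution, and $\zeta(\mathcal E)$ the sequence $i\mapsto\zeta(\mathcal E,i)$), so that \eqref{gen_UC_pmf} reads $\myprobability{\Tuc=k}=\sum_{m\ge 0}\Psi_m\!\left(k-m(1+W)-1\right)$; the cutoff $\hat m=\lfloor(k-2)/(W+1)\rfloor$ is automatic because $\Psi_m$ vanishes below its support. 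Since every term is nonnegative, I may interchange the sums over $k$ and $m$ and compute $\aveTuc=\sum_{m\ge 0}\sum_k k\,\Psi_m(k-m(1+W)-1)$.

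For a nonnegative sequence $\phi$ put $M_0(\phi)=\sum_i\phi(i)$ and $M_1(\phi)=\sum_i i\,\phi(i)$. I will use the elementary identities $M_0(\phi\ast\psi)=M_0(\phi)M_0(\psi)$, $M_1(\phi\ast\psi)=M_1(\phi)M_0(\psi)+M_0(\phi)M_1(\psi)$, and $M_1(\phi(\cdot-c))=M_1(\phi)+c\,M_0(\phi)$. Since $\zeta(\mathcal E,\cdot)$ telescopes to total mass $1$, $M_0(\iota)=\Psuc$ from $\iota(i)=\Psuc\myprobability{\Tstd=i}$, and $M_0(\vartheta)=1-\Psuc$ (argued below), the mixing weights are $M_0(\Psi_m)=\Psuc(1-\Psuc)^m$, and the first--moment rule gives
\[
\frac{M_1(\Psi_m)}{M_0(\Psi_m)}=M_1(\zeta(\Esen+\Etx))+m\,M_1(\zeta(\Esen))+m\,\frac{M_1(\vartheta)}{1-\Psuc}+\frac{M_1(\iota)}{\Psuc}.
\]
Here $M_1(\iota)/\Psuc=\aveTstd$ is immediate from \eqref{gen_STD_ave}, and I set $\bar V=M_1(\vartheta)/(1-\Psuc)$, identifying it with the stated closed form at the end.

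The one non--mechanical ingredient is the harvesting identity $\sum_{i\ge 0}G_i(\mathcal E)=\mathcal E/\rho$, which produces the clean terms $M_1(\zeta(\Esen+\Etx))=(\Esen+\Etx)/\rho$ and $M_1(\zeta(\Esen))=\Esen/\rho$: indeed $M_1(\zeta(\mathcal E,\cdot))=\sum_i i\,(G_{i-1}(\mathcal E)-G_i(\mathcal E))=\sum_{i\ge 0}G_i(\mathcal E)$ by Abel summation. I expect this to be the crux, and I would prove it by a renewal argument. By Lemma~\ref{L1} the post--TB energy density $g(\epsilon)=\frac1\rho(1-F(\epsilon))$ is exactly the equilibrium (forward--recurrence) distribution of the renewal process whose increments have law $f$ and mean $\rho$; hence, starting the battery at $g$ and adding i.i.d.\ $f$--harvests $X_k$, the overshoot past a level $\mathcal E$ is again $g$--distributed, so it has the same mean as the starting level. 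Writing $S_i=B_0+\sum_{k\le i}X_k$ with $B_0\sim g$ and $N=\min\{i:S_i>\mathcal E\}$, Wald's identity $\myexpect{S_N}=\myexpect{B_0}+\rho\,\myexpect{N}$ together with $\myexpect{S_N}=\mathcal E+\myexpect{S_N-\mathcal E}=\mathcal E+\myexpect{B_0}$ yields $\myexpect{N}=\sum_{i\ge 0}G_i(\mathcal E)=\mathcal E/\rho$. The same identity, applied to the $G_i(\Etx)$ factors inside $\vartheta$, is what delivers the $\Etx/\rho$ appearing in $\bar V$.

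It remains to assemble. Adding the shift $c_m=m(1+W)+1$ gives $M_1(\Psi_m(\cdot-c_m))=\left[\frac{\Esen+\Etx}{\rho}+\aveTstd+1+m\left(\frac{\Esen}{\rho}+\bar V+W+1\right)\right]\Psuc(1-\Psuc)^m$, and summing the two geometric series $\Psuc\sum_{m\ge 0}(1-\Psuc)^m=1$ and $\Psuc\sum_{m\ge 0}m(1-\Psuc)^m=\frac{1-\Psuc}{\Psuc}$ reproduces \eqref{gen_UC_ave} exactly. Two bookkeeping points close the argument. First, $M_0(\vartheta)=1-\Psuc$ comes for free from the requirement that \eqref{gen_UC_pmf} be a pmf, i.e.\ $\Psuc\sum_{m\ge 0}M_0(\vartheta)^m=1$. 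Second, matching $\bar V=M_1(\vartheta)/(1-\Psuc)$ to its displayed form is a routine re--indexing: writing each $\vartheta$ term as $\Pout$ or $\Poutk{n}$ times $\zeta(\Etx,W+i-l)=G_{W+i-l-1}(\Etx)-G_{W+i-l}(\Etx)$, substituting $\kappa=W+i-l$, and using $\sum_{\kappa\ge W-l}\kappa\,\zeta(\Etx,\kappa)=\frac{\Etx}{\rho}-\sum_{i=0}^{W-l-1}i\,\zeta(\Etx,i)$ together with the companion tail--mass identity $\sum_{\kappa\ge W-l}\zeta(\Etx,\kappa)=1-\sum_{i=0}^{W-l-1}\zeta(\Etx,i)$. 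These reductions are straightforward once the $\mathcal E/\rho$ identity is in hand.
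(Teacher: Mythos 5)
Your proposal is correct and follows essentially the same route as the paper's Appendix F: both decompose the update cycle over the number $m$ of failed sensing events, exploit the geometric weights $\myprobability{M=m}=\Psuc\left(1-\Psuc\right)^m$ together with the independence of the convolution factors to express the conditional mean as a sum of component means ($\myexpect{E_0}$, $m\,\myexpect{E_i}$, $m\bar V$, $m(W+1)$, $\aveTstd$, $1$), and then sum the geometric series. The only substantive addition is your renewal/Wald argument establishing $\sum_{i\ge 0}G_i(\mathcal{E})=\mathcal{E}/\rho$, i.e.\ $\myexpect{E(\mathcal{E})}=\mathcal{E}/\rho$, a fact the paper asserts without proof in its equation (F.2).
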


\begin{proof}
	See Appendix F.
\end{proof}

\subsection{Exponential Energy Arrival Process}
\begin{theorem}
	For an exponential energy arrival process, the update cycle pmf is given by
	\begin{equation} \label{exp_UC_pmf}
	\begin{aligned}
	\myprobability{\Tuc = k} =
	\sum\limits_{m=0}^{\hat{m}}
	\left(
	\zeta((m+1) {\Esen}+ {\Etx})
	\underbrace{\ast\vartheta\ast \cdots \ast\vartheta}_{m \text{ convolutions}} \ast \iota
	\right)
	\!\!(k-m(1+W)-1),\ k=2,3,....
	\end{aligned}
	\end{equation}
where $\hat{m}=\left\lfloor \frac{k-2}{W+1} \right\rfloor$, and functions $\zeta(\mathcal{E},i)$, $\iota(i)$ and $\vartheta(i)$ are given by
\begin{subequations}
	\begin{align}
	&\zeta(\mathcal{E},i) = \Pois{i}{\mathcal{E}/\rho},\\
	&\iota(i) = \Psuc \myprobability{\Tstd =i},\\
	&\begin{aligned}
	&\vartheta(i) = \Pout \Pois{W+i-1}{\Etx / \rho} \\
	&\hspace{0.8cm}+
	\sum\limits_{l=2}^{W}  \sum\limits_{n = 2}^{l} \Poutk{n}
	\Pois{l-n}{(n-1)\Etx/\rho}
	\Pois{W+i-l}{\Etx /\rho},
	\end{aligned}
	\end{align}
\end{subequations}
and $\myprobability{\Tstd=i}$ and $\Psuc$ are given in
\eqref{Tstd_exp} and \eqref{Psuc_exp}, respectively.
\end{theorem}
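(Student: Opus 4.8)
The plan is to obtain this theorem as a specialization of the general random result in Theorem~5, in exactly the same way that the exponential update-age pmf (Theorem~3) specializes the general one (Theorem~2). Since Theorem~5 is already established, it suffices to evaluate each of its building blocks --- $\zeta$, $\iota$, $\vartheta$, and $\Psuc$ --- under the exponential arrival model, and then simplify the leading string of $\zeta$ factors in the convolution \eqref{gen_UC_pmf}.

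First I would pin down the steady-state energy density. For an exponential arrival with mean $\rho$ we have $f(\epsilon) = \frac{1}{\rho} e^{-\epsilon/\rho}$ and $F(\epsilon) = 1 - e^{-\epsilon/\rho}$, so by Lemma~1 the residual-energy density is $g(\epsilon) = \frac{1}{\rho}(1 - F(\epsilon)) = f(\epsilon)$; the memoryless property makes $g$ and $f$ coincide. Consequently the kernel in \eqref{defi_G}, $g \star \underbrace{f \star \cdots \star f}_{i}$, is the $(i+1)$-fold self-convolution of an exponential, i.e.\ an Erlang$(i+1,1/\rho)$ density. Integrating it gives the Erlang cdf, which I would rewrite via the standard Erlang--Poisson relation as $G_i(x) = 1 - \sum_{j=0}^{i} \Pois{j}{x/\rho}$. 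Subtracting consecutive terms then yields the single identity that drives the whole proof,
\begin{equation}
G_{i-1}(x) - G_i(x) = \Pois{i}{x/\rho},
\end{equation}
valid for $i \geq 0$ and consistent with the convention $G_{-1}\equiv 1$. Substituting this identity term by term into the general $\zeta$, $\vartheta$, $\iota$ and into $\Psuc$ from \eqref{Psuc_gen} immediately reproduces the exponential-case expressions claimed here and in \eqref{Psuc_exp}; in particular $\zeta(\mathcal{E},i) = \Pois{i}{\mathcal{E}/\rho}$ and $\myprobability{\Tstd=i}$ collapses to \eqref{Tstd_exp}.

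The remaining --- and genuinely new --- step is to collapse the string of $\zeta$ factors. In Theorem~5 the front of the convolution reads $\zeta(\Esen+\Etx) \ast \underbrace{\zeta(\Esen) \ast \cdots \ast \zeta(\Esen)}_{m}$, i.e.\ $m+1$ separate factors. Because each factor is now a Poisson pmf and Poisson pmfs are closed under convolution, $(\Pois{\cdot}{\lambda_1} \ast \Pois{\cdot}{\lambda_2})(i) = \Pois{i}{\lambda_1+\lambda_2}$, these $m+1$ factors merge into a single Poisson pmf with parameter $((m+1)\Esen+\Etx)/\rho$, namely $\zeta((m+1)\Esen+\Etx)$. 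This is precisely the structural difference between \eqref{gen_UC_pmf} and \eqref{exp_UC_pmf}; inserting it into the convolution of Theorem~5 delivers \eqref{exp_UC_pmf}.

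I expect the main obstacle to be bookkeeping rather than anything conceptual: I must verify $G_i(x) = 1 - \sum_{j=0}^{i}\Pois{j}{x/\rho}$ including the boundary index $i=-1$, and check that the shifted indices ($W+i-1$, $l-n$, $W+i-l$) inside $\vartheta$ transcribe correctly so that the Poisson arguments line up with their general-case counterparts. Once the identity and the Poisson-additivity merge are in place, what is left is a direct substitution into an already-proven formula.
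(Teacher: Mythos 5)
Your proposal is correct and matches the paper's own route: the paper likewise specializes the general update-cycle formula by noting that for exponential arrivals $E(\mathcal{E})$ is Poisson with parameter $\mathcal{E}/\rho$ (equivalently, $G_{i-1}(\mathcal{E})-G_i(\mathcal{E})=\Pois{i}{\mathcal{E}/\rho}$, stated in Appendix C) and then invokes the sum property of Poisson distributions to merge the $m+1$ leading $\zeta$ factors into the single $\zeta((m+1)\Esen+\Etx)$ term. Your Erlang-cdf derivation of the key identity is just a more explicit justification of the same fact.
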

\begin{proof}
	See Appendix E.
\end{proof}

\begin{corollary}
	For an exponential energy arrival process, average update cycle is given by
	\begin{equation} \label{exp_UC_ave}
	\aveTuc
	={\frac{1-\Psuc}{\Psuc}} \left({\frac{\Esen}{\rho} + \bar{V} + W + 1}\right) + \frac{\Esen + \Etx}{\rho} + \aveTstd +1,
	\end{equation}
	where $\aveTstd$ and $\Psuc$ are given in \eqref{exp_STD_ave} and \eqref{Psuc_exp}, and
	\begin{equation}
	\begin{aligned}
	&\bar{V}
	= \frac{{\Pout}}{{1-\Psuc}}
	\left( {\frac{\Etx}{\rho}} - \sum\limits_{i=0}^{W-2} i \Pois{i}{\frac{\Etx}{\rho}} - (W-1) \left(1-\sum\limits_{i=0}^{W-2} \Pois{i}{\frac{\Etx}{\rho}}\right) \right)
	\!+\! \frac{1}{{1\!-\!\Psuc}} \times \\
	&\sum\limits_{l=2}^{W} \sum\limits_{n = 2}^{l}
	\Poutk{n} {\Pois{l\!-\!n}{(n\!-\!1)\frac{\Etx}{\rho}}} \!\!
	\left( \!{\frac{\Etx}{\rho}} \!-\!\!\!\!\! \sum\limits_{i=0}^{W\!-\!l\!-\!1}\!\! i \Pois{i}{\frac{\Etx}{\rho}} \!-\! (W\!-\!l) \!\left(\!\!1\!-\!\!\sum\limits_{i=0}^{W\!-\!l\!-\!1} \!\!\Pois{i}{\frac{\Etx}{\rho}}\!\right)\! \!\right)\!.
	\end{aligned}
	\end{equation}
\end{corollary}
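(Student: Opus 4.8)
The plan is to obtain \eqref{exp_UC_ave} as the exponential specialization of the general-random update-cycle result, falling back on the renewal decomposition that underlies it when a self-contained argument is wanted. First I would note that the asserted formula is \emph{identical in structure} to the general-random average \eqref{gen_UC_ave}: both read
\[
\aveTuc = \frac{1-\Psuc}{\Psuc}\left(\frac{\Esen}{\rho} + \bar{V} + W + 1\right) + \frac{\Esen+\Etx}{\rho} + \aveTstd + 1,
\]
and differ only in that $\aveTstd$, $\Psuc$ and $\bar{V}$ take their exponential forms. Since the exponential $\aveTstd$ and $\Psuc$ are already furnished by \eqref{exp_STD_ave} and \eqref{Psuc_exp}, the sole remaining task is to show that the general $\bar{V}$ accompanying \eqref{gen_UC_ave} reduces to the exponential $\bar{V}$ stated here.

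The crux is to specialize the functions $G_i$ of \eqref{defi_G}. For exponential arrivals $f(\epsilon)=\tfrac{1}{\rho}e^{-\epsilon/\rho}$, Lemma~1 gives $g(\epsilon)=\tfrac{1}{\rho}(1-F(\epsilon))=\tfrac{1}{\rho}e^{-\epsilon/\rho}=f(\epsilon)$ by memorylessness, so the $(i{+}1)$-fold convolution $g\star f\star\cdots\star f$ is an Erlang density of shape $i{+}1$ and rate $1/\rho$. Integrating its cdf yields $G_i(x)=1-\sum_{j=0}^{i}\Pois{j}{x/\rho}$, whence the difference telescopes to a single Poisson term,
\[
G_{i-1}(x)-G_i(x)=\Pois{i}{x/\rho}.
\]
Applying this identity with $x=\Etx$ and $x=(n-1)\Etx$ sends every bracket of the general $\bar{V}$ into the corresponding Poisson bracket displayed in the corollary, completing the proof.

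For an argument that does not invoke the general case, I would instead compute $\aveTuc=\sum_{k\ge2}k\,\myprobability{\Tuc=k}$ directly from the pmf \eqref{exp_UC_pmf}. That pmf is a geometric mixture over $m$ of convolutions of the defective kernels $\zeta,\vartheta,\iota$, where $m$ counts the sensing rounds whose packets are dropped before the first successful update; thus $m$ is geometric with success probability $\Psuc$ and $\myexpect{M}=(1-\Psuc)/\Psuc$. Because means of convolved (sub-)distributions add and the geometric weighting factors out, the cycle splits into $\myexpect{M}$ dropped rounds of mean length $\tfrac{\Esen}{\rho}+\bar{V}+W+1$ and one successful round of mean length $\tfrac{\Esen+\Etx}{\rho}+\aveTstd+1$, with $\bar{V}$ the mean number of harvesting blocks consumed inside the $W$-block retransmission window of a dropped round. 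Evaluating that mean by conditioning on the terminating block index, with $\Pois{i}{\Etx/\rho}$ supplying the per-attempt harvest counts, reproduces the displayed $\bar{V}$.

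The hard part will be the bookkeeping for $\bar{V}$: one must separate the immediate post-sensing transmission from the subsequent retransmissions, carry the residual energy into each attempt through the steady-state law of Lemma~1, and truncate correctly at the window boundary (the $(W-1)$ and $(W-l)$ tail terms). Confirming that the defective kernels carry the right total mass and that the shift $k-m(1+W)-1$ is absorbed correctly when passing to expectations is where off-by-one and sign slips are most likely; the specialization route avoids these entirely by inheriting the already-established general formula.
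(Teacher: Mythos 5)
Your proposal is correct and matches the paper's own argument: Appendix F derives \eqref{exp_UC_ave} exactly via your fallback route (conditioning on the geometric number $M$ of dropped sensing rounds, splitting the cycle into $\myexpect{M}$ failed rounds of mean $\frac{\Esen}{\rho}+\bar V+W+1$ plus one successful round, and computing $\bar V$ from the conditional law of $\tilde V$), and the exponential case is obtained there by the very substitution $G_{i-1}(x)-G_i(x)=\Pois{i}{x/\rho}$ (the paper's Eq.~(C.5)) that drives your primary specialization route. No gaps.
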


\begin{proof}
	See Appendix F.
\end{proof}

	Similar with the case of update age,
	different energy arrival processes induces different pmfs and average values of update cycle.
	However,
	for benchmarking with the existing studies on delay without imposing a constraint on the maximum allowable retransmission time,
	when we consider removing the constraint of retransmission, i.e., $W\rightarrow \infty$, so that all sensed information is eventually transmitted to the sink, the average update cycle is the same under different energy arrival processes as in Corollary~8.

\begin{corollary}	
	For a deterministic or general random energy arrival process, $\aveTuc$ decreases with $W$, and as $W$ grows large, the asymptotic lower bound of $\aveTuc$ is independent with energy arrival distribution and is given by
	\begin{equation}
	\lim\limits_{W\rightarrow \infty} \aveTuc = 2+ \frac{\Esen + \Etx}{\rho} + \frac{\Pout}{1-\Pout} \left(\frac{\Etx}{\rho} +1\right).
	\end{equation}
\end{corollary}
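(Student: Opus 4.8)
The plan is to treat the two energy-arrival models separately: the deterministic case collapses to a closed form from which both claims are read off, while the general case is settled through the mean expression \eqref{gen_UC_ave} together with Corollary~4. For the deterministic process I would first simplify \eqref{det_UC_ave}. Writing $q=\Pout$, $a=1+\Etx/\rho$ and $c=\Esen/\rho+1$, evaluating the finite sum $\sum_{n=1}^{\hat n}q^{n-1}n$ and cancelling terms collapses the two parts of \eqref{det_UC_ave} to
\begin{equation}
\aveTuc = \frac{\Esen/\rho+1}{1-\Poutk{\hat n}} + \frac{1+\Etx/\rho}{1-\Pout}.
\end{equation}
Since $\hat n$ is nondecreasing in $W$ and $\Poutk{\hat n}$ is strictly decreasing in $\hat n$, the first term decreases and the second is constant, giving monotonicity; letting $\Poutk{\hat n}\to 0$ yields $\Esen/\rho+1+(1+\Etx/\rho)/(1-\Pout)$, which rearranges to the stated limit.

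For the general random process I would work from the structure $\aveTuc=\frac{1-\Psuc}{\Psuc}\!\left(\Esen/\rho+\bar V+W+1\right)+\frac{\Esen+\Etx}{\rho}+\aveTstd+1$ of \eqref{gen_UC_ave} and show the leading ``drop-penalty'' term vanishes. First, $\Psuc\to 1$, and moreover $1-\Psuc$ decays geometrically in $W$ (a drop requires a number of independent transmission attempts growing linearly in $W$ to all fail), so $\frac{1-\Psuc}{\Psuc}(\Esen/\rho+W+1)\to 0$. Next, the $\bar V$ contribution is exactly $\frac{1}{\Psuc}(\Pout A_W+B_W)$ once the $\frac{1}{1-\Psuc}$ prefactor cancels. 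Writing $N_{\Etx}$ for the number of EHBs needed to harvest $\Etx$ starting from the steady-state residual energy of Lemma~1, so that $G_{i-1}(\Etx)-G_{i}(\Etx)=\myprobability{N_{\Etx}=i}$, one has $A_W=\Etx/\rho-\myexpect{\min(N_{\Etx},W-1)}$; because the residual pdf $g$ is the equilibrium (stationary-excess) distribution of $f$, renewal theory gives $\myexpect{N_{\Etx}}=\Etx/\rho$, whence $A_W\downarrow 0$. The term $B_W$ tends to $0$ by dominated convergence: its weights $\Poutk{n}\!\left(G_{l-n-1}((n-1)\Etx)-G_{l-n}((n-1)\Etx)\right)$ sum to $\Pout^2/(1-\Pout)$ (read off from \eqref{Psuc_gen} as $W\to\infty$), while each multiplier $\Etx/\rho-\myexpect{\min(N_{\Etx},W-l)}$ is bounded by $\Etx/\rho$ and tends to $0$. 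Hence $\lim_{W\to\infty}\aveTuc=\frac{\Esen+\Etx}{\rho}+\lim_{W\to\infty}\aveTstd+1$, and substituting the value of $\lim\aveTstd$ from Corollary~4 gives the claimed distribution-independent limit. (Intuitively, as $W\to\infty$ no packet is dropped, so each cycle reduces to the harvesting for one sensing operation, expected $\frac{\Esen+\Etx}{\rho}$, plus one update age.)

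The main obstacle is monotonicity for the general model, where the deterministic collapse is unavailable and the two effects of enlarging $W$ compete: $\Psuc$ increases, which shortens cycles, but both failed and successful rounds lengthen. I would attack this through the regenerative representation $\aveTuc=\myexpect{\tau(W)}/\Psuc(W)$ over the i.i.d.\ sensing rounds---each round starting from the common post-TB energy state of Lemma~1---combined with the Wald identity $\Psuc(W)=(1-\Pout)\myexpect{\nu(W)}$ linking the success probability to the expected number of attempts $\nu(W)$ in a round, and a coupling of the window-$W$ and window-$(W+1)$ systems on shared round primitives, under which each round duration is nondecreasing in $W$ and the set of delivered packets only grows. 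Monotonicity then reduces to showing that the marginal time added per newly delivered packet does not exceed the current mean cycle length; the delicate point is that the final harvest of a dropped round is truncated at the window boundary---precisely the correction captured by $\bar V$ and $A_W$---and controlling this truncation is where the real work lies. Since $\aveTuc$ is then monotone decreasing and convergent, its limit is its infimum, which establishes the asymptotic lower bound.
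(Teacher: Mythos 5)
Your proposal is correct in substance but reaches the limit by a genuinely different route than the paper. The paper (Appendix G) argues probabilistically: once $W\to\infty$ every sensed packet is delivered, so a cycle decomposes as $E(\Esen+\Etx)+1+N+E((N-1)\Etx)$ with $N$ geometric, and the expectation is evaluated using the fact that the expected number of EHBs to harvest $\mathcal{E}$ from the steady-state residual of Lemma~1 is $\mathcal{E}/\rho$ — the same renewal-theoretic fact you invoke for $\myexpect{N_{\Etx}}=\Etx/\rho$, and the sole reason the limit is distribution-free. You instead take the analytic limit of the closed-form means. Your deterministic collapse $\aveTuc=\frac{\Esen/\rho+1}{1-\Poutk{\hat n}}+\frac{1+\Etx/\rho}{1-\Pout}$ checks out and is a nice dividend: it is cleaner than \eqref{det_UC_ave} and makes monotonicity in $W$ immediate, which the paper never gets for free. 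For the general case your plan is sound, but the step "$\frac{1-\Psuc}{\Psuc}(W+1)\to 0$ because $1-\Psuc$ decays geometrically" deserves more care than a parenthetical: a drop requires all attempts in the window to fail, and the number of attempts is random, so one needs that the number of EHBs per recharge has light enough tails that the attempt count grows linearly with high probability (this does hold — $\myexpect{e^{-t\xi}}<1$ for $t>0$ gives a large-deviations bound on $\myprobability{\xi_1+\cdots+\xi_n<\Etx}$ — but it is a genuine step, not a triviality). On monotonicity for the general random process, you candidly leave the coupling argument unfinished; note that the paper does no better, offering only a one-sentence verbal sketch, so neither treatment fully establishes that claim. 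Your observation that monotone convergence makes the limit an infimum, hence a legitimate lower bound, is a tidy closing step the paper leaves implicit.
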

\begin{proof}
	See Appendix G.
\end{proof}

\begin{remark} \label{R2}
	\textnormal{
		From the above analytical results, we have that:
		\begin{enumerate}[i)]
			\item
			From Theorems 4 to 6, we know that ${\Tuc}$ is affected by the energy cost of sensing, $\Esen$. A larger $\Esen$ means more EHBs are required to harvest a sufficient amount of energy to perform sensing operation(s) between adjacent STBs.
			\item
			A larger window for retransmissions shorten the average update cycle,
			because allowing more retransmissions increases the chance of having a successful transmission. 
			This might suggest that it is also better to increase $W$ to reduce the update cycle. But increasing $W$ also increases the update age as discussed earlier. Therefore, there is clearly a tradeoff between the two metrics.
		\end{enumerate}		}
	\end{remark}

\section{Numerical Results}
In this section, we present numerical results for the update age and update cycle, using the results in Theorems 1-6 and Corollaries 1-8. 
The typical outdoor range for a wireless sensor is from $75$~m to $100$~m \cite{Micaz}. Hence, we set the distance between the sensor and the sink as $d=90$~m and the path loss exponent for the sensor-sink transmission link as $\lambda= 3$~\cite{Tianqing}. The duration of a time block is $T = 5$~ms\cite{eu2011design}. The noise power at the sink is $\sigma^2 = -100$ dBm~\cite{Ali}. The average harvested power is $10$~mW \cite{Fan_Zhang}, i.e., average harvested energy per time block, $\rho = 50\ \mu$J. Unless otherwise stated, (i) we set the power consumption in each TB, $\Ptx = 40$~mW, i.e., $\Etx = 200\ \mu$J. Note that this includes RF circuit consumption (main consumption) and the actual RF transmit power $\Prf = -5$~dBm\footnote{The values we chose for $\Ptx$ and $\Prf$ are typical for commercial sensor platforms, such as MICAz~\cite{Micaz}.} and (ii) we set the power consumption in each SB as $\Psen = 50$~mW\cite{SensingPower}, i.e., $\Esen = 250\ \mu$J.
{In the following calculations, power and SNR related quantities use a linear scale.}
We assume that a transmission outage from the sensor to the sink occurs when the SNR at the sink $\gamma$, is lower than SNR threshold $\gamma_0 =40$~dB~\cite{Ding}. The outage probability is
\begin{equation}
\Pout = \myprobability{\gamma< \gamma_0}.
\end{equation}
The SNR at the sink is~{\cite{Goldsmith}}
\begin{equation} \label{snr}
\gamma = \frac{\vert h \vert^2 \Prf}{{\Gamma} d^{\lambda} \sigma^2},
\end{equation}
where $h$ is the source-sink channel fading gain,
{$\Gamma = \frac{P_L(d_0)}{d^{\lambda}_0}$, is a path loss factor relative to reference distance $d_0$ of the antenna far
field, and $P_L(d_0)$ is linear-scale path loss, 
which depends on the propagation environment~\cite{Tianqing}.
Following~\cite{Tianqing,Ali}, we assume $\Gamma =1$, for simplicity.}

For the numerical results, we assume that $h$ is block-wise Rayleigh fading.
Using \eqref{snr}, the outage probability can be written as
\begin{equation} \label{real_Pout}
\Pout = 1- \exp\left(- \frac{ d^{\lambda} \sigma^2 \gamma_0}{\Prf}\right).
\end{equation}
By applying \eqref{real_Pout} to the theorems and corollaries in Sections IV and V, we compute the expressions of the pmfs of $\Tstd$ and $\Tuc$ as well as their average values $\aveTstd$ and $\aveTuc$.

\textbf{Pmfs of update age and update cycle with different energy arrival processes:} First, we consider a deterministic energy arrival process with harvested energy in each EHB, $\rho$. Also we consider two special cases of the general random energy arrival process: (i) exponential energy arrival processes with average harvested energy in each EHB, $\rho$ and (ii) random energy arrival processes with gamma distribution~\cite{Luo}, $\mathrm{Gamma}(0.05,1000)$. We term this as the \emph{gamma energy arrival process}, and it is easy to verify that this gamma energy arrival process has the same average harvested energy in each EHB as the deterministic and exponential energy arrival processes.

Figs.~\ref{fig:PMF_det}-\ref{fig:PMF_exp} plot the pmfs of update age, $\Tstd$, and update cycle, $\Tuc$, for the deterministic, gamma and exponential energy arrival process, respectively. The analytical results are plotted using Theorems 1-6,
and we set $W=50$, i.e., the time window for retransmissions is $W-1=49$ time blocks.
In particular, in Fig.~\ref{fig:PMF_gamma} the analytical pmfs of $\Tstd$ and $\Tuc$ for the general random arrival process are obtained using Theorems 2 and 5.
The results in Figs.~\ref{fig:PMF_gamma}-\ref{fig:PMF_exp} also illustrate the importance of the general random energy arrival process, which is used in this work. This is because gamma and exponential energy arrival processes, which have been used in the literature~\cite{Tianqing,Ali,Ding,Luo}, are special cases of the general random energy arrival process.
{\textit{We see that different energy arrival processes result in different pmfs of update age and update cycle.
		Hence, a statistical analysis of the two metrics will provide insight into the design of future~EH~WSNs.}}

In the following figures (Figs. 6-9), we only present the numerical results for the average values of the two delay metrics, which have been presented in Corollaries 1-8.
\begin{figure*}[t]
	\renewcommand{\captionfont}{\small} \renewcommand{\captionlabelfont}{\small}
	\minipage{0.35\textwidth}
	\includegraphics[width=\linewidth]{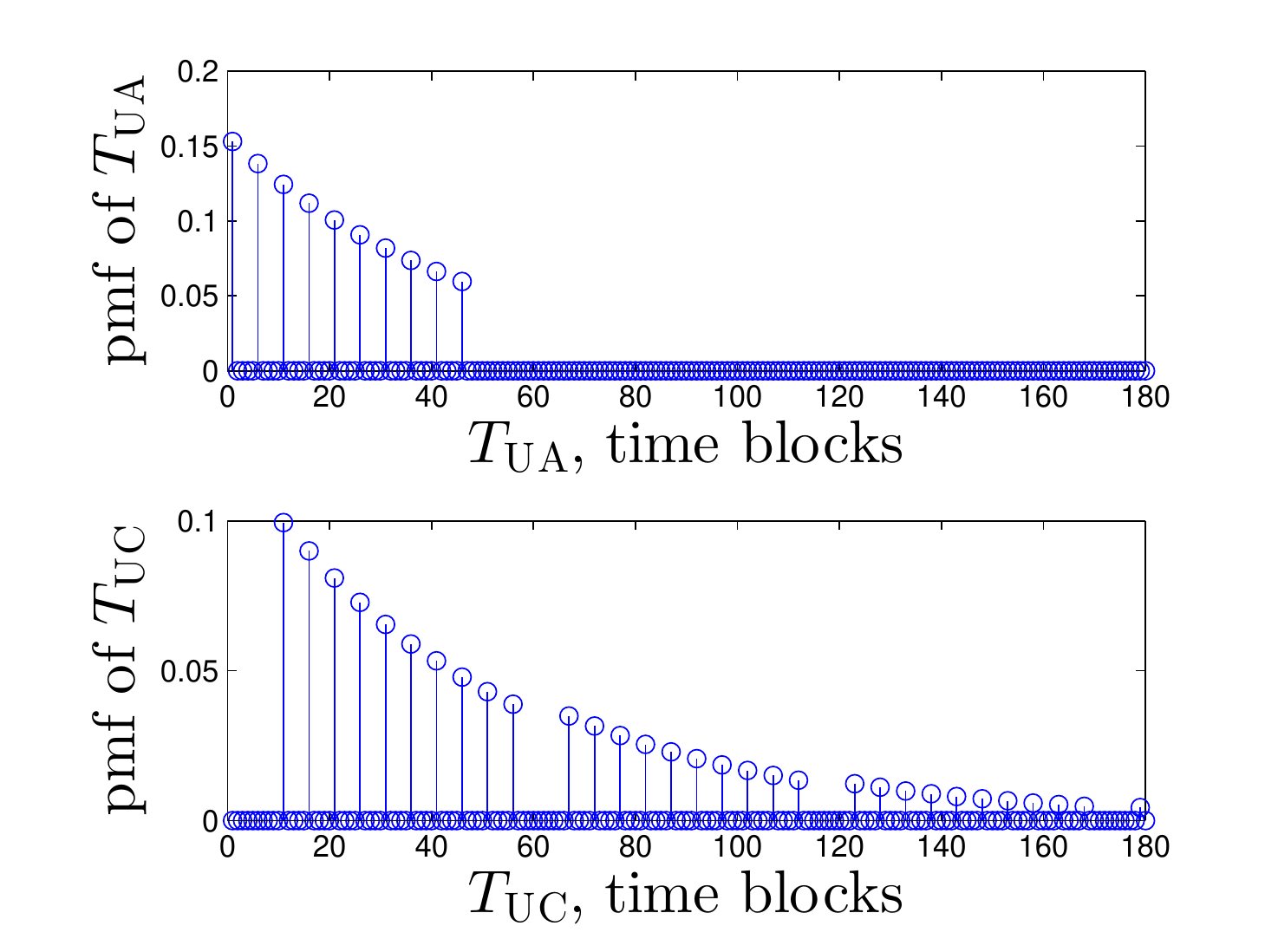}
	\vspace*{-1.3cm}
	\caption{pmfs\! for\! $\Tstd$\! and\! $\Tuc$\! with\! \newline deterministic\! energy\! arrival\! process.}\label{fig:PMF_det}
	\endminipage
		\hspace{-15pt}
	\minipage{0.35\textwidth}
	\includegraphics[width=\linewidth]{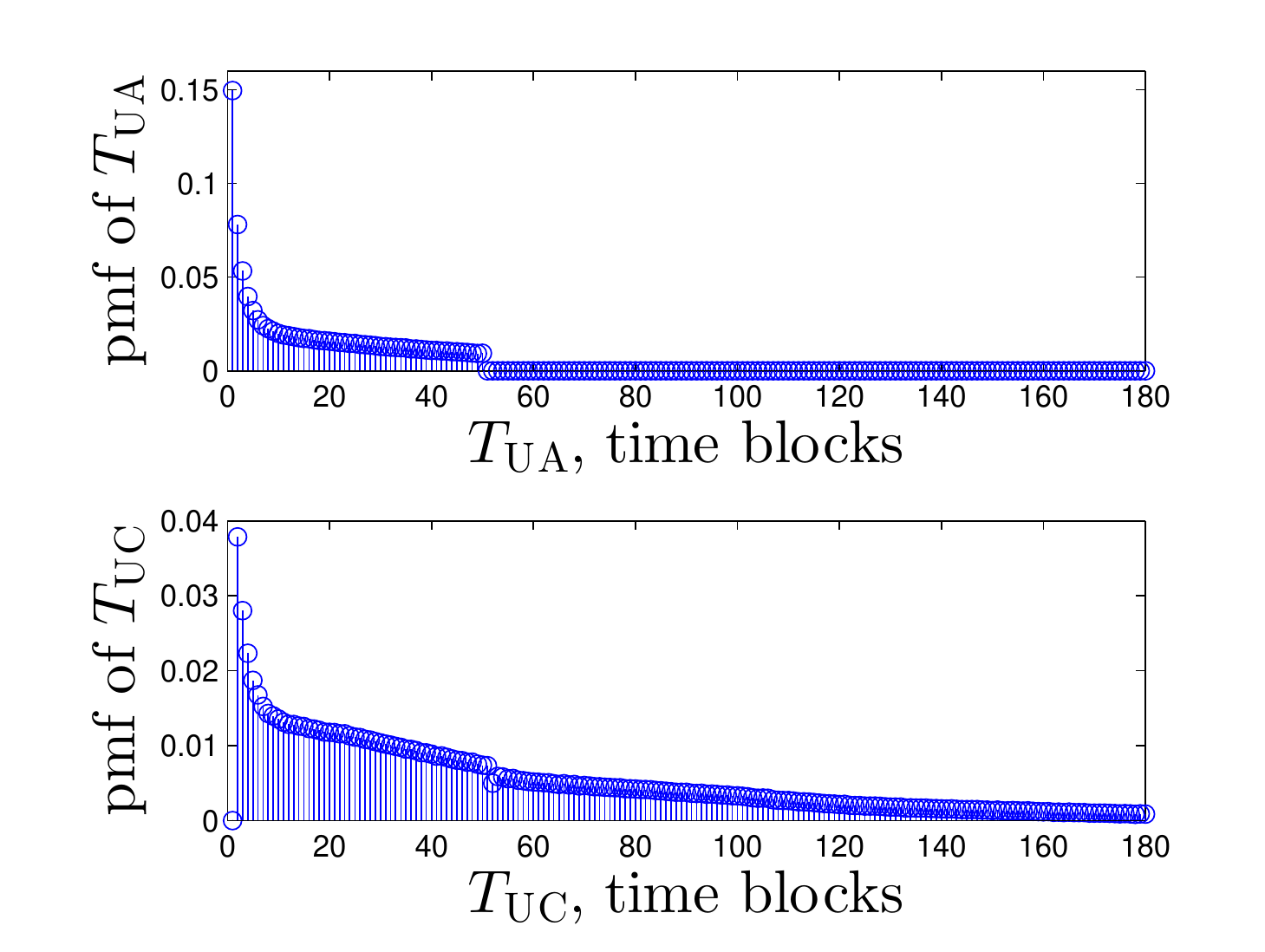}
	\vspace*{-1.3cm}
	\caption{pmfs\! for\! $\Tstd$\! and\! $\Tuc$\! with\! \newline gamma\! energy\! arrival\! process.}\label{fig:PMF_gamma}
	\endminipage
	\hspace{-15pt}	
	\minipage{0.35\textwidth}
	\includegraphics[width=\linewidth]{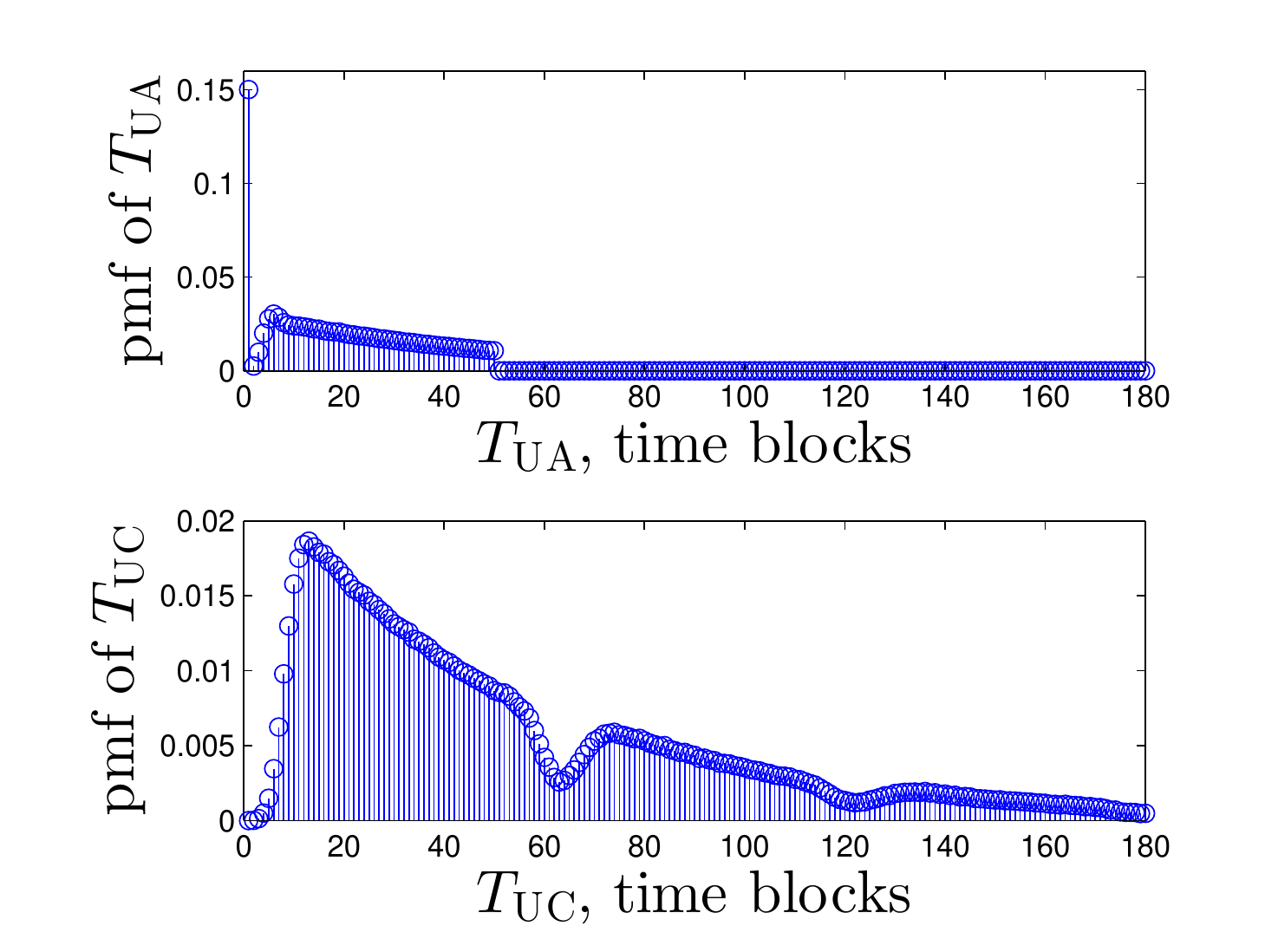}
	\vspace*{-1.3cm}
	\caption{pmfs\! for\! $\Tstd$\! and\! $\Tuc$\! with\! \newline exponential\! energy\! arrival\! process.}\label{fig:PMF_exp}
	\endminipage\\
	\hrulefill
	\vspace*{-0.7cm}
\end{figure*}

\textbf{Average update age and average update cycle with different energy arrival processes:} Figs. \ref{fig:diff_distribution_STD} and \ref{fig:diff_distribution_UC} show the average update age, $\aveTstd$, and the average update cycle, $\aveTuc$, for different $W$, i.e., different time windows for retransmissions, $W-1$, and energy arrival processes. The results in Figs.~\ref{fig:diff_distribution_STD} and \ref{fig:diff_distribution_UC} are generated using Corollaries 1-4 and Corollaries 5-8, respectively. We can see that the different energy arrival models result in almost the same values of the average update age and especially the average update cycle. As the time window for retransmissions increases, the average update age increases monotonically and approaches its analytical upper bound given by Corollary 4, while the average update cycle decreases monotonically and approaches its analytical lower bound given by Corollary 8.
{\textit{Thus, with a smaller time window for retransmissions, the updated status is more fresh, but the update frequency is lower.}}
\begin{figure*}[t]
	\renewcommand{\captionfont}{\small} \renewcommand{\captionlabelfont}{\small}
	\minipage{0.49\textwidth}
	\includegraphics[width=\linewidth]{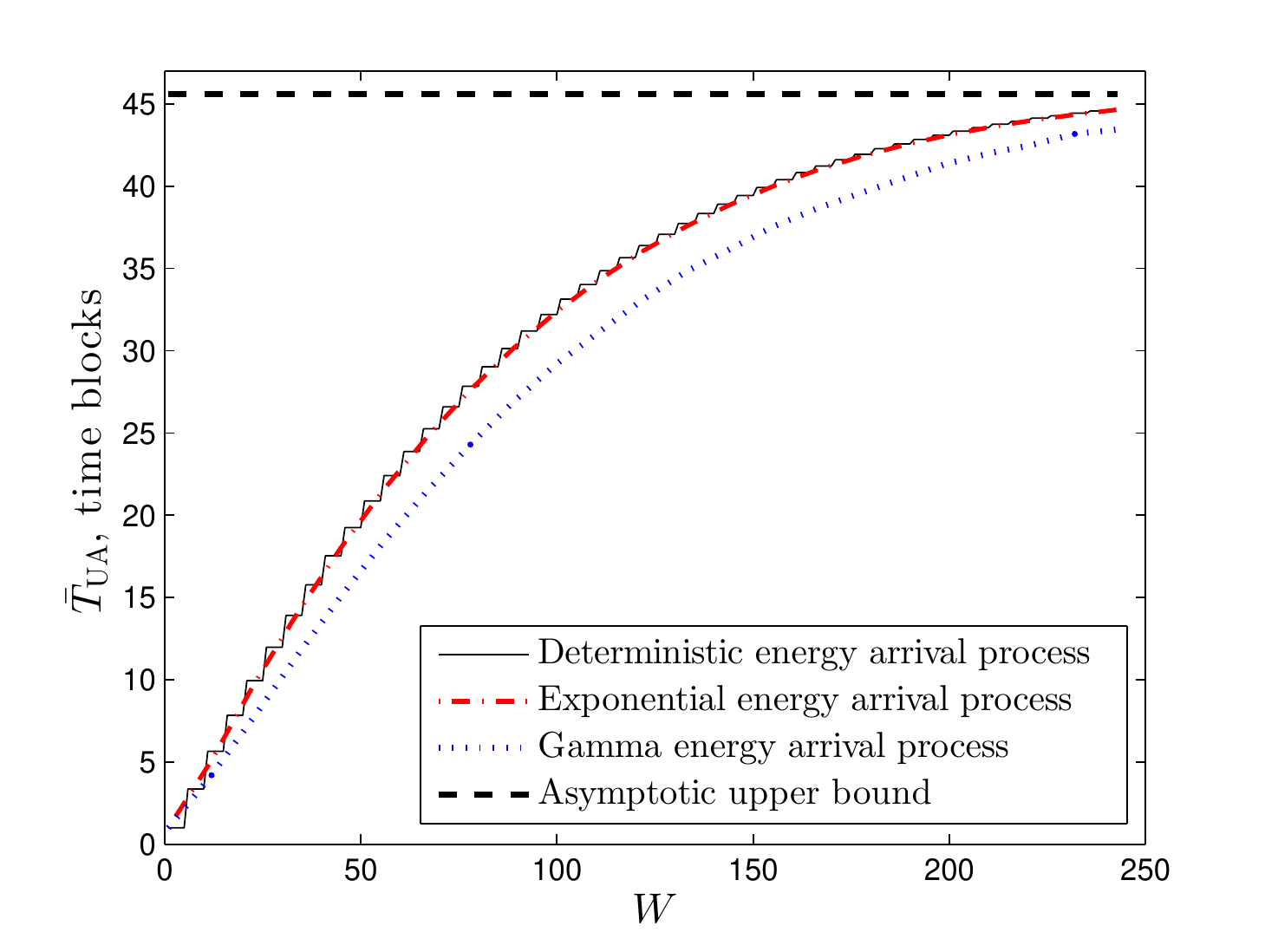}
	\vspace*{-1.3cm}
	\caption{Average update age, $\aveTstd$, versus $W$, with different energy arrival processes.}\label{fig:diff_distribution_STD}
	\endminipage\hfill
	\hspace{-5pt}
	\minipage{0.49\textwidth}
	\includegraphics[width=\linewidth]{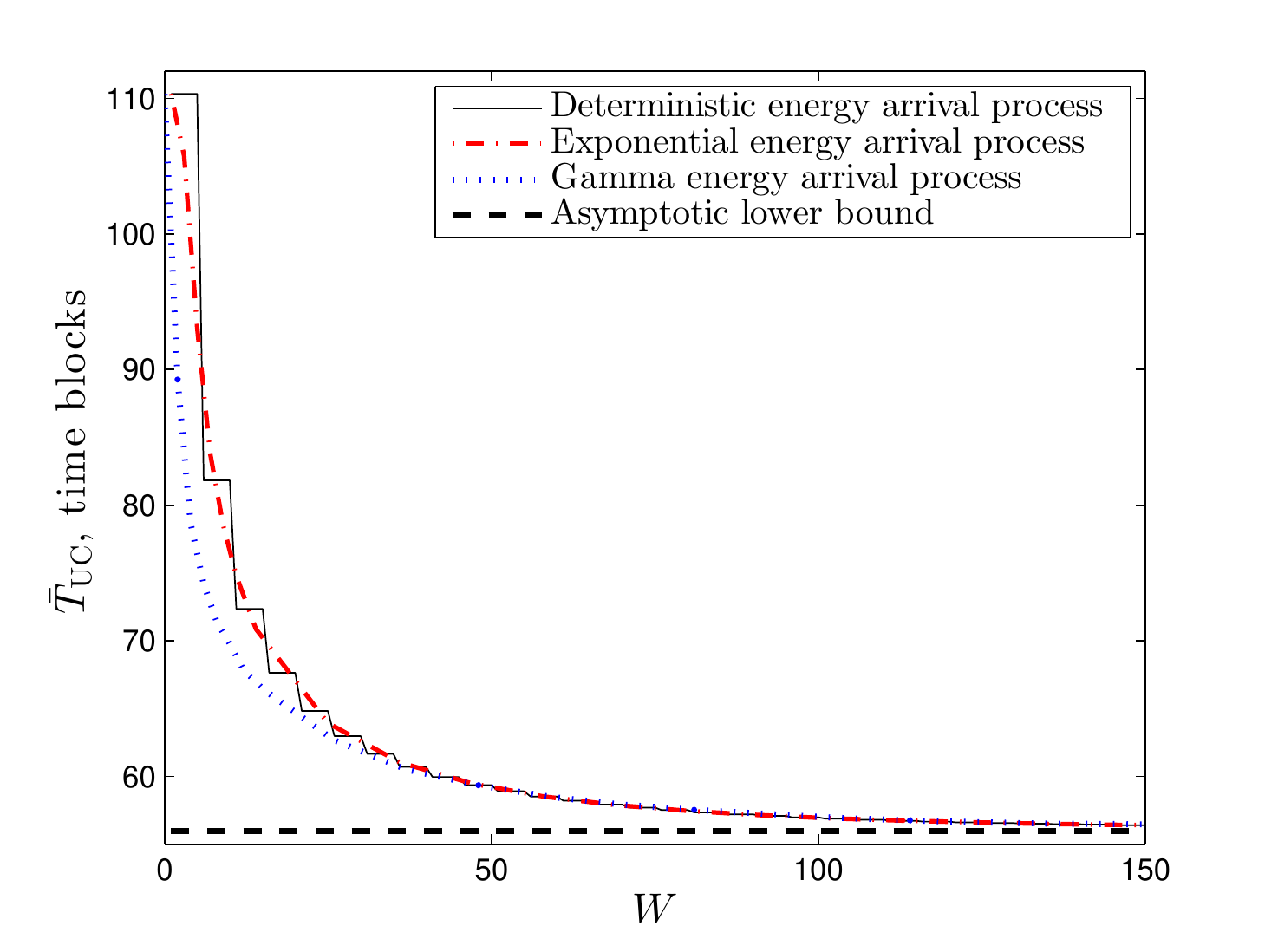}
	\vspace*{-1.3cm}
	\caption{Average update cycle, $\aveTuc$, versus $W$, with different energy arrival processes.}\label{fig:diff_distribution_UC}
	\endminipage\\
	\hrulefill
	\vspace*{-0.7cm}
\end{figure*}

	{\textbf{Average update age and average update cycle with different average harvested power:}
		Fig.~8 shows  the average update age, $\aveTstd$, and the average update cycle, $\aveTuc$, for different average harvested power values, $\rho$, with an exponential energy arrival process. The results are plotted using Corollaries 3 and 7. 
		For the update age, we see that when the average harvested power is very low, i.e., less than $-2$~dBm, the update age is one time block. 
		This is expected since sufficiently low average harvested power cannot enable any retransmission during time window $W-1$, i.e., a packet is either successfully transmitted in the first transmission block right after the sensing block (an update age of one) or dropped due to no chance of retransmission.		
		With an increase of average harvested power, retransmissions are enabled, which makes the update age increases beyond one.
		However, when the average harvested power is higher than $8$~dBm, the average update age monotonically decreases with an increase of the average harvested power. This is as expected: the sensor requires fewer energy harvesting blocks to perform retransmissions,
		and hence, the sink is likely to receive the packet in a more timely manner (i.e., with a smaller update age).
		For the update cycle, we see that the average update cycle monotonically decreases with average harvested power. Again, this is expected since a higher average harvested power enables more transmission blocks within a certain time duration, and hence, more successful block transmissions are likely to occur within a given time duration, i.e., the update cycle decreases.
		Also we see that when the average harvested power is very high, i.e., $\rho \geq 30$~dBm,
		both update age and update cycle converge to constant values which can be obtained by letting $\rho\rightarrow \infty$ in Corollaries~3 and 7, respectively.
		\textit{Thus, without changing the parameters of the communication protocol, the improvement in delay performance is limited when increasing the average harvested power.}}
%
	\begin{figure*}[t]
		\renewcommand{\captionfont}{\small} \renewcommand{\captionlabelfont}{\small}
		\minipage{0.49\textwidth}
		\vspace*{-0.6cm}
		\includegraphics[width=\linewidth]{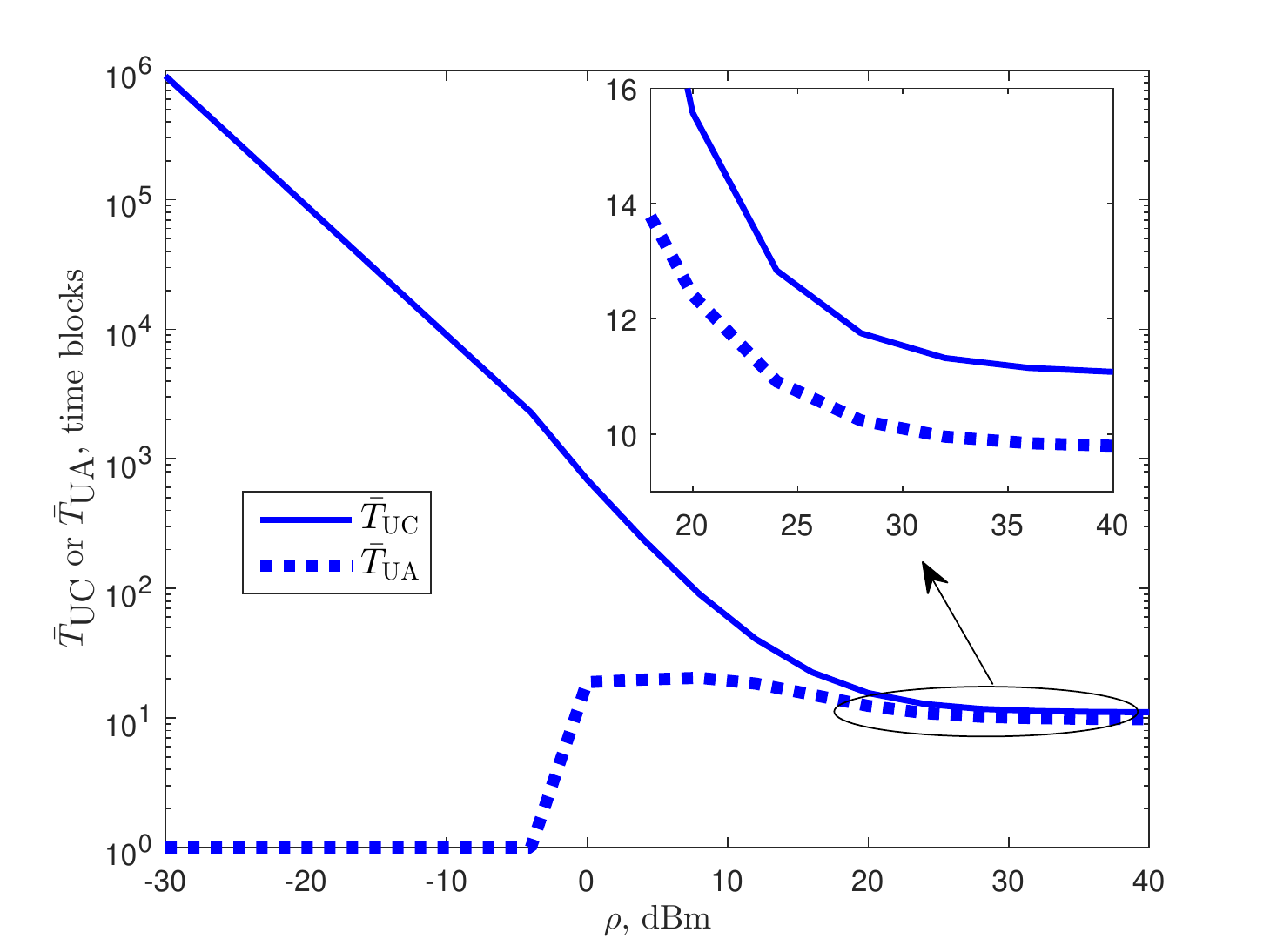}
		\vspace*{-1.3cm}
		\caption{{$\aveTuc$ and $\aveTstd$  versus $\rho$ with exponential energy arrival process.}}\label{fig:diff_EH_rate_STD}
		\endminipage\hfill
		\minipage{0.49\textwidth}
		\vspace*{-0.6cm}
		\includegraphics[width=\linewidth]{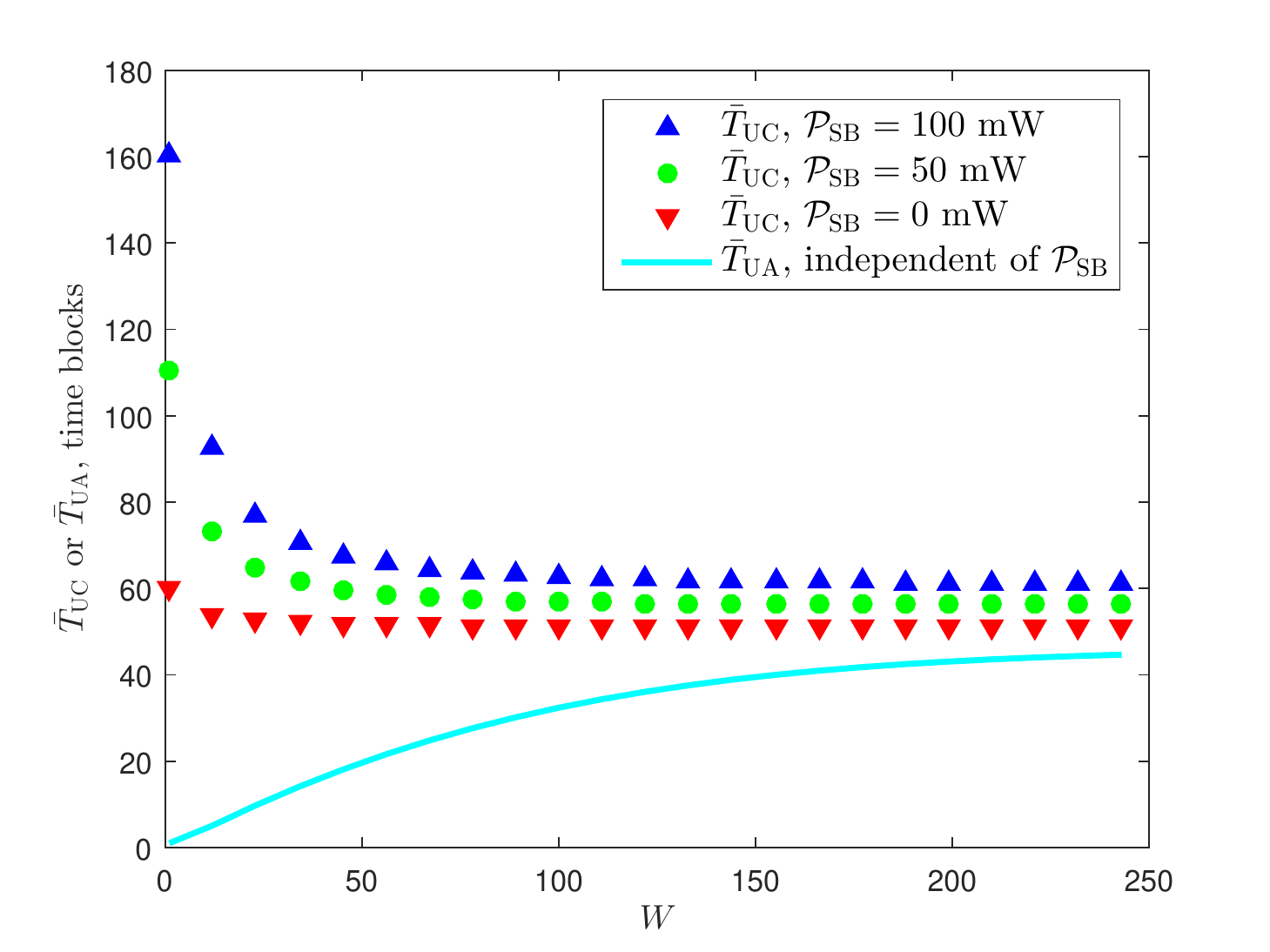}
		\vspace*{-1.3cm}
		\caption{$\aveTuc$ and $\aveTstd$ versus $W$ for different sensing power, $\Psen$.}\label{fig:curve_N}
		\endminipage\\
		\hrulefill
		\vspace*{-0.7cm}
	\end{figure*}	

\textbf{Effect of energy cost of sensing on average update cycle:} We illustrate the effect of energy cost of sensing on average update cycle with exponential energy arrival process as a special case of the random energy arrival process. Fig. \ref{fig:curve_N} shows the average update age, $\aveTstd$,  and the average update cycle, $\aveTuc$, as a function of $W$, with different energy cost of sensing, $\Psen$. The figure shows that the average update age increases as $W$ increases (consistent with Fig.~\ref{fig:diff_distribution_STD}) but it does not change with the energy cost of sensing, {i.e., \textit{the energy cost of sensing has no effect on the update age}}. This is in perfect agreement with our earlier observations and explanations  provided in Remark~\ref{R1}. We can see that for a fixed value of $W$, the average update cycle increases as the sensing power consumption increases from $50$~mW to $100$~mW, {i.e., \textit{the higher the energy cost of sensing the lower update frequency.}} This is in perfect agreement with our earlier observations and explanations provided in Remark~\ref{R2}. To place these results in context with existing studies in the literature that commonly ignore the energy cost of sensing, we also include the result with zero energy cost of sensing. When $\Psen = 0$~mW, we can see that $\aveTuc$ is almost constant around the value of $50$ and does not vary much with $W$.
%

\textbf{Tradeoff between average update age and average update cycle:} Fig.~\ref{fig:region_N} shows the tradeoff between average update age, $\aveTstd$, and average update cycle, $\aveTuc$ with exponential energy arrival process. The different points on the same curve are achieved with different $W$. We can see that when the energy cost of sensing is comparable to or larger than the energy cost of transmission, e.g., $\Psen = 50$~mW and $\Psen = 100$~mW, the reduction in $\aveTstd$ can result in a significant increase in $\aveTuc$, and vice versa. For example, when $\Psen = 100$ mW, decreasing $\aveTstd$ from $15$ to $5$ time blocks, causes the $\aveTuc$ to increase from $75$ to $95$ time blocks. However, when the energy cost for sensing is negligible, e.g., $\Psen = 0$~mW, such a tradeoff is almost barely noticeable. For example, decreasing $\aveTstd$ from $15$ to $5$ time blocks, results in $\aveTuc$ increasing by two time blocks, i.e., a significant change in $\aveTstd$ does not result in a noticeable change in $\aveTuc$. These trends in Fig.~\ref{fig:region_N} are in accordance with our earlier observations in Remark~\ref{R2}.
{\textit{Thus, with the consideration of sensing energy cost, an increase of update frequency is achieved at the expense of update freshness, and vice versa.}}
\begin{figure*}[t]
	\renewcommand{\captionfont}{\small} \renewcommand{\captionlabelfont}{\small}
	\minipage{0.49\textwidth}
	\vspace*{-0.5cm}
	\includegraphics[width=\linewidth]{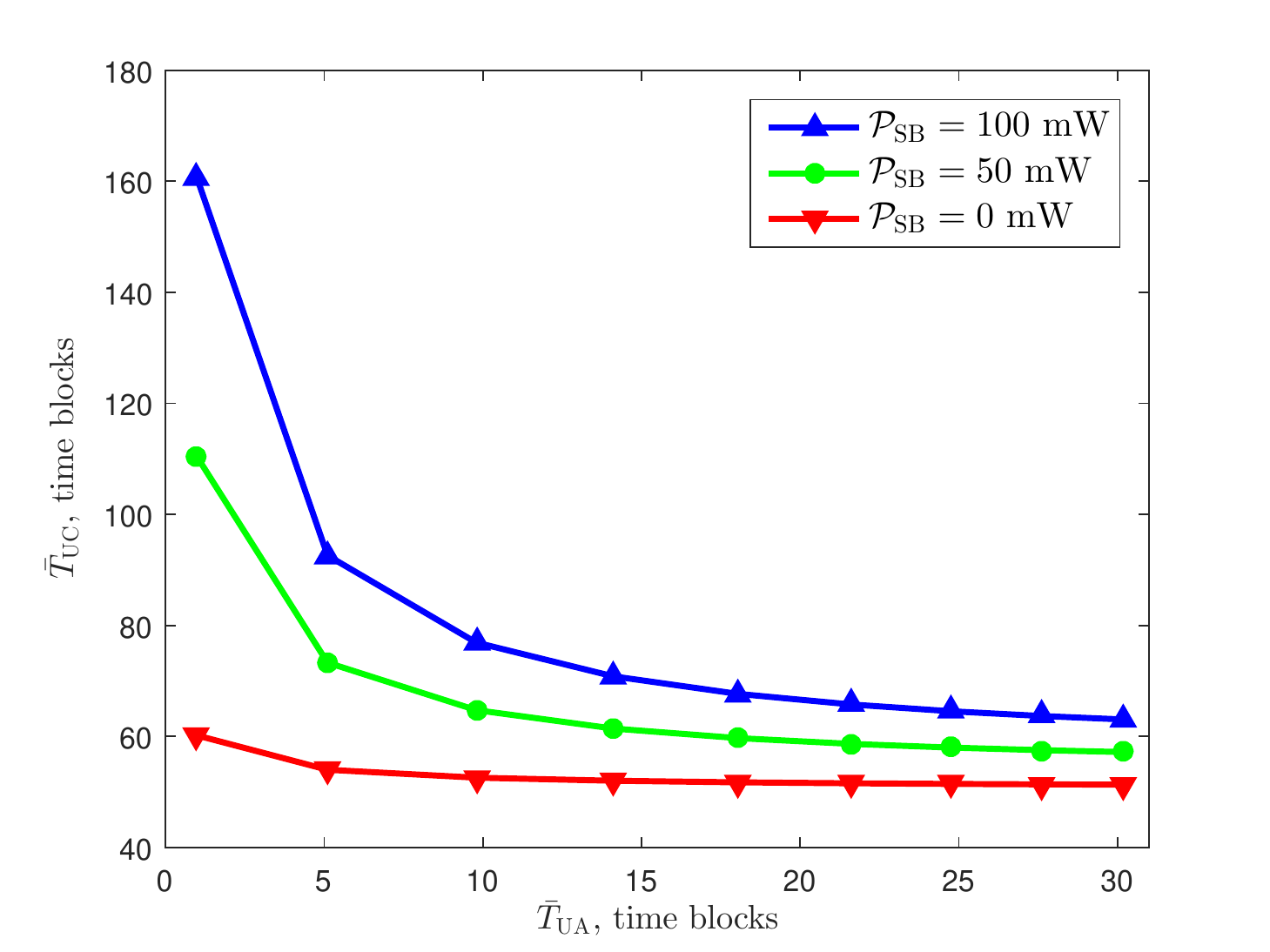}
	\vspace*{-1.2cm}
	\caption{Tradeoff between $\aveTuc$ and $\aveTstd$.}\label{fig:region_N}
	\endminipage\hfill
	\minipage{0.49\textwidth}
	\vspace*{-0.5cm}
	\includegraphics[width=\linewidth]{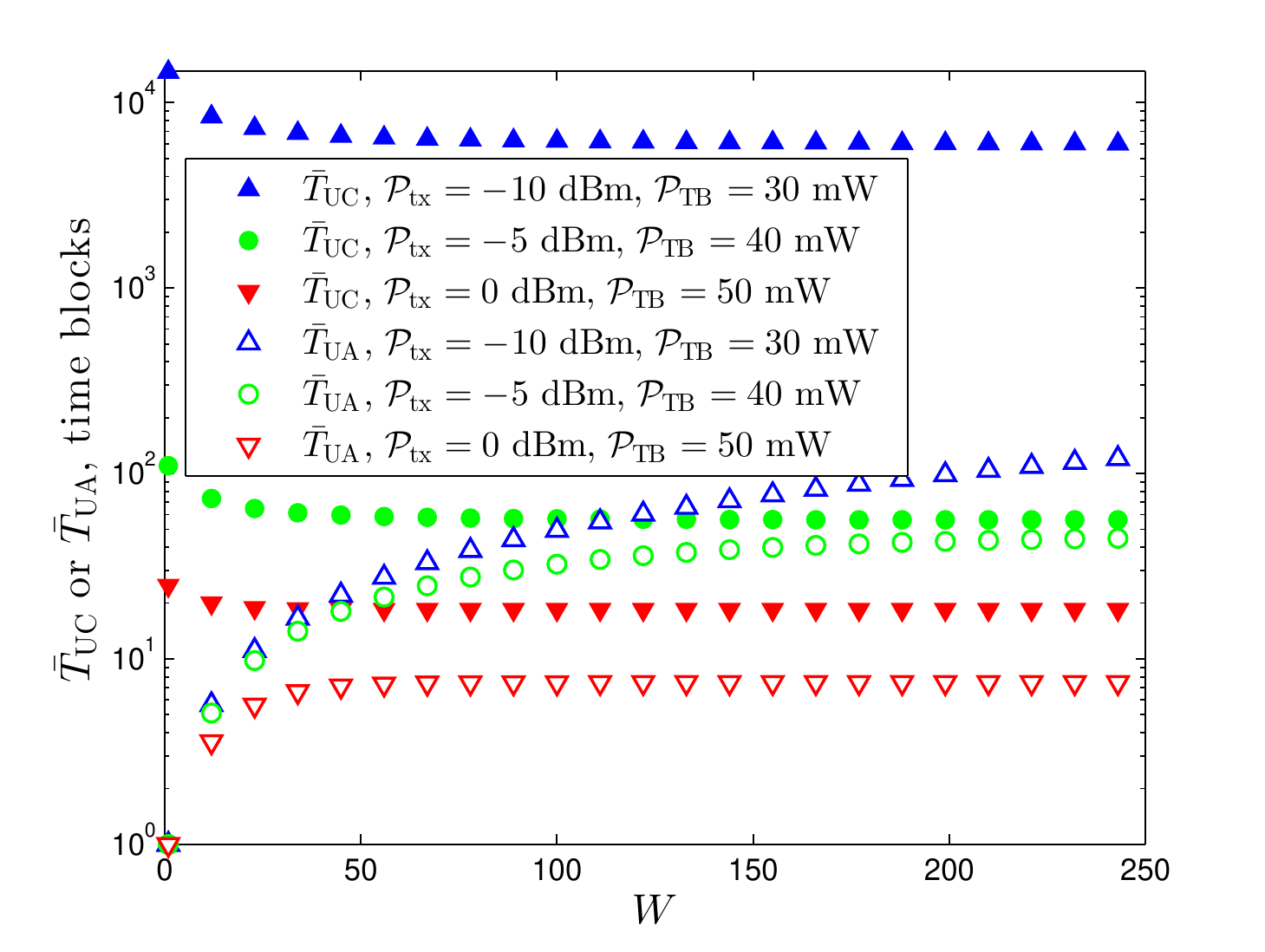}
	\vspace*{-1.2cm}
	\caption{{\!\!$\aveTuc$\! and\! $\aveTstd$\! versus \!$W$ with different $\Prf$\! and\!~$\Ptx$.}}\label{fig:diff_power}
	\endminipage\\
	\hrulefill
	\vspace*{-0.7cm}
\end{figure*}

{\textbf{{Effect of transmit power consumption on average update age and averge update cycle.}}} {Fig. 11} {shows
	the impact of power consumption on $\aveTstd$ and $\aveTuc$, for different values of transmit power $\Ptx$ and RF transmit power $\Prf$, with an exponential energy arrival process. 			 
	In reality, $\Ptx$ and $\Prf$ do not have a linear relationship.
	Three pairs of typical values found in [25] are chosen.
	We see that both $\aveTstd$ and $\aveTuc$ decrease with $\Ptx$ or $\Prf$.
	This is as expected: if the transmit power is small, $\Pout$ is high, resulting in a large number of retransmissions until the sensed information is successfully transmitted or $W-1$ time blocks are reached. 
	As a result, $\aveTstd$ and $\aveTuc$ are large when the transmit power is small.
	\textit{Thus, under these above parameter choices, a higher transmit power results in better delay performance.}}

\section{Conclusions}
This paper has analysed the delay performance of an EH sensor network, focusing on the operation of a single EH sensor and its information transmission to a sink. The energy costs of both sensing and transmission were taken into account. Two metrics were proposed, namely the update age and update cycle. In order to limit the delay due to retransmissions, a time window for retransmissions was imposed. Using both a deterministic and a general random energy arrival model, the exact probability mass functions and the mean values of both metrics were derived. The results showed that the average update age increases while the average update cycle decreases with increasing retransmission window length. The average update age is independent of the energy cost of sensing but the average update cycle increases as the energy cost of sensing increases. In addition, a tradeoff between update age and update cycle was illustrated when the energy cost of sensing is comparable to the energy cost of transmission. 
{Future work can consider the impact of non-deterministic time for receiving the feedback signal at the sensor.}
%
	\renewcommand{\theequation}{\thesection.\arabic{equation}}
	\numberwithin{equation}{section}
%
%
%
%
%
%
%

\section*{Appendix A: Proof of Lemma \ref{lemma_prime}}
\setcounter{equation}{0}
 \renewcommand{\theequation}{A.\arabic{equation}}
We first define the block-wise harvest-then-use process, and then propose and prove Lemma~\ref{lemma_prime}.\\
	{\textbf{Definition A1} (Block-wise harvest-then-use process)\textbf{.} 
	{A harvest-then-use process consists of energy harvesting blocks (EHBs) and energy consumption blocks (ECBs). 
	It starts and keeps on harvesting energy with EHBs. Once the available, i.e., accumulated, energy is no less than a threshold of $Q$ Joules, an ECB occurs, and consumes $Q$ Joules of energy. 
	If this condition for ECB is not satisfied, the process goes back to harvest energy with EHBs.}}
	
{During the harvest-then-use process, the harvested energy in the $i$th EHB is represented by $\xi_i$, $i=1,2,3,...$, and the available energy after the $j$th ECB is represented by $\tilde{\Xi}_j$, $j=1,2,3,...$. 
	Due to the randomness of the energy arrival process, i.e., $\xi_i$ is a random variable, the available energy after each ECB, $\tilde{\Xi}_j$, is also a random variable which only depends on $\xi_i$.
	Furthermore, using the statistics of $\xi_i$, and modeling $\tilde{\Xi}_j$, $j=1,2,3,...$, as a random process, an important feature of the random process is revealed in Lemma A1.}

    \setcounter{lemma}{0}
    \renewcommand{\thelemma}{A\arabic{lemma}}
\begin{lemma} \label{lemma_prime}
	\textnormal{
	For block-wise harvest-then-use process with energy threshold $Q$, where the harvested energy in each EHB, $\xi_i$, $i=1,2,3,...$, are independent and identically distributed, each with pdf {containing at least one} positive right-continuous point, $f(x)$, the available energy after each ECB, $\tilde{\Xi}_j$, $j=1,2,3,...,$, consists of a positive recurrent Harris chain, with unique steady-state distribution which is given by
	\begin{equation} \label{first_gx}
	g \left(x \right) = \frac{1}{\rho} \left(1- F(x)\right),
	\end{equation}
	where $F(x)$ and $\rho$ are respectively, the cdf and the mean of $\xi_i$.		
		}
\end{lemma}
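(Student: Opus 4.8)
The plan is to identify $\{\tilde{\Xi}_j\}$ with the overshoot (forward-recurrence energy) of the cumulative-harvest renewal process over the equally spaced thresholds $jQ$, to verify that the claimed $g$ is the stationary excess density of that renewal process, and finally to promote this to positive Harris recurrence with a unique invariant law using the general-state-space Markov chain theory of Meyn and Tweedie.

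\emph{Setting up the chain.} First I would record the one-step dynamics. Writing $S_n = \sum_{i=1}^{n}\xi_i$ for the cumulative harvested energy (a renewal process with increment law $F$ and finite mean $\rho$) and $N_j = \min\{n : S_n \geq jQ\}$ for the EHB index that triggers the $j$th ECB, the energy left after that ECB is exactly the overshoot $\tilde{\Xi}_j = S_{N_j} - jQ \geq 0$. Since the increments are i.i.d., $\{\tilde{\Xi}_j\}$ is a time-homogeneous Markov chain on $[0,\infty)$ whose transition kernel $K$ is as follows: from $x=\tilde{\Xi}_{j-1}$, if $x \geq Q$ the next state is the deterministic $x-Q$ (a further ECB fires with no intervening EHB), while if $x<Q$ fresh arrivals are accumulated until they first exceed the residual gap $Q-x$, so the next state is the overshoot of a fresh renewal process over the level $Q-x$. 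This kernel is the object to analyze.

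\emph{Invariance of $g$.} I would first confirm $g$ is a genuine density: because $\xi_i \geq 0$, $\int_0^\infty (1-F(x))\,\mathrm{d}x = \myexpect{\xi_1} = \rho$, so $g$ integrates to one. For invariance I would invoke the classical stationarity of the equilibrium renewal process (the version whose first gap has density $g$ and all later gaps density $f$): its overshoot $B(\ell)$ over every level $\ell\geq 0$ has the same density $g$. Picking the two thresholds $\ell$ and $\ell+Q$, both $B(\ell)$ and $B(\ell+Q)$ have law $g$, and by the renewal property at the epoch $N(\ell)$ the conditional law of $B(\ell+Q)$ given $B(\ell)=x$ is precisely $K(x,\cdot)$; hence $\int_0^\infty g(x)\,K(x,\cdot)\,\mathrm{d}x = g(\cdot)$. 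Equivalently, one verifies directly that $g$ solves the stationarity equation using the renewal equation $H=\delta_0 + F\ast H$ for the renewal function $H$, which is the routine calculation I would relegate to the appendix.

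\emph{Recurrence, uniqueness, and the hard part.} This last step is where the hypothesis that $f$ has a positive right-continuous point is essential: at such a point $x_0$ there are $\delta,c>0$ with $f\geq c$ on $[x_0,x_0+\delta)$, so a single EHB injects mass with a density bounded below on a set of positive Lebesgue measure; propagating this through $K$ furnishes a small set, a minorization condition, $\phi$-irreducibility with respect to Lebesgue measure on an interval, and aperiodicity. Since $\rho<\infty$ keeps the overshoot from drifting to infinity, the chain is Harris recurrent, and because we have already exhibited the invariant \emph{probability} density $g$, it is in fact positive Harris recurrent with $g$ as its unique stationary distribution, which is the assertion of the lemma. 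The genuinely delicate point --- and the one I expect to be the main obstacle --- is converting the bare assumption that $f$ has a positive right-continuous point into an honest minorization/small-set that is reachable from every state, since without such an absolutely continuous component the chain could fail to be Harris; once that is in place, positivity and uniqueness follow from the standard ergodic theorem and everything else is bookkeeping.
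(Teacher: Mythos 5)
Your proposal is correct in its overall architecture and matches the paper's strategy at the top level --- exhibit the candidate invariant density, then use the positivity hypothesis on $f$ to get Harris recurrence and hence uniqueness --- but the invariance step takes a genuinely different route. The paper (Appendix A, Step 2) verifies the balance condition $g(x) = g(x+Q) + \int_x^{Q+x}\bigl(\sum_{i\ge 0} g_i(Q+x-y)\bigr)f(y)\,\mathrm{d}y$ by brute force, using the telescoping identity $\sum_{i\ge 0} g_i \equiv 1/\rho$ obtained from $g_i = \frac{1}{\rho}(F\star f^{\star(i-1)} - F\star f^{\star i})$. You instead observe that $\tilde{\Xi}_j = S_{N_j}-jQ$ is exactly the overshoot of the cumulative-harvest renewal process over the level $jQ$, and deduce invariance from the classical fact that the equilibrium renewal process has overshoot density $g=(1-F)/\rho$ over every level; conditioning at level $\ell$ versus $\ell+Q$ then gives $gK=g$ directly. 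These are really the same computation in two guises --- the paper's constant sum $\sum_i g_i = 1/\rho$ is precisely the statement that the equilibrium renewal density is constant --- but your version is more conceptual and makes the normalization $\int_0^\infty(1-F)=\rho$ appear for free. The renewal identification is a nice observation the paper does not make explicit, and it would also open a route to convergence via the key renewal theorem for spread-out distributions.

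On the recurrence step, you correctly locate where the real work lies, but you leave it as a sketch: "propagating this through $K$ furnishes a small set, a minorization condition, $\phi$-irreducibility\dots" is exactly the part that needs an argument, and the paper's Step 1 supplies it concretely. It restricts $f$ to an interval $[D^-,D)$ of length $\tau/2$ on which $f$ is positive and right-continuous, takes $\tilde{i}$ large enough that the $\tilde{i}$-fold convolution of the restricted density is strictly positive on an interval $(s+\tilde{i}D^-,\,s+\tilde{i}D)$ of length exceeding $Q+\tau$, and shows this interval necessarily contains a translate $\tilde{j}Q+(\tau^-,\tau^+)$ of any prescribed target window, so that from any state $s$ the chain hits $(\tau^-,\tau^+)$ within $\tilde{j}$ ECB steps with probability at least $q\times p>0$. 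That is the honest minorization you anticipate needing; your proposal would be complete once you write out this construction (or an equivalent small-set argument in the Meyn--Tweedie framework). Everything else in your write-up --- the transition kernel, the invariance of $g$, and the conclusion that an invariant probability plus Harris recurrence yields positive recurrence and uniqueness --- is sound.
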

\begin{proof}
	The proof consists of two steps.
	In the first step, we prove that the energy state after the $j$th ECB, $\tilde{\Xi}_j$, constitutes a positive recurrent Harris chain (a collection of Markov chains with uncountable state space). Thus, a unique steady-state distribution of $\tilde{\Xi}_j$ exists \cite{Book_probability}.
	In the second step, we prove that \eqref{first_gx} is the unique steady-state distribution.
	\par
	{\bf Step 1:} It is easy to see that the current state, $\tilde{\Xi}_j$ takes its value from a continuous state space and only relies on the previous energy state $\tilde{\Xi}_{j-1}$, thus $\tilde{\Xi}_j$, $j=1,2,3,...$, forms a continuous-state Markov chain.
	Without loss of generality, we assume that $\sup\left\lbrace{\xi_i}\right\rbrace = B$, thus $\sup\left\lbrace \tilde{\Xi}_j \right\rbrace \leq B$ holds in this harvest-then-use process.\footnote{Note that although we assume $B$ is finite, the infinite case can be easily generated from the discussions below, thus is omitted due to space limitation.}
	It is easy to see that the state space of Markov chain $\tilde{\Xi}_j$, $\mathcal{S}$, is a subset of $[0,B)$, and because of the harvest-then-use protocol, any current state which is higher than $Q$, will access the interval $[0,Q)$ in the following steps.
	Thus, we only need to prove that any state $s \in [0,\min\lbrace B,Q\rbrace)$ can hit any arbitrary small interval $\bm \tau = (\tau^-,\tau^+)$ in $\mathcal{S}$ with non-zero probability within finite steps. Actually, in the following, we complete the proof with the assumption that $\mathcal{S}=[0,B)$, which also proves that the state space of Markov chain $\tilde{\Xi}_i$ is exactly $[0,B)$.
	
	In the following, using a constructive method, we show that for Markov chain $\tilde{\Xi}_j$, given any current state $s\in [0,\min\lbrace B,Q\rbrace)$, there is at least a probability, $q\times p$, that any arbitrary small interval $\bm \tau$ will be accessed with $\tilde{j}$ steps, where $p$, $q$, $\tilde{j}$ are defined below which only depends on the state $s$, the interval length $\tau$ and the pdf of the harvested energy in each EHB.

{Since pdf function $f(x)$ has positive right-continuous points on $[0,B)$, there exists at least one interval $[D^-,D)$ that satisfies
	$[D^-,D) \subset [0,B)$, $D-D^- = \tau/2$, and $f(x)$ is positive right-continuous on $[D^-,D)$.
	We assume that $D^- \geq \tau^+$, and the $D^- < \tau^+$ case can be easily generated from the the following discussions, thus is omitted due to space limitation.
	Now we define $p\triangleq\int_{D^-}^{D}f(x) \mathrm{d}x$ as the probability that harvested energy in one EHB lies in the interval $[D^-,D)$.
	Also we define $\tilde{f}(x) =f(x)$ when $x\in [D^-,D)$, otherwise $\tilde{f}(x) =0$,
	and $\tilde{f}_i(x)$ is the $i$-fold convolution of function $\tilde{f}(x)$.
	
	Thus, it is easy to see that $\tilde{f}_i(x)$ is positive and continuous in the interval $(iD^-,iD)$,  and $\int_{a}^{b}\tilde{f}_i(x) \mathrm{d}x$ is the probability that the harvested energy by $i$ EHBs lies in the interval $[a,b)$, while the energy harvested by each of the $i$ EHBs lies in the interval $[D^-,D)$.}
	Thus, letting 
	{$\tilde{i}\triangleq {\lceil 4(Q+\tau)/\tau -1 \rceil}$ and 
	$\tilde{j} \triangleq\lfloor ((\tilde{i}+1)D-\tau^+)/Q \rfloor$,}
	given the current energy state $s$, after $\tilde{i}$ EHBs,
	the \emph{accumulated energy level} lies in the interval $\mathcal{A}\triangleq(s+\tilde{i}D^-,s+\tilde{i}D)$ with positive probability distribution.
	Also we see that interval
	$\Delta \triangleq (\tilde{j}Q+\tau^- -D^-,\tilde{j}Q+\tau^+ -D) \subset \mathcal{A}$,
	thus, there is at least (because we have only considered the scenario that harvested energy by each EHB lies in $[D^-,D)$)
	a probability
	$q\triangleq \inf\lbrace\int_{\tilde{\bm \tau}}\tilde{f}_{\tilde{i}}(x), \text{ interval }\tilde{\bm \tau} \subset \mathcal{S}, \text{length of }\tilde{\bm \tau}=\text{length of }\Delta =\tau/2\rbrace$ that the accumulated energy level lies in $\Delta$.
	Therefore,
	after the next EHB with probability $p$ that the harvested energy lies in $[D^-,D)$, the accumulated energy level lies in the interval $[\tilde{j}Q+\tau^-,\tilde{j}Q+\tau^+)$,
	which means that after the current state $\tilde{\Xi}_j = s$, with $\tilde{j}$ steps (each step consumes the amount of energy, $Q$), there is at least a probability, $q\times p$ to
	make the Markov chain hit the interval $(\tau^-,\tau^+)$.
	Thus, Markov chain $\tilde{\Xi}_j$ is a positive recurrent Harris chain \cite{Book_probability}.

	\par
	{\bf Step 2:} In the aforementioned Markov chain, we still assume that the current state $\tilde{\Xi}_j = s$. Thus, in the previous state, the available energy could be higher than $Q$, i.e., $\tilde{\Xi}_{j-1} = s+Q$,
	and $\tilde{\Xi}_{j-1}$ could also be smaller than $Q$, i.e., based on energy level $\tilde{\Xi}_{j-1}$, there are $i$ EHBs ($i=1,2,3,...$) to make the energy level reach $Q+s$, which makes $\tilde{\Xi}_{j} = s$.
	Based on the above and the Markovian property, the steady-state distribution of the process, $g(x)$, should satisfy the following conditions:
	(1) $
			\int_{0}^{\infty} g(x) =1
			$ and 
	(2) $
			g(x) = g(x+Q) + \sum\limits_{i=1}^{\infty} \int_{x}^{Q+x} g_{i-1}(Q+x-y) f(y) \mathrm{d}y
			$.
	where $g_i(x)$ represent the pdf of energy level after $i$ EHBs following a ECB, which is given by
	\begin{equation} \label{eq:g_i}
	g_i(x) = \left\lbrace
	\begin{aligned}
	&g(x), &i =0,\\
	&\left(g\underbrace{\star f \star f\star ... \star f}_{i \text{ convolutions}}\right)\!\!(x), &i >0.
	\end{aligned}
	\right.
	\end{equation}
	Because $f(x)$ and $g_i(x) \geq 0$ for all $x$ and $i=0,1,2,...$,  by using Tonelli's theorem for sums and integrals \cite{Terry}, we exchange the summation and integral operator in Condition 2, thus we have
	\begin{equation} \label{condition_2_new}
	g(x) = g(x+Q) + \int\limits_{x}^{Q+x} \left(\sum\limits_{i=0}^{\infty} g_i(Q+x-y)\right) f(y) \mathrm{d}y.
	\end{equation}

	{Taking \eqref{first_gx} into \eqref{eq:g_i}, we have 		
	\begin{equation} \label{new_g(x)}
	g_i(x) = \frac{1}{\rho}\left(\left(F \underbrace{\star f \star f ... \star f}_{i-1\text{ convolutions}}\right)(x) - \left(F\underbrace{\star f\star f ... \star f}_{i\text{ convolutions}}\right)(x)\right),\ i >0.
	\end{equation}
	Since $0 \leq F(x) \leq 1 $, $f(x) \geq 0$ and $\int_{0}^{\infty}f(x)=1$, when $i\rightarrow \infty$, we have~\cite{Book_probability}
	\begin{equation} \label{limitation}
	\left(F\underbrace{\star f\star f ... \star f}_{i\text{ convolutions}}\right)(x) \rightarrow 0.
	\end{equation}
	From \eqref{new_g(x)} and \eqref{limitation}, we have		
	\begin{equation} \label{my_summation}
	\begin{aligned}
	&\sum\limits_{i=0}^{\infty} g_i(Q+x-y) 
	= \frac{1}{\rho}\left(1-F(Q+x-y)\right) 
	+ \frac{1}{\rho}\left(F(Q+x-y) - \left(F \star f\right)(Q+x-y)\right)	\\
	&\!+\!\frac{1}{\rho}\left(\left(F \star f\right)(Q\!+\!x\!-\!y)\! -\! \left(F \star f \star f\right)(Q\!+\!x\!-\!y)\right) \!+\! ...
	\!= \frac{1}{\rho} \left(1 \!-\! \lim_{i\rightarrow \infty} \!\left(\!F\underbrace{\star f\star f ... \star f}_{i\text{ convolutions}}\!\right)\!(x) \right) \!=\! \frac{1}{\rho}.
	\end{aligned}	
	\end{equation}
	Taking \eqref{my_summation} and \eqref{first_gx} into the right side of \eqref{condition_2_new}, we have 
	\begin{equation}
	\begin{aligned}
	&g(x+Q) + \int\limits_{x}^{Q+x} \left(\sum\limits_{i=0}^{\infty} g_i(Q+x-y)\right) f(y) \mathrm{d}y=
	\frac{1}{\rho} (1-F(x+Q)) + \int\limits_{x}^{Q+x} \frac{1}{\rho}f(y) \mathrm{d}y\\
	&=\frac{1}{\rho} (1-F(x+Q))  + \frac{1}{\rho} \left(F(Q+x) - (F(x)\right)
	= \frac{1}{\rho} \left(1 - F(x)\right).
	\end{aligned}
	\end{equation}
	Thus, $g(x)$ in \eqref{eq:g_i} satisfies Condition 2. }
	Because of $\int_{0}^{\infty} (1-F(x)) = \myexpect{\xi_i}$ \cite{Book_probability}, Condition 1 is also satisfied, yielding the desired result.
\end{proof}

\section*{Appendix B: Proof of Lemma 1}
\setcounter{equation}{0}
 \renewcommand{\theequation}{B.\arabic{equation}}
For general random energy arrival processes, the proof is based on Lemma \ref{lemma_prime} given in Appendix A.
First, we find an arbitrarily small $Q$ which is a constant such that $\Esen$ and $\Etx$ are integer multiples of it.
Then, from an energy perspective, we equivalently treat the proposed communication protocol with energy harvesting, sensing and transmission as a simple harvest-then-use process with EHBs and ECBs (each consumes energy, $Q$) as discussed in Lemma \ref{lemma_prime}. Thus, the energy level after a TB, can be treated equivalently as that after a corresponding ECB.
Therefore, the steady-state distribution of energy level after each TB is the same as that after each ECB, which is given in Lemma \ref{lemma_prime}, completing the proof.

%

\section*{Appendix C: Event and Random Variable Definitions}
\setcounter{equation}{0}
 \renewcommand{\theequation}{C.\arabic{equation}}
To assist the proofs of the main results,
we use $\UC$ to denote the sequence of time blocks from an arbitrary STB to the next STB.
Also we define two events (according to  \cite{Book_probability}) and several discrete random variables (r.v.s) for convenience:
\begin{enumerate}[1)]
	\item Event $\eventsuc$: Given a SB, its generated information is successfully transmitted to the sink, i.e., STB occurs during the $W$ blocks after the SB.	
	\item Event $\eventfail$: Given a SB, its generated information is not successfully transmitted to the sink, i.e., STB does not occur during the $W$ blocks after the SB.

	\item r.v. $N$, $1 \leq N \leq W$: Given a SB, it is followed by $N$ TBs before the next SB.
	I.e., if $\eventsuc$ occurs, the $N$ TBs includes $N-1$ FTBs and one STB. While if $\eventfail$ occurs, all the $N$ TBs are FTBs.
	\item r.v. $L$, $1 \leq L \leq W$: After a SB, the $L$th block is the last TB before the next SB.
	I.e., if $\eventsuc$ occurs, the $L$th block is a STB, thus $L$ is the update age. While if $\eventfail$ occurs, the $L$th block is the last FTB during the time window for retransmissions, $W$.
	\item r.v. $\tilde{V}$, $\tilde{V} \geq -1$: Given a SB,
	if $\eventsuc$ occurs, $\tilde{V} =-1$, while if
	an $\eventfail$ occurs, $\tilde{V}$ is the number of the required EHBs after the time window for retransmissions, $W$, in order to harvest the amount of energy, $\Etx$.
	Note that, after a $\eventfail$, the amount of energy $\Esen + \Etx$ is required to be reached in order to support the following SB and TB.
	Without loss of generality, here we assume that the energy harvesting process first meets the energy level $\Etx$, and the TB consumes the energy, $\Etx$, (V)irtually.
	From Lemma 1 and its proof, the steady-state distribution of the available energy level after the $\tilde{V}$ EHBs is $g(\epsilon)$.
	\item r.v. $V$, $V \geq 0$. Given a SB and conditioned on a $\eventfail$ occurs, $V$ is the number of the required EHBs after the time window for retransmissions, $W$, in order to harvest the amount of energy, $\Etx$. From the definition of $V$ and $\tilde{V}$, it is easy to see that
	\begin{equation} \label{V_and_V}
	\myprobability{V=v} = \myprobability{\tilde{V}=v \vert \eventfail} = \frac{\myprobability{\tilde{V}=v}}{\myprobability{\eventfail}}
	, v=0,1,2,....
	\end{equation}

	\item r.v. $E(\mathcal{E})$, $E(\mathcal{E}) \geq 0$: Given the distribution of initial energy level, $g(\epsilon)$,
	and the amount of target energy, $\mathcal{E}$, the required number of energy harvesting block is $E(\mathcal{E})$.
	
		For a deterministic energy arrival process, straightforwardly we have
		\begin{equation} \label{det_prob_EH}
		\myprobability{E(\mathcal{E}) =i} = 1, i= \mathcal{E}/\rho.
		\end{equation}
		
		For a general random energy arrival process, from the definition of $E(\mathcal{E})$, Lemma \ref{L1} and its proof, we have
		\begin{equation} \label{def_prob_EH}
		\myprobability{E(\mathcal{E}) = i} = G_{i-1}(\mathcal{E})-G_{i}(\mathcal{E}), \ i= 0,1,2,...
		\end{equation}
		where
		\begin{equation} \label{G_x}
		G_{i}(x) =
		\left\lbrace
		\begin{aligned}
		&1, &i=-1,\\
		&\int_{0}^{x} g_i(u) \mathrm{d}u, &i\geq 0,
		\end{aligned}
		\right.
		\end{equation}	
		and $g_i(x)$ is defined in \eqref{eq:g_i}.

		\noindent For \emph{exponential} energy arrival process,
		we know that the energy accumulation process during EHBs after a TB is a \emph{Poisson} process \cite{Book_probability}, thus, we have
		\begin{equation} \label{prob_EHB_exp}
		\myprobability{E(\mathcal{E}) = i} =G_{i-1}(\mathcal{E})-G_{i}(\mathcal{E})
		= \Pois{i}{\mathcal{E}/\rho}
		= \frac{(\mathcal{E}/\rho)^i e^{-\mathcal{E}/\rho}}{i!}, \ i= 0,1,2,...
		\end{equation}

	\item r.v. $M$, $M \geq 0$: Given a UC, $\eventfail$ occur $M$ times and followed by one $\eventsuc$ in it.
	\end{enumerate}
	
	From the definitions of event, we know that $\eventsuc$ and $\eventfail$ are mutually exclusive events.
	Thus, we have
	\begin{equation} \label{def_Psuc}
	\Psuc \triangleq \myprobability{\eventsuc} \text{ and } \myprobability{\eventfail} = 1-\Psuc,
	\end{equation}
	where $\eventsuc$ and $\eventfail$ depends on transmit outage probability in each TB, and the available energy after the first TB following the SB.
	Because we have assumed that the success of each transmission are independent of one another, and from Lemma 1, the distribution of the available energy after each TB is the same,
	each event $\eventsuc/\eventfail$ is independent with each other during the communication process.
	Therefore, r.v. $M$ follows the geometric distribution
	\begin{equation} \label{M_geo}
	\myprobability{M=m} = \Psuc \left(1-\Psuc\right)^m, \ m=0,1,2,....
	\end{equation}

	\section*{Appendix D: Pmf of Update Age}	
	\setcounter{equation}{0}
	 \renewcommand{\theequation}{D.\arabic{equation}}
	From the definitions in Appendix C,
	the pdf of $\Tstd$ can be calculated as
	\begin{equation} \label{pmf_STD_basic_1}
	\myprobability{\Tstd = k} = \frac{\myprobability{L = k, \eventsuc}}
	{\myprobability{\eventsuc}},\ k=1,2,...,W.
	\end{equation}
	Using the law of total probability and the r.v.s defined in Appendix C, we have
	\begin{equation} \label{pmf_STD_basic_2}
	\begin{aligned}
	\myprobability{L = k, \eventsuc}
	&=\! \sum\limits_{n=1}^{k} \myprobability{L\!=\!k,N\!=\!n,\eventsuc}
	\!=\! \sum\limits_{n=1}^{k} \myprobability{N\!=\!n,E((n-1)\Etx) \!=\! k-n, \eventsuc}
	\\
	&= \sum\limits_{n=1}^{k} \myprobability{N\!=\!n,\eventsuc \vert E((n-1)\Etx)\! =\! k-n} \myprobability{E((n-1)\Etx) \!=\! k-n} \\
	&= \sum\limits_{n=1}^{k} (1-\Pout)\left(\Pout\right)^{n-1} \myprobability{E((n-1)\Etx) = k-n}.
	\end{aligned}
	\end{equation}

	\noindent Again using the law of total probability and using \eqref{pmf_STD_basic_2}, \eqref{def_Psuc} becomes
	\begin{equation} \label{pmf_STD_basic_2_2}
	\begin{aligned}
	\Psuc
	&= \myprobability{\eventsuc} = \sum\limits_{l=1}^{W} \myprobability{L=l, \eventsuc}
	= \myprobability{L=1, \eventsuc} + \sum\limits_{l=2}^{W} \myprobability{L=l, \eventsuc}\\
		&=1-\Pout + \sum\limits_{l=2}^{W} \sum\limits_{n=2}^{l} \myprobability{L=l, N=n, \eventsuc}\\	
		&=1-\Pout + \sum\limits_{l=2}^{W} \sum\limits_{n=2}^{l} \myprobability{E((n-1)\Etx)=l-n, N=n, \eventsuc}\\
		&=1-\Pout + \sum\limits_{l=2}^{W} \sum\limits_{n=2}^{l} \myprobability{N=n, \eventsuc \vert E((n-1)\Etx)=l-n} \myprobability{E((n-1)\Etx)=l-n}\\
	\end{aligned}
	\end{equation}
	\begin{equation*}
	\begin{aligned}	
	&=1-\Pout + \sum\limits_{l=2}^{W} \sum\limits_{n=2}^{l}
	(1-\Pout) \left(\Pout\right)^{n-1} \ \myprobability{E((n-1)\Etx)=l-n}.\\
	\end{aligned}
	\end{equation*}

	By taking \eqref{det_prob_EH}, \eqref{def_prob_EH} and \eqref{prob_EHB_exp} into \eqref{pmf_STD_basic_2} and \eqref{pmf_STD_basic_2_2}, and then substituting \eqref{pmf_STD_basic_2} and \eqref{pmf_STD_basic_2_2} into \eqref{pmf_STD_basic_1},
	the pmfs of $\Tstd$ for deterministic, general random and exponential energy arrival process are given in Theorems 1, 2 and 3, respectively.

\section*{Appendix E: Pmf of Update Cycle}
\setcounter{equation}{0}
 \renewcommand{\theequation}{E.\arabic{equation}}
First, assuming that $\eventfail$ occurs $m$ times during a UC, we define r.v.s $E_0$, $V_i$, $E_i$, $i=1,2,...,m$, and $\tilde{L}$.
$E_0$ is the number of EHBs at the beginning of the UC until the first SB occurs which follows the same pmf with r.v. $E(\Esen +\Etx)$.
$V_i$ is the number of EHBs required to harvest the amount of energy, $\Etx$, outside the time window for retransmissions of the $i$th $\eventfail$.
$E_i$ is the number of EHBs required to harvest the amount of energy, $\Esen$, following $V_i$ EHBs after the $i$th $\eventfail$.
$\tilde{L}$ is the number of blocks after a SB to the last TB before the next SB, and the TB is a STB.
From the r.v. definitions in Appendix C, $E_0$, $V_i$ and $E_i$ follow the same distribution with r.v.s $E(\Esen +\Etx)$, $V$ and $E({\Esen})$, respectively, and
\begin{equation}
\myprobability{\tilde{L}=l} =\myprobability{L=l,\eventsuc},\ l=1,2,...,W.
\end{equation}
From Lemma 1, $E_0$, $V_i$, $E_i$, $i=1,2,...,m$, and $\tilde{L}$ are mutually independent.

Then, the pmf of update cycle can be calculated as
{
\begin{equation} \label{pmf_UC_1}
\begin{aligned}
&\myprobability{\Tuc = k}
= \sum\limits_{m}^{}\myprobability{\Tuc = k, M=m}\\
&= \sum\limits_{m}^{}\myprobability{E_0 \!+\!E_1\!+\!...\!+\!E_m \!+\!\tilde{V}_1\!+\!...\!+\!\tilde{V}_m \!+\!m\times(1\!+\!W) \!+\! \tilde{L} \!+\!1 =k, \tilde{V}_1,\ \tilde{V}_2,\cdots,\ \tilde{V}_m\geq 0}\\
&= \sum\limits_{m=0}^{\hat{m}}\myprobability{E_0 \!+\!E_1\!+\!...\!+\!E_m \!+\!\tilde{V}_1\!+\!...\!+\!\tilde{V}_m \!+\! \tilde{L} =k -m\times(1\!+\!W) -1, \tilde{V}_1,\ \tilde{V}_2,\cdots,\ \tilde{V}_m\!\geq\! 0}\!,
\\
&\hspace{14cm}
k=2,3,...,
\end{aligned}
\end{equation}}
\par
\vspace{-0.4cm}
\noindent where
$
\hat{m}= \left\lfloor \frac{k-2}{W+1} \right\rfloor.
$
For simplicity, we define the following discrete functions:
\begin{equation} \label{def_3_funcs}
\begin{aligned}
&\zeta(\mathcal{E},i) \triangleq \myprobability{E(\mathcal{E}) =i},\ i=0,1,2,...\\
&\iota(l) \triangleq \myprobability{\tilde{L}=l} = \myprobability{L=l, \eventsuc},\ l=1,2,...,W\\
&\vartheta(v) \triangleq \myprobability{\tilde{V}=v}, \ v=0,1,....
\end{aligned}
\end{equation}
where $\zeta(\mathcal{E},i)$ and $\iota(l)$ are obtained directly from \eqref{det_prob_EH}, \eqref{def_prob_EH}, \eqref{prob_EHB_exp} and \eqref{pmf_STD_basic_2}, respectively,
and $\vartheta(v)$ will be derived later.
Therefore, pmf of $\Tuc$ in \eqref{pmf_UC_1} can be calculated~as
\begin{equation}  \label{Tuc_conv_defi}
\myprobability{\Tuc = k} = \sum\limits_{m=0}^{\hat{m}}
\left(
\zeta(\mathcal{\Esen+\Etx})
\underbrace{\ast\zeta(\mathcal{{\Esen}})\ast \cdots \ast\zeta(\mathcal{{\Esen}})}_{m \text{ convolutions}}
\underbrace{\ast\vartheta\ast \cdots \ast\vartheta}_{m \text{ convolutions}} \ast \iota
\right)
\!\!(k-m(1+W)-1).
\end{equation}

Now we derive the expression for $\vartheta(i)$.
From the definitions of r.v. in Appendix C, we have
{
\begin{equation} \label{func_v}
\begin{aligned}
\vartheta(v)
&=\myprobability{\tilde{V}=v} =\myprobability{\eventfail,\tilde{V}=v} = \sum\limits_{n=1}^{W} \myprobability{\eventfail,\tilde{V}=v, N=n}\\
&= \myprobability{\eventfail,\tilde{V}=v, N=1} + \sum\limits_{n=2}^{W}\myprobability{\eventfail,\tilde{V}=v, N=n}\\
&= \myprobability{\eventfail,\tilde{V}=v, N=1} + \sum\limits_{l=2}^{W}\sum\limits_{n=2}^{l}\myprobability{\eventfail,\tilde{V}=v, N=n,L=l}\\
&= \myprobability{\eventfail,N=1, E(\Etx) = W+v-1}\\
&+ \sum\limits_{l=2}^{W}\sum\limits_{n=2}^{l}
\myprobability{\eventfail,N=n, E((n-1)\Etx) = l-n, E(\Etx) = W+v-l}\\
&= \myprobability{\eventfail,N=1 \vert E(\Etx) = W+v-1} \myprobability{ E(\Etx) = W+v-1}\\
&+ \sum\limits_{l=2}^{W}\sum\limits_{n=2}^{l}
\myprobability{\eventfail,N=n \vert E((n-1)\Etx) = l-n, E(\Etx) = W+v-l} \times\\
& \hspace{5cm}\myprobability{E((n-1)\Etx) = l-n, E(\Etx) = W+v-l}\\
&= \Pout \myprobability{ E(\Etx) = W+v-1}\\
&\hspace{1.3cm}+ \sum\limits_{l=2}^{W}\sum\limits_{n=2}^{l} \left(\Pout\right)^n
\myprobability{E((n-1)\Etx) = l-n} \myprobability{E(\Etx) = W+v-l}.
\end{aligned}
\end{equation}}

By taking functions \eqref{func_v}, $\zeta(\mathcal{E},i)$ and $\iota(l)$ in \eqref{def_3_funcs}, into \eqref{Tuc_conv_defi}, and letting \eqref{det_prob_EH} and \eqref{def_prob_EH} substitute $\myprobability{E(\mathcal{E})=i}$, the pmf of $\Tuc$ for deterministic and general random energy arrival process can be calculated, respectively, as given in Theorems 4 and 5.
While for the exponential energy arrival process, by using the sum property of Poisson distribution, we have
\begin{equation}
\myprobability{E(\mathcal{E}_1)_1 +E(\mathcal{E}_2)_2 = i} = \myprobability{E(\mathcal{E}_1+\mathcal{E}_2)= i},
\end{equation}
where $E(\mathcal{E}_1)_1$ and $E(\mathcal{E}_2)_2$ are two independent random variables which have the same distribution with $E(\mathcal{E}_1)$ and $E(\mathcal{E}_2)$ defined in Appendix C, respectively.
Therefore, letting \eqref{prob_EHB_exp} substitute $\myprobability{E(\mathcal{E})=i}$, the pmf of $\Tuc$ for exponential energy arrival process can be further simplified as given in Theorem 6.

\section*{Appendix F: Average Update Cycle}
\setcounter{equation}{0}
 \renewcommand{\theequation}{F.\arabic{equation}}
Based on Appendix E, average update cycle can be calculated as
\begin{equation} \label{ave_Tuc}
\begin{aligned}
\aveTuc
&= \myexpect{\myexpect{\Tuc \vert M}}
=\sum\limits_{m=0}^{\infty} \myprobability{M=m} \myexpect{\Tuc \vert M=m} \\
&=\sum\limits_{m=0}^{\infty} \myprobability{M=m}
\myexpect{E_0\!+\!E_1\!+\! \cdots\!+\!E_m \!+\!V_1\!+\!V_2\!+\! \cdots\!+\!V_m \!+\! m\times (1\!+\!W) \!+\! 1\!+\!\Tstd} \\
&=\sum\limits_{m=0}^{\infty}
\myprobability{M=m}
\left(
\myexpect{E_0} \!+\! \myexpect{E_1} \!+\!\cdots\!+\!\myexpect{E_m} \!+\! m\times \bar{V} \!+\! m\times (W\!+\!1) \!+\!\aveTstd\!+\!1
\right).
\end{aligned}
\end{equation}
From Appendix E, we have
\begin{equation} \label{ave_E}
\myexpect{E_0} = \frac{\Esen +\Etx}{\rho},\ \myexpect{E_i} =\frac{{\Esen}}{\rho}, i=1,2,...,m.
\end{equation}
After taking \eqref{func_v} and \eqref{def_Psuc} into \eqref{V_and_V} and some simplifications, the expectation of $V$ can be calculated~as
\begin{equation} \label{ave_V}
\begin{aligned}
\bar{V} &= \sum\limits_{v=0}^{\infty} v \frac{\vartheta(v)}{1-\Psuc}=\frac{\Pout}{1-\Psuc} \left(\frac{\Etx}{\rho} \!-\! \sum\limits_{i=0}^{W-2}i\myprobability{E(\Etx) =i} \!-\! (W\!-\!1) \!\left(\!\!1\!-\!\!\!\sum\limits_{i=0}^{W-2}\!\myprobability{E(\Etx) \!=\!i}\!\right) \!\right)\\
		&+\frac{1}{1-\Psuc}
		\sum\limits_{l=2}^{W} \sum\limits_{n=2}^{l}
		\left(\Pout\right)^n \myprobability{E((n-1)\Etx)=l-n}\times\\
		&\hspace{2.5cm}\left(\frac{\Etx}{\rho} - \sum\limits_{i=0}^{W-l-1}i\myprobability{E(\Etx) =i} - (W-l) \left(1-\sum\limits_{i=0}^{W-l-1}\myprobability{E(\Etx) =i} \right) \right).
\end{aligned}
\end{equation}

By taking \eqref{ave_E}, \eqref{ave_V} and \eqref{M_geo} into \eqref{ave_Tuc}, and further substituting $\Psuc$ and $\aveTstd$ given in Corollaries 1, 2 and 3, average update cycle for deterministic, general random and exponential energy arrival processes are given in Corollaries 5, 6 and 7, respectively.

\section*{Appendix G: Asymptotic Lower/Upper Bounds}
\setcounter{equation}{0}
 \renewcommand{\theequation}{G.\arabic{equation}}
From Corollaries 1 and 2, it is easy to see that $\aveTstd$ increase with $W$.
While for $\aveTuc$, the monotonicity is not explicitly observed from Corollary 6. Due to space limitations, a sketch of the proof is given:
When $W$ increases, more TBs are allowed, thus more STBs occurs during the communication process, which also means shorter average update cycle.

When $W \rightarrow \infty$, the sensed information in each SB will be successfully transmitted to the sink, i.e., $\eventsuc$ always occurs and $\Psuc \rightarrow 1$.
Thus, UC contains the EHBs to harvest the amount of energy, $\Esen + \Etx$, the SB, and the blocks in $\Tstd$. Based on this explanation,
for the average update age, we have
\begin{equation}
\begin{aligned}
\lim\limits_{W\rightarrow \infty} \aveTstd
&\!=\!
\sum_{n=1}^{\infty}
\myprobability{N\!=\!n} \myexpect{\Tstd \vert N=n}
\!=\!
\sum_{n=1}^{\infty}
(1\!-\!\Pout) \Poutk{n-1} \myexpect{n+E((n-1)\Etx)} \\
&\!=\! \sum_{n=1}^{\infty}
(1-\Pout) \Poutk{n-1} \left(n+ (n-1) \frac{\Etx}{\rho}\right)
=
1+\left(\frac{\Etx}{\rho} +1\right) \frac{\Pout}{1-\Pout}.
\end{aligned}
\end{equation}
\par
\vspace{-0.4cm}
For the average update cycle, we have
\begin{equation}
\begin{aligned}
\lim\limits_{W\rightarrow \infty}
&\aveTuc
=
\sum_{n=1}^{\infty}
\myprobability{N=n} \myexpect{\Tuc \vert N=n}\\
&=
\sum_{n=1}^{\infty}
(1-\Pout) \Poutk{n-1} \myexpect{E(\Esen + \Etx)+1+n+E((n-1)\Etx)} \\
&=
\sum_{n=1}^{\infty}
(1\!-\!\Pout) \Poutk{n-1} \left(n\!+\!1 \!+\!  \frac{\Esen+n\Etx}{\rho}\right)
=
2\!+\!\left(\frac{\Etx}{\rho} \!+\!1\right) \frac{\Pout}{1-\Pout}\!+\! \frac{\Esen +\Etx}{\rho}.
\end{aligned}
\end{equation}


\ifCLASSOPTIONcaptionsoff
\fi


\begin{thebibliography}{10}
\providecommand{\url}[1]{#1}
\csname url@samestyle\endcsname
\providecommand{\newblock}{\relax}
\providecommand{\bibinfo}[2]{#2}
\providecommand{\BIBentrySTDinterwordspacing}{\spaceskip=0pt\relax}
\providecommand{\BIBentryALTinterwordstretchfactor}{4}
\providecommand{\BIBentryALTinterwordspacing}{\spaceskip=\fontdimen2\font plus
\BIBentryALTinterwordstretchfactor\fontdimen3\font minus
  \fontdimen4\font\relax}
\providecommand{\BIBforeignlanguage}[2]{{%
\expandafter\ifx\csname l@#1\endcsname\relax
\typeout{** WARNING: IEEEtran.bst: No hyphenation pattern has been}%
\typeout{** loaded for the language `#1'. Using the pattern for}%
\typeout{** the default language instead.}%
\else
\language=\csname l@#1\endcsname
\fi
#2}}
\providecommand{\BIBdecl}{\relax}
\BIBdecl

\bibitem{EH_Survey}
S.~Sudevalayam and P.~Kulkarni, ``Energy harvesting sensor nodes: Survey and
  implications,'' \emph{IEEE Commun. Surveys Tuts.}, vol.~13, no.~3, pp.
  443--461, Third Quarter 2011.

\bibitem{EH_survey_singapore}
W.~Seah, Z.~A. Eu, and H.~Tan, ``Wireless sensor networks powered by ambient
  energy harvesting ({WSN-HEAP}) - survey and challenges,'' in \emph{Proc.
  Wireless Commun. Veh. Tech. Inf. Theory Aerosp. Electron. Syst. Tech.}, May
  2009, pp. 1--5.

\bibitem{WPT_survey}
L.~Xiao, P.~Wang, D.~Niyato, D.~Kim, and Z.~Han, ``Wireless networks with {RF}
  energy harvesting: A contemporary survey,'' \emph{IEEE Commun. Surveys
  Tuts.}, vol.~17, no.~2, pp. 757--789, Second Quarter 2015.

\bibitem{kaibin_zhou}
K.~Huang and X.~Zhou, ``Cutting last wires for mobile communication by
  microwave power transfer,'' \emph{IEEE Commun. Mag.}, vol.~53, no.~6, pp.
  86--93, Jun. 2015.

\bibitem{krikidis_survey}
I.~{Krikidis}, S.~{Timotheou}, S.~{Nikolaou}, G.~{Zheng}, D.~W.~K. {Ng}, and
  R.~{Schober}, ``{Simultaneous Wireless Information and Power Transfer in
  Modern Communication Systems},'' \emph{IEEE Commun. Mag.}, vol.~52, no.~11,
  pp. 104--110, Nov. 2014.

\bibitem{Industrial}
V.~Gungor and G.~Hancke, ``Industrial wireless sensor networks: Challenges,
  design principles, and technical approaches,'' \emph{IEEE Trans. Ind.
  Electron.}, vol.~56, no.~10, pp. 4258--4265, Oct. 2009.

\bibitem{sensor_app}
S.~Rhee, D.~Seetharam, and S.~Liu, ``Techniques for minimizing power
  consumption in low data-rate wireless sensor networks,'' in \emph{Proc. IEEE
  WCNC}, vol.~3, Mar. 2004, pp. 1727--1731.

\bibitem{Niyato}
D.~Niyato, E.~Hossain, and A.~Fallahi, ``Sleep and wakeup strategies in
  solar-powered wireless sensor/mesh networks: Performance analysis and
  optimization,'' \emph{IEEE Trans. Mobile Comput.}, vol.~6, no.~2, pp.
  221--236, Feb. 2007.

\bibitem{Jing_lei}
J.~Lei, R.~Yates, and L.~Greenstein, ``A generic model for optimizing
  single-hop transmission policy of replenishable sensors,'' \emph{IEEE Trans.
  Wireless Commun.}, vol.~8, no.~2, pp. 547--551, Feb. 2009.

\bibitem{SensingPower}
V.~Raghunathan, S.~Ganeriwal, and M.~Srivastava, ``Emerging techniques for long
  lived wireless sensor networks,'' \emph{IEEE Commun. Mag.}, vol.~44, no.~4,
  pp. 108--114, Apr. 2006.

\bibitem{Mao}
S.~Mao, M.~H. Cheung, and V.~Wong, ``Joint energy allocation for sensing and
  transmission in rechargeable wireless sensor networks,'' \emph{IEEE Trans.
  Veh. Technol.}, vol.~63, no.~6, pp. 2862--2875, Jul. 2014.

\bibitem{Yates}
H.~Mahdavi-Doost and R.~Yates, ``Energy harvesting receivers: Finite battery
  capacity,'' in \emph{Proc. IEEE ISIT}, Jul. 2013, pp. 1799--1803.

\bibitem{Tianqing}
T.~Wu and H.-C. Yang, ``On the performance of overlaid wireless sensor
  transmission with {RF} energy harvesting,'' \emph{{IEEE} J. Sel. Areas
  Commun.}, vol.~33, no.~8, pp. 1693--1705, Aug. 2015.

\bibitem{Ali}
A.~A. Nasir, X.~Zhou, S.~Durrani, and R.~A. Kennedy, ``Wireless-powered relays
  in cooperative communications: Time-switching relaying protocols and
  throughput analysis,'' \emph{{IEEE} Trans. Commun.}, vol.~63, no.~5, pp.
  1607--1622, May 2015.

\bibitem{Ding}
Z.~Ding, S.~Perlaza, I.~Esnaola, and H.~V. Poor, ``Power allocation strategies
  in energy harvesting wireless cooperative networks,'' \emph{{IEEE} Trans.
  Wireless Commun.}, vol.~13, no.~2, pp. 846--860, Feb. 2014.

\bibitem{Luo}
S.~Luo, R.~Zhang, and T.~J. Lim, ``Optimal save-then-transmit protocol for
  energy harvesting wireless transmitters,'' \emph{IEEE Trans. Wireless
  Commun.}, vol.~12, no.~3, pp. 1196--1207, Mar. 2013.

\bibitem{Yangjing_delay}
J.~Yang and S.~Ulukus, ``Transmission completion time minimization in an energy
  harvesting system,'' in \emph{Proc. IEEE CISS}, Mar. 2010, pp. 1--6.

\bibitem{R1}
S.~Kaul, R.~Yates, and M.~Gruteser, ``Real-time status: How often should one
  update?'' in \emph{Proc. IEEE INFOCOM}, Mar. 2012, pp. 2731--2735.

\bibitem{Newage}
M.~Costa, M.~Codreanu, and A.~Ephremides, ``Age of information with packet
  management,'' in \emph{Proc. ISIT}, Jun. 2014, pp. 1583--1587.

\bibitem{Tan}
P.~Lee, Z.~A. Eu, M.~Han, and H.~Tan, ``Empirical modeling of a solar-powered
  energy harvesting wireless sensor node for time-slotted operation,'' in
  \emph{Proc. IEEE WCNC}, Mar. 2011, pp. 179--184.

\bibitem{retransmission_survey}
C.~Wang, K.~Sohraby, B.~Li, M.~Daneshmand, and Y.~Hu, ``A survey of transport
  protocols for wireless sensor networks,'' \emph{IEEE Netw.}, vol.~20, no.~3,
  pp. 34--40, May 2006.

\bibitem{eu2011design}
Z.~A. Eu, H.-P. Tan, and W.~K. Seah, ``Design and performance analysis of {MAC}
  schemes for wireless sensor networks powered by ambient energy harvesting,''
  \emph{Ad Hoc Networks}, vol.~9, no.~3, pp. 300--323, 2011.

\bibitem{chuan_huang_relay}
C.~Huang, R.~Zhang, and S.~Cui, ``Throughput maximization for the {Gaussian}
  relay channel with energy harvesting constraints,'' \emph{IEEE J. Sel. Areas
  Commun.}, vol.~31, no.~8, pp. 1469--1479, Aug. 2013.

\bibitem{Morsi}
R.~Morsi, D.~Michalopoulos, and R.~Schober, ``On-off transmission policy for
  wireless powered communication with energy storage,'' in \emph{Proc. Asilomar
  Conference on Signals, Systems and Computers}, Nov. 2014, pp. 1676--1682.

\bibitem{Yishun}
Y.~Dong, F.~Farnia, and A.~Ozgur, ``Near optimal energy control and approximate
  capacity of energy harvesting communication,'' \emph{{IEEE} J. Sel. Areas
  Commun.}, vol.~33, no.~3, pp. 540--557, Mar. 2015.

\bibitem{Ho_markov}
C.~K. Ho and R.~Zhang, ``Optimal energy allocation for wireless communications
  with energy harvesting constraints,'' \emph{IEEE Trans. Signal Process},
  vol.~60, no.~9, pp. 4808--4818, Sep. 2012.

\bibitem{Micaz}
\BIBentryALTinterwordspacing
\emph{MICAz}, Crossbow Technology. [Online]. Available:
  \url{http://www.openautomation.net/uploadsproductos/micaz-datasheet.pdf}
\BIBentrySTDinterwordspacing

\bibitem{Fan_Zhang}
F.~Zhang and V.~Lau, ``Closed-form delay-optimal power control for energy
  harvesting wireless system with finite energy storage,'' \emph{IEEE Trans.
  Signal Process}, vol.~62, no.~21, pp. 5706--5715, Nov. 2014.

\bibitem{Goldsmith}
A.~J. Goldsmith, \emph{Wireless Communications}.\hskip 1em plus 0.5em minus
  0.4em\relax New York: Cambridge University Press, 2005.

\bibitem{Book_probability}
R.~Durrett, \emph{Probability: Theory and Examples}.\hskip 1em plus 0.5em minus
  0.4em\relax Cambridge University Press, 2010.

\bibitem{Terry}
T.~Tao, \emph{An Introduction to Measure Theory}.\hskip 1em plus 0.5em minus
  0.4em\relax American Mathematical Soc., 2011.

\end{thebibliography}

\end{document}